\newcommand{\pic}[2]{\includegraphics[width = #1\textwidth]{#2}}
\def\bm#1{\mbox{\boldmath$#1$}}
\DeclareMathOperator*{\argmin}{arg\,min}
\newcommand{\bb}{{\bf b}}
\newcommand{\mb}{{\bf m}}
\newcommand{\fb}{{\bf f}}
\newcommand{\gb}{{\bf g}}
\newcommand{\ub}{{\bf u}}
\newcommand{\vb}{{\bf v}}
\newcommand{\wb}{{\bf w}}
\newcommand{\xb}{{\bf x}}
\newcommand{\yb}{{\bf y}}
\newcommand{\zb}{{\bf z}}
\newcommand{\Ab}{{\bf A}}
\newcommand{\Bb}{{\bf B}}
\newcommand{\Cb}{{\bf C}}
\newcommand{\Db}{{\bf D}}
\newcommand{\Fb}{{\bf F}}
\newcommand{\Gb}{{\bf G}}
\newcommand{\Ib}{{\bf I}}
\newcommand{\Jb}{{\bf J}}
\newcommand{\Mb}{{\bf M}}
\newcommand{\Kb}{{\bf K}}
\newcommand{\Hb}{{\bf H}}
\newcommand{\Pb}{{\bf P}}
\newcommand{\Qb}{{\bf Q}}
\newcommand{\Sb}{{\bf S}}
\newcommand{\Tb}{{\bf T}}
\newcommand{\Ub}{{\bf U}}
\newcommand{\Yb}{{\bf Y}}
\newcommand{\Xb}{{\bf X}}
\newcommand{\zerob}{{\bf 0}}
\newcommand{\etab}{\bm{\eta}}
\newcommand{\xib}{\bm{\xi}}
\newcommand{\zetab}{\bm{\zeta}}
\newcommand{\Phib}{\bm{\Phi}}
\newcommand{\Psib}{\bm{\Psi}}
\newcommand{\Pib}{\bm{\Pi}}
\newcommand{\Sigmab}{\bm{\Sigma}}
\newcommand{\thetab}{\bm{\theta}}
\newcommand{\N}{\mathcal{N}}
\newcommand{\R}{\mathbb{R}}
\newcommand{\param}{\ub}
\newcommand{\data}{\yb}
\newcommand{\predata}{\boldsymbol\eta}
\newcommand{\obscov}{\Sigmab}
\newcommand{\prmean}{\mb}
\newcommand{\prcovh}{{\Cb}_{\gamma}}
\newcommand{\prprech}{\Pb_\gamma}
\newcommand{\forward}{\Fb}
\newcommand{\like}{\mathcal{L}}
\newtheorem{assume}[theorem]{Assumption}
\title{Optimization-Based MCMC Methods for Nonlinear Hierarchical Statistical Inverse Problems}
\author{
  Johnathan M.~Bardsley%
  \thanks{Department of Mathematical Sciences, University of Montana, Missoula, MT 59812 USA (\email{bardsleyj@mso.umt.edu}).}
  \and
  Tiangang Cui%
  \thanks{School of Mathematics, Monash University, Victoria 3800, Australia (\email{tiangang.cui@monash.edu}).}
}
\begin{document}

\setlength{\abovedisplayskip}{5pt}
\setlength{\belowdisplayskip}{5pt}
\setlength{\belowcaptionskip}{0pt}
\setlength{\textfloatsep}{6pt}
\setlength{\intextsep}{6pt}

\maketitle

\begin{abstract}
In many hierarchical inverse problems, not only do we want to estimate high- or infinite-dimensional model parameters in the parameter-to-observable maps, but we also have to estimate hyperparameters that represent critical assumptions in the statistical and mathematical modeling processes. As a joint effect of high-dimensionality, nonlinear dependence, and non-concave structures in the joint posterior posterior distribution over model parameters and hyperparameters, solving inverse problems in the hierarchical Bayesian setting poses a significant computational challenge. In this work, we aim to develop scalable optimization-based Markov chain Monte Carlo (MCMC) methods for solving hierarchical Bayesian inverse problems with nonlinear parameter-to-observable maps and a broader class of hyperparameters.
Our algorithmic development is based on the recently developed scalable randomize-then-optimize (RTO) method \cite{wang2019scalable} for exploring the high- or infinite-dimensional model parameter space.
By using RTO either as a proposal distribution in a Metropolis-within-Gibbs update or as a biasing distribution in the pseudo-marginal MCMC \cite{andrieu2009pseudo}, we are able to design efficient sampling tools for hierarchical Bayesian inversion. 
In particular, the integration of RTO and the pseudo-marginal MCMC has sampling performance robust to model parameter dimensions. 
We also extend our methods to nonlinear inverse problems with Poisson-distributed measurements. 
Numerical examples in PDE-constrained inverse problems and positron emission tomography (PET) are used to demonstrate the performance of our methods. 
\end{abstract}

\begin{keywords}
inverse problems, hierarchical Bayes, Markov chain Monte Carlo, pseudo-marginalisation, Poisson likelihood, positron emission tomography
\end{keywords}

\begin{AMS}
15A29, 65F22, 65C05, 65C60, 94A08
\end{AMS}

\section{Introduction}
\label{sec:Intro}

At the heart of many mathematical modelling problems, there often lies an inverse problem that aims to estimate unknown parameters of mathematical models from noisy and indirect observations.
Due to smoothing properties of the parameter-to-observable map and incompleteness of data, such inverse problems are often ill-posed: there may exist many feasible parameters that are consistent with the observed data, and small perturbations in the data may lead to large perturbations in unregularized parameter estimates.
To remove the ill-posedness, the Bayesian approach \cite{IP:KaiSo_2005, IP:Tarantola_2004, IP:Stuart_2010} casts the solution of inverse problems as the posterior probability distribution of the model parameters conditioned on the data.

In a typical Bayesian inverse problem, unknown model parameters are often represented as functions, and thus yield high-dimensional discretized representations. This way, exploring the high-dimensional posterior distribution is in general a computationally challenging task.
Recently, many efficient methods have been developed to tackle this challenge; for example, (preconditioned) Crank-Nicolson (pCN) methods \cite{MCMC:BRSV_2008, MCMC:CRSW_2013} that establish the foundation for designing and analysing MCMC algorithms in a function space setting, stochastic Newton methods \cite{MCMC:MWBG_2012,MCMC:Petra_etal_2014} that utilise Hessian information to accelerate the convergence, operator-weighted methods \cite{MCMC:CLM_2016,MCMC:Law_2014,MCMC:DS_2018} that generalise PCN methods using (potentially location-dependent) operators to adapt to the geometry of the posterior, as well as optimization-based sampling methods \cite{wang2019scalable,Sto:BSHL_2014,Sto:MTAC_2012,oliver2017metropolized,wang2018randomized} that convert scalable optimization algorithms into MCMC samplers.

In addition to the high-dimensional model parameters, we often need to introduce {\em hyperparameters} to describe various modelling assumptions in an inverse problem.
Such hyperparameters can be used to characterise the statistical model of the observational noise, e.g., its variance, and to describe the variation and correlation structure of the prior distribution of model parameters.
See \cite{agapiou2014analysis,bardsley2010hierarchical,calvetti2008hypermodels,calvetti2009conditionally,dunlop2017hierarchical,fox2016fast,lucka2012fast,wang2004hierarchical} and references therein for further details.
In many physical applications, we may also use hyperparameters to parametrize assumptions in the parameter-to-observable map, for instance, the relative permeability curves in subsurface modeling (e.g., \cite{IP:CFO_2011,cui2019posteriori}) and the intensity of radiation sources in the positron emission tomography (see Section \ref{sec:PET} for details).
In this setting, we need to characterize the joint posterior distribution of model parameters and hyperparameters conditioned on the observed data, which is often referred to as the {\em hierarchical Bayesian inference}.
Here we aim to design optimization-based sampling methods that can explore the joint posterior distribution for nonlinear inverse problems and can handle a broader class of hyperparameters.

Many of the existing efficient posterior exploration methods focus on accelerating the posterior sampling for a fixed set of hyperparameters.
Since the hyperparameters and model parameters often have complicated and non-concave interactions, significant extensions to the existing works are needed to obtain efficient samplers to explore the joint posterior distribution.
For linear inverse problems, \cite{bardsley2012mcmc} investigated the use of Gibbs sampling schemes that alternatively update the model parameters and hyperparameters, \cite{agapiou2014analysis} analyzed the dimension scalability (w.r.t. the model parameters) of several Gibbs sampling schemes, \cite{dunlop2019hyperparameter} analyzed the consistency of the hyperparameter estimation, \cite{fox2016fast,saibaba2019} investigated the use of the one-block-update of \cite{rue2005gaussian} and marginalization over model parameters to accelerate the sampling.
The success of these developments commonly relies on two facts: there exists an analytic expression for the marginal posterior over the hyperparameters and one can the directly sample the conditional posterior over the model parameters for given hyperparameters.
However, these are no longer the case for nonlinear inverse problems.

In this work, we will present several new MCMC methods for sampling the joint posterior distribution for nonlinear inverse problems.
We will also consider broader classes of likelihood functions and prior distributions. This includes Poisson observation processes that arise in PET imaging and in the estimation of unknown correlation structures in the prior distribution.
Our algorithmic development is based on non-trivial extensions of the {\em randomize-then-optimize} (RTO) method \cite{Sto:BSHL_2014}.
As detailed in Section \ref{sec:RTO}, we will first present an efficient implementation of RTO that takes advantage of intrinsic low rank structures of inverse problems, and then discuss several theoretical properties and generalizations to make RTO suitable for the hierarchical setting and the Poisson likelihood.
Then, we integrate RTO into the Metropolis-within-Gibbs method to present computationally efficient strategies to alternatively update model parameters and hyperparameters.
The resulting RTO-within-Gibbs sampler shares similar dimension scalability properties of the centred Gibbs scheme of \cite {roberts2001inference,yu2011center} in the linear setting \cite {agapiou2014analysis}, and thus can deteriorate with the model parameter dimension.
To overcome this difficulty, we will also combine RTO with the pseudo-marginal (PM) principle \cite {andrieu2009pseudo}  to design MCMC methods that are robust with parameter dimensions.

This paper is organized as follows. In Section \ref{sec:back}, we discuss hierarchical Bayesian inverse problems. In Section \ref{sec:RTO}, we present an efficient implementation of RTO and its generalizations. In Sections \ref{sec:Gibbs_algorithms} and \ref{sec:Pseudo_algorithms}, we present MCMC algorithms for sampling the joint posterior distributions . We present numerical experiments in Sections \ref{sec:elliptic} and \ref{sec:PET}, and end with discussions in Section \ref{sec:Conclusions}.

\section{Hierarchical Bayesian inverse problems}
\label{sec:back}
Here we will define the prior distributions, likelihood functions, and various forms and elements of the posterior distributions in our hierarchical inverse problems.
Throughout this paper, given a positive definite matrix $\Ab$, we denote the matrix weighted inner product by $\langle \ub , \vb  \rangle_{\Ab} = \langle \ub , \Ab\vb  \rangle$ and let $\Vert \ub \Vert_{\Ab} = \sqrt{\langle \ub , \ub  \rangle_{\Ab}}$ be the induced norm.

\subsection{Prior modelling}
\label{sec:prior}
In an inverse problem, we seek to infer the unknown parameters of a mathematical model from observed data that correspond to the observable model outputs.
The unknown parameter $u(s), s \in \Omega$ is some heterogeneous function belong to a separable Hilbert space $\mathcal{H}(\Omega)$ for a given domain $\Omega$.
We begin by introducing the Gaussian process prior $\mu_0 = \N(m, \delta^{-1}\mathcal{C}_\gamma)$, where $m$ is the mean function and $\delta^{-1}\mathcal{C}_\gamma$ is the covariance operator, to represent the {\it a priori} information about the parameter. Here $\delta\in \R_{>0}$ is the precision parameter that controls the variance of the Gaussian process.
We parametrize the covariance operator by the hyperparameter $\gamma$ to account for possible changes in the correlation structure.
For a given $\gamma$, the covariance $\mathcal{C}_\gamma$ should be a symmetric, positive, and trace-class operator such that $\mu_0(\mathcal{H}) = 1$.
This way, we can represent the parameter $u(s)$ and prior covariance using a discretized grid, and the prior may yield an infinite dimensional limit under grid refinement (see \cite{IP:BGMS_2013,IP:Stuart_2010}).

Suppose the parameter function $u(s)$ is evaluated on set of $n$ nodes $s_1, \ldots, s_n$ in the discretized representation. We need to operate with the discretized covariance operator and its factorizations to compute the prior density and to generate realizations from the prior.
Discretizing the covariance operator yields a covariance matrix $\prcovh$, in which each element of $\prcovh$ can be defined by a covariance function $\rho : \Omega \times \Omega \mapsto \R_{\geq0}$. For example, the widely used Mat\'{e}rn covariance function takes the form
\begin{gather}\label{eq:matern}
\rho_\nu(s_1, s_2; \gamma) =  \frac{2^{1-\nu}}{\Gamma(\nu)} \big( \gamma \, \sqrt{2\nu} \, |s_1- s_2| \big)^{\nu} \mathcal{K}_{\nu}\big(\gamma\,\sqrt{2\nu}\,|s_1- s_2|\big),
\end{gather}
where $\Gamma(\cdot)$ is the Gamma function, $\mathcal{K}_{\nu}(\cdot)$ is the modified Bessel function of the second kind, $\nu \geq \frac12$ defines the smoothness of the random process, and $\gamma$ defines the correlation length.
For $\nu = \frac12$, the Mat\'{e}rn covariance function can be simplified to the exponential covariance function
\begin{gather*}
\rho(s_1, s_2; \gamma) =  \exp\big( - \gamma\,|s_1- s_2|\big).
\end{gather*}
Note that the spatial correlation decreases with increasing $\gamma$.

The covariance matrix $\prcovh$ can be dense, and thus it can be computationally costly to directly compute its matrix vector product (which costs $\mathcal{O}(n^2)$ operations) and factorisations (which costs $\mathcal{O}(n^3)$ operations).
Many computationally efficient ways have been proposed to handle operations with the covariance matrix and its factorisations by utilizing specific structures of the covariance matrix.
For example, Karhunen-Lo\'{e}ve expansion \cite {SuMo:MarNa_2009,schwab2006karhunen} is a widely used approach that constructs a reduced approximate representation of the prior covariance via the truncated eigendecomposition of the covariance function.
For problems with a stationary covariance function and discretized on a regular grid, the circulant embedding method \cite{chan1997algorithm,graham2018analysis,wood1994simulation} employs fast Fourier transform methods to operate with the covariance matrix and its factorisations in the frequency domain in $\mathcal{O}(n\,\log(n))$ operations.
Recent investigations \cite{feischl2018fast,harbrecht2015efficient,khoromskij2009application} employ the hierarchical matrix method to approximate the covariance matrix and its factorisations, which cost $\mathcal{O}(n\,\log(n))$ operations and can be generalised to non-stationary covariance functions and general node sets.

In this work, we specify the Gaussian process prior using a Laplace-like stochastic partial differential equation (SPDE, see \cite{MRF:LRL_2011} and references therein), which takes the form
\begin{gather}\label{eq:gmrf}
\big( \gamma  - \triangle\big)^{\beta/2} u(s) = \mathcal{W}(s), \quad \text{for} \quad s\in\Omega\subset \R^d,
\end{gather}
where $\mathcal{W}(s)$ is a spatial Gaussian white noise with unit variance, $\triangle$ is the Laplace operator, and $\gamma \in \R_{>0}$ is a scalar variable used to model the correlation length of the Gaussian process.
The order of the differential operator should be sufficiently high, i.e., $\beta > d/2$ , such that the resulting covariance operator will be trace-class in $\mathcal{H}(\Omega)$.
In an infinite domain, the SPDE in \eqref{eq:gmrf} effectively defines a Gaussian process with the Mat\'{e}rn covariance function \eqref{eq:matern} with $\nu = \beta - d/2$.
We choose $\beta=1$ for $d = 1$ and $\beta = 2$ for $d = 2,3$ to satisfy this condition.

Since $\beta$ is integer-valued here, finite element methods can be employed to discretize the covariance operator defined by \eqref{eq:gmrf} and the parameter. Given a set of locally compact basis functions $\{\phi_j(s)\}_{j = 1}^{n}$, the parameter yields the finite dimensional approximation
$u(s) = \sum_{j = 1}^n \phi_j(s) u_j$.
This way, one can express the parameter function using the coefficients associated with the basis functions. This leads to the discretized parameters $\param = (u_1, u_2, \ldots, u_n)^\top$.
Similarly, we can express the mean function $m(s)$ by discretized coefficients $\prmean = (m_1, m_2, \ldots, m_n)^\top$ associated with the basis functions $\{\phi_j(s)\}_{j = 1}^{n}$.
Then, we follow the procedure in \cite{IP:BGMS_2013,MRF:LRL_2011} to formulate the covariance matrix for the discretized parameter.
Employing the Galerkin formulation to discretize the SPDE in \eqref{eq:gmrf}, we obtain the matrices $\Mb, \Kb \in \R^{n \times n}$, where each entry of $\Mb$ and $\Kb$ are specified by
\begin{gather*}
\Mb_{ij} = \langle\phi_i, \phi_j \rangle, \quad \text{and}\quad \Kb_{ij} = \langle\nabla\phi_i, \nabla\phi_j \rangle.
\end{gather*}
Since the basis functions $\{\phi_j(s)\}_{j = 1}^{n}$ are locally compact, both $\Mb$ and $\Kb$ are sparse.
We apply mass lumping to the matrix $\Mb$ to obtain a diagonal matrix $\bar\Mb$.
The discretization of \eqref{eq:gmrf} specifies the covariance matrix through its inverse $\prprech^{} \coloneqq \prcovh^{-1}$, which is known as the precision matrix.

\begin{definition}{Prior precision matrices.}\label{def:prior}
We choose $\beta = 1$ for the case $d=1$. This way, the discretized precision matrix takes the form
\begin{gather}
\prprech = (\gamma \,\bar\Mb + \Kb ),
\end{gather}
We choose $\beta = 2$ for the cases $d=2, 3$, which yields the discretized precision matrix
\begin{gather}
\prprech = (\gamma \,\bar\Mb + \Kb )\,\bar\Mb^{-1}\,(\gamma \,\bar\Mb + \Kb ) = \gamma^2 \,\bar\Mb + 2 \gamma\,\Kb + \Kb\,\bar\Mb^{-1}\Kb  .
\end{gather}
\end{definition}
\begin{remark}
We employ the SPDE definition of the Gaussian process and the discretization in Definition \ref{def:prior} to enbale rapidly updating the prior precision matrix and its determinant for different correlation length $\gamma$. This is computationally convenient for defining MCMC samplers in Section \ref{sec:Gibbs_algorithms}.
However, the algorithms presented here can also be used for other discretisations of the Gaussian process prior, e.g., those based on the circulant embedding and the hierarchical matrices.
\end{remark}

Given the discretized prior mean and covariance, the prior distribution takes the form
\begin{gather}
\label{eq:gmrf_prior}
p_0(\param|\delta, \gamma) = (2\pi)^{-\frac{n}{2}}\,\delta^{\frac{n}{2}}\,\det\big(\prprech\big)^{\frac12} \exp\Big(-\frac{\delta}{2} \big\| \ub - \prmean \big\|^2_{\prprech}\Big).
\end{gather}
Note that the class of precision operators given by \eqref{eq:gmrf} assumes that the underlying random field is stationary up to some boundary conditions, i.e., its correlation structure is spatially invariant. One can extend the SPDE definition of the Gaussian process in \eqref{eq:gmrf} to non-stationary case, e.g., \cite{brown2018semivariogram,roininen2019hyperpriors}. We will not explore this direction in this work.

\subsection{Likelihood functions}
Given the discretized parameter $\param$, we consider the forward model in the discretized form $\predata=\forward(\param)$,
where $\predata\in\R^m$ represents the observable model outputs.
In the inverse problem, we collect measured data, denoted by $\data$, of the observables and want to estimate $\param$ from $\data$.
We use the statistical model of the measurement process and the forward model to construct the likelihood function, which takes the general form
\begin{gather}
\label{eq:like}
\like(\data|\param,\lambda) = p(\data|\predata,\lambda),%
\end{gather}
where $\lambda$ is some hyperparameter that parametrizes uncertain factors of the measurement process.
In this work, we consider two types of measurement processes.

\begin{definition}{Gaussian likelihood.}\label{def:gauss_like}
One typical assumption adopted in inverse problems is that the measurements are corrupted by zero-mean Gaussian noise. This way, we have continuous data $\data \in \R^{m}$, and the measurement process can be written as
\begin{gather*}
\data \sim \N(\predata, \lambda^{-1} \obscov),  \quad {\rm subject\;to} \quad \predata=\forward(\param),
\end{gather*}
where $ \lambda^{-1} \obscov\in\R^{m\times m}$ is a positive definite covariance matrix and $\lambda\in \R_{>0}$ is the precision parameter of the measurement process.
This leads to the likelihood function
\begin{gather}
\label{eq:like_gauss}
\like(\data|\param,\lambda) = (2\pi)^{-\frac{m}{2}}\,\lambda^{\frac{m}{2}}\,\det\big(\obscov\big)^{-\frac12}\exp\Big(-\frac{\lambda}{2}\big\Vert\forward(\param)-\data \big\Vert^2_{\obscov^{-1}}\Big).
\end{gather}
\end{definition}

\begin{definition}{Poisson likelihood.}\label{def:poisson_like}
In inverse problems such as PET imaging, the measurements are integer-valued counting data $\data \in \mathbb{N}^{m}$, and thus can be modelled by the Poisson distribution. In this setup, the expected counts of the Poisson distribution are given by the forward model $\predata = \forward(\param)$.
Each element of the observed data $\data_i$ is associated with the corresponding observable model output $\predata_i$, so the probability mass function of observing $\data_i$ is given by
\begin{gather*}
 \mathbb{P}(\data_i | \predata_i, \lambda) = \frac{\big(\lambda\,\predata_i\big)^{\data_i} \exp\big(\!-\!\lambda\,\predata_i\big) }{\data_i !},  \quad {\rm subject\;to} \quad \predata=\forward(\param),
\end{gather*}
where $\lambda \in \R_{>0}$ is a scalar variable that accounts for possible variations in the expected counts.
Assuming the measurement processes are independent, we have the likelihood function
\begin{gather}\label{eq:poisson_like}
\like(\data|\param,\lambda) = \frac{\lambda^{ {\sum_{i = 1}^{m}\data_i} }}{\prod_{i = 1}^{m} \data_i ! }\, \exp\Big( \sum_{i = 1}^{m} \big( \data_i\log\forward_i(\param)- \lambda\forward_i(\param) \big)\Big),
\end{gather}
where $\forward_i(\param)$ is the $i$-th component of the forward model outputs.
\end{definition}

\begin{assume}\label{assum:assum1}
We assume that the forward model satisfies:
\begin{enumerate}[leftmargin=0.6cm]
\item The forward model is continuously differentiable.
\item In the Gaussian likelihood case, for all $\epsilon > 0$ and $\ub$, there exists a constant $K(\epsilon) > 0$ such that
\begin{gather*}
| \forward(\ub) | \leq \exp\big( K(\epsilon) + \epsilon\|\ub\|_\mathcal{H}^2 \big) ,
\end{gather*}
where $\|\ub\|_\mathcal{H}$ is some appropriate discretized function norm.
\item For the Poisson likelihood, $\forward(\ub)$ is non-negative and bounded, i.e., $\forward(\ub) \in \R^m_{\geq 0}$ and $| \forward(\ub) | < \infty$.
\end{enumerate}
\end{assume}

The continuous differentiability assumption (Condition 1 of the above assumption) implies the Lipschitz continuity of the forward model.
Condition 2 of the above assumption is to ensure that the forward model is sufficiently bounded in the Gaussian likelihood case (see \cite{IP:Stuart_2010} for details).
Condition 3 of the above assumption is necessary for the Poisson likelihood case, since the model outputs give the expected counts of the Poisson distribution.
These are sufficient conditions that can be used in the framework of \cite{IP:Stuart_2010} to define well-posed Bayesian inverse problems in the function space setting.
In this work, although we focus on the computation of the hierarchical Bayesian inverse problem, we keep our problem setup consistent to that of \cite{IP:Stuart_2010}.

\subsection{Joint posterior and its marginal and conditional distributions}\label{sec:post_def}
In the case that hyperparameters $\lambda$, $\delta$, and $\gamma$ are unknown, in keeping with the Bayesian paradigm, we assume hyper-priors $p_0(\lambda)$, $p_0(\delta)$, and $p_0(\gamma)$ on those hyperparameters. Thus, the {\it joint posterior density} over all of the unknown parameters is given, by Bayes' law, as
\begin{gather}
\label{eq:joint_post}
p(\param,\lambda, \delta, \gamma|\data) = \frac{1}{p(\data)}\like(\data|\param,\lambda)\,p_0(\param|\delta, \gamma)\,p_0(\lambda)\,p_0(\delta)\,p_0(\gamma),
\end{gather}
where $ p(\data) $ is the normalizing constant. %
The following densities associated with the joint posterior density will be used throughout this paper:
In some situations, we use the {\it marginal posterior density}
\begin{gather}
p(\lambda, \delta, \gamma|\data) = \frac{1}{p(\data)}\like(\data|\lambda, \delta, \gamma)\,p_0(\lambda)\,p_0(\delta)\,p_0(\gamma),
\label{eq:marginal_post}
\end{gather}
to draw hyperparameter samples.
To define the marginal posterior, we need the {\it marginal likelihood}:
\begin{gather}
\like(\data|\lambda, \delta, \gamma) = \int \like(\data|\param,\lambda)\,p_0(\param|\delta, \gamma)\, d\param.
\label{eq:marginal_like}
\end{gather}
The marginal likelihood is also the normalising constant of the {\it conditional posterior density}
\begin{gather}
p(\param|\data,\lambda, \delta, \gamma) = \frac{1}{\like(\data|\lambda, \delta, \gamma)}\like(\data|\param,\lambda)\,p_0(\param|\delta, \gamma).
\label{eq:cond_post}
\end{gather}
The conditional posterior is used to draw parameter samples. Since the marginal likelihood is often unknown, we can only evaluate the {\it unnormalized conditional posterior density} 
\begin{gather}
f(\param|\data,\lambda, \delta, \gamma) = \like(\data|\param,\lambda)\,p_0(\param|\delta, \gamma).
\label{eq:un_cond_post}
\end{gather}

\section{Randomize-then-optimize}
\label{sec:RTO}
In this work, the foundation for exploring the joint posterior distribution \eqref{eq:joint_post} relies on the capability of sampling the conditional posterior \eqref{eq:cond_post} and estimating the marginal likelihood \eqref{eq:marginal_like}.
For prescribed hyperparameters $\lambda, \delta, \gamma$, we seek to construct a map
\begin{gather}
\label{eq:rtomap1}
\Tb(\ub; \lambda, \delta, \gamma) = \zetab, %
\end{gather}
where $\Tb : \R^n \rightarrow \R^n$ and $\zetab \in \R^n$, to describe a deterministic coupling between a reference random variable $\zetab \sim p_{\rm ref}(\cdot |  \lambda, \delta, \gamma)$ and some target random variable $\ub$.
If the probability density of the random variable $\ub$ is a good approximation to the conditional posterior, then we can use the coupling \eqref{eq:rtomap1} to efficiently sample the conditional posterior and/or to compute the marginal likelihood through importance sampling.
Then, we can apply either the Metropolis-within-Gibbs method (see \cite{agapiou2014analysis} and references therein) or the pseudo-marginal method \cite{andrieu2009pseudo} to explore the joint posterior.

In this section, we will first present an overview of the scalable randomize-then-optimize method \cite{wang2019scalable} that can be applied to build the coupling in \eqref{eq:rtomap1} for the Gaussian likelihood (see Definition \ref{def:gauss_like}) in the hierarchical Bayesian setting.
Then, we will show that RTO satisfies the Central Limit Theorem for estimating the marginal likelihood under certain technical assumptions, present a trust-region modification that enables that RTO can be correctly implemented under relaxed technical assumptions, and generalize RTO to problems with Poisson likelihood.

\newcommand{\ubref}{\ub_{\lambda, \delta, \gamma}^\ast}
\renewcommand{\ubref}{\ub_\ast}

\subsection{RTO for Gaussian likelihood}\label{sec:gauss_rto}
Suppose hyperparameters $\lambda, \delta, \gamma$ are prescribed. We have a Guassian prior $\N(\prmean, \delta^{-1}\prprech^{-1} )$ and a Gaussian measurement process $\data \sim \N(\predata, \lambda^{-1} \obscov)$ subject to $\predata=\forward(\param)$.
RTO uses elements of the conditional posterior, including the linearized forward model $\Jb(\ub) := \nabla_\ub \Fb(\ub) \in \R^{m \times n}$, the prior precision matrix $\delta \prprech$, and the covariance of the measurement noise $\lambda^{-1} \obscov$, to construct the coupling equation.
Suppose we have a reference parameter $\ubref$ for fixed $\lambda, \delta, \gamma$, e.g., the maximum {\it a posteriori} (MAP) point or the posterior mean, and the matrix $\Jb(\ubref)$ is rank-$r$ where $r \leq \min(m,n)$. We compute the compact generalized SVD:
\begin{gather}
\label{eq:gsvd}
\frac{\lambda}{\delta} \, \obscov^{-1} \, \Jb(\ub_\ast) \, \prprech^{-1} = \Yb \Sb \Xb^\top,
\end{gather}
where $\Sb \in \R^{r \times r}$ is diagonal, $\Yb \in \R^{m \times r}$ is $(\lambda^{-1} \obscov)$-orthogonal, and $\Xb \in \R^{n \times r}$ is $(\delta \prprech)$-orthogonal. That is, $ \Yb^\top (\lambda^{-1} \obscov) \Yb = \Ib_r$ and $\Xb^\top (\delta \prprech) \Xb = \Ib_r$.
Then, introducing a $(\delta\prprech)$-orthogonal projector $\Pib = \Xb \Xb^\top (\delta \prprech)$, the scalable RTO (see Section 4.2 of \cite{wang2019scalable}) constructs a nonlinear function
\begin{gather}
\label{eq:oblique}
\Tb(\ub; \lambda, \delta, \gamma)  \coloneqq \Xb \big[ (\Sb^2 + \Ib)^{-\frac12} \big( \Xb^\top (\delta \prprech)(\ub-\prmean) + \Sb \Yb^\top (\Fb(\ub)-\data) \big) \big] + (\Ib-\Pib) (\ub-\prmean).%
\end{gather}
Given a zero-mean Gaussian random variable $\zetab$ with the precision matrix $\delta\prprech$, the scalable RTO then defines the coupling equation
\begin{gather}
\label{eq:rtomap}
\Tb(\ub; \lambda, \delta, \gamma) = \zetab, \quad \text{where} \quad \zetab \sim \N(\zerob, (\delta\prprech)^{-1}).
\end{gather}

We can decompose the parameter $\ub$ into three parts: the prior mean, one in the column space of $\Xb$, and another in the $(\delta\prprech)$-orthogonal complement of $\Xb$.
Defining
\begin{gather*}
\ub_r = \Xb^\top (\delta\prprech) (\ub-\prmean), \quad \ub_\perp = (\Ib - \Pib) (\ub-\prmean), \quad \text{and}, \quad \ub = \Xb \ub_r + \ub_\perp + \prmean,
\end{gather*}
and a nonlinear function
\begin{gather*}
\Theta(\ub_r; \ub_\perp) = (\Sb^2 + \Ib)^{-\frac12}\big( \ub_r + \Sb \Yb^\top \big(\Fb\big(\Xb \ub_r + \ub_\perp + \prmean\big)-\data\big) \big) \in \R^r
\end{gather*}
the nonlinear equation \eqref{eq:rtomap} can be written as a coupled system of equations:
\begin{gather}
\label{eq:split}
\left\{
\begin{aligned}
\ub_\perp  & = (\Ib-\Pib) \, \zetab \\
\Xb  \Theta(\ub_r; \ub_\perp) & = \Pib \, \zetab
\end{aligned}
\right.
\end{gather}
Thus, we can solve the nonlinear system of equations \eqref{eq:rtomap} by first computing $\ub_\perp = (\Ib - \Pib) \, \zetab$ and then solving $(\Sb^2 + \Ib)^{-\frac12}\Theta(\ub_r; \ub_\perp) = \Xb^\top (\delta \prprech)\,\zetab $ via the $r$-dimensional optimization problem
\begin{gather}\label{eq:svd2}
 \ub_r = \argmin_{u_r'} \big\| \Theta(\ub_r'; \ub_\perp) - \Xb^\top (\delta \prprech)\,\zetab\big\|^2.
 \end{gather}

\begin{theorem}\label{prop:rto}
In addition to Assumption \ref{assum:assum1}, we assume that for all $\ub_r \in \R^r$ and $\ub_\perp \in {\rm kernel}(\Xb)$, the mapping $\ub_r \mapsto \Theta(\ub_r; \ub_\perp)$ is invertible.
Then, the normalized probability density of $\ub$ generated by the solving coupling equation \eqref{eq:rtomap} is given by the pullback density of $\N(\zerob, \prprech^{-1})$ under the mapping $\Tb(\ub; \lambda, \delta, \gamma) = \zetab$:
\begin{gather}
p_{\rm RTO}(\ub| \lambda, \delta, \gamma) = (2\pi)^{-\frac{n}{2}}\delta^{\frac{n}{2}}\,\det\big(\prprech\big)^{\frac12}
\det\big(\nabla_{\ub}\Tb(\ub; \lambda, \delta, \gamma)\big)  \exp\Big(-\frac{\delta}{2} \big\| \Tb(\ub; \lambda, \delta, \gamma) \big\|_{\prprech}^2 \Big).
\label{eq:rto_density}
\end{gather}
Then, the ratio between the unnormalized posterior and the RTO density is given by
\begin{align}
\!\!w(\ub; \lambda, \delta, \gamma) \!:= & \frac{f(\ub| \data,\lambda, \delta, \gamma)}{p_{\rm RTO}(\ub| \lambda, \delta, \gamma)} \nonumber \\
\! = & \frac{\!\!\lambda^{\frac{m}{2}}\!\det\big(\obscov\big)^{\!\!-\frac12}\!\!\det\big(\Sb^2 \!+\! \Ib_r\big)^{\frac12}}{(2\pi)^{\!\frac{m}{2}}\!\det\big( \Ib_r \!+\! \Sb \Yb^\top \nabla_{\ub} \Fb\big( \ub \big) \Xb \big)\!\!}
\exp\!\Big(\!\!\!-\!\frac{\lambda}{2}\!\big\Vert\forward(\param)-\data \big\Vert^2_{\obscov^{-1}} \!-\! \frac{1}{2}\! \left\| \ub_r \right\|^2 \!+\! \frac{1}{2}\! \big\| \Theta(\ub_r; \ub_\perp) \big\|^2 \!\Big).\!\!\!\!
\label{eq:rto_weight}
\end{align}
In addition, the ratio $w(\ub; \lambda, \delta, \gamma)$ is positive almost surely w.r.t. the prior $\pi(\ub | \delta, \gamma)$.
\end{theorem}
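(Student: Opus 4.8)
The plan is to handle the three assertions in sequence: \eqref{eq:rto_density} is a change-of-variables (pullback) identity, \eqref{eq:rto_weight} follows by forming the quotient and simplifying, and positivity reduces to controlling the sign of a single $r\times r$ determinant. First I would verify that, under the invertibility hypothesis, the coupling $\ub\mapsto\Tb(\ub;\lambda,\delta,\gamma)$ is a bijection, so the pullback density is well defined. Reading off the split \eqref{eq:split}, each reference draw $\zetab\sim\N(\zerob,(\delta\prprech)^{-1})$ fixes $\ub_\perp=(\Ib-\Pib)\zetab$ linearly and then $\ub_r$ uniquely through the assumed invertibility of $\ub_r\mapsto\Theta(\ub_r;\ub_\perp)$ solving $\Theta(\ub_r;\ub_\perp)=\Xb^\top(\delta\prprech)\zetab$; with $\ub=\Xb\ub_r+\ub_\perp+\prmean$ this recovers $\ub$. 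Applying the change-of-variables formula to the reference density $(2\pi)^{-n/2}\delta^{n/2}\det(\prprech)^{1/2}\exp(-\tfrac{\delta}{2}\|\zetab\|_{\prprech}^2)$ under $\zetab=\Tb(\ub)$ then yields \eqref{eq:rto_density}, the Jacobian appearing without an absolute value once the third assertion supplies its positivity.

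Next I would form $w=f/p_{\rm RTO}$ using \eqref{eq:un_cond_post}, \eqref{eq:like_gauss}, \eqref{eq:gmrf_prior}; the prior normalizers $\delta^{n/2}\det(\prprech)^{1/2}$ cancel, leaving a prefactor times $\exp(-\tfrac{\lambda}{2}\|\Fb(\ub)-\data\|^2_{\obscov^{-1}}-\tfrac{\delta}{2}\|\ub-\prmean\|^2_{\prprech}+\tfrac{\delta}{2}\|\Tb(\ub)\|^2_{\prprech})$. Using $\Xb^\top(\delta\prprech)\Xb=\Ib_r$ and $\Xb^\top(\delta\prprech)(\Ib-\Pib)=0$, the decompositions $\Tb=\Xb\Theta+\ub_\perp$ and $\ub-\prmean=\Xb\ub_r+\ub_\perp$ have vanishing cross terms, so $\delta\|\Tb\|^2_{\prprech}=\|\Theta\|^2+\delta\|\ub_\perp\|^2_{\prprech}$ and $\delta\|\ub-\prmean\|^2_{\prprech}=\|\ub_r\|^2+\delta\|\ub_\perp\|^2_{\prprech}$; the $\ub_\perp$ contributions cancel, leaving exactly $-\tfrac12\|\ub_r\|^2+\tfrac12\|\Theta(\ub_r;\ub_\perp)\|^2$ as in \eqref{eq:rto_weight}.

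The one genuinely non-routine computation is $\det(\nabla_\ub\Tb)$. I would change to the adapted basis $\Bb=[\Xb\mid\Wb]$, with $\Wb$ spanning $\mathrm{range}(\Ib-\Pib)$; there $\Tb$ acts as $(\ub_r,\tilde b)\mapsto(\Theta(\ub_r;\ub_\perp),\tilde b)$, whose Jacobian is block triangular with determinant $\det(\partial\Theta/\partial\ub_r)$, and the similarity $\nabla_\ub\Tb=\Bb(\nabla\Phi)\Bb^{-1}$ preserves the determinant, so $\det(\nabla_\ub\Tb)=\det(\partial\Theta/\partial\ub_r)=\det(\Sb^2+\Ib_r)^{-1/2}\det(\Ib_r+\Sb\Yb^\top\nabla_\ub\Fb(\ub)\Xb)$ after differentiating $\Theta$ and using $\partial\ub/\partial\ub_r=\Xb$; this is precisely the prefactor denominator of \eqref{eq:rto_weight}. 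For positivity, the exponential and the factor $\lambda^{m/2}\det(\obscov)^{-1/2}\det(\Sb^2+\Ib_r)^{1/2}$ are strictly positive, so $\mathrm{sign}(w)=\mathrm{sign}\det(\Ib_r+\Sb\Yb^\top\nabla_\ub\Fb(\ub)\Xb)$. Evaluating at the reference point $\ub_\ast$ and substituting the GSVD \eqref{eq:gsvd} with $\Yb^\top\obscov\Yb=\lambda\Ib_r$ and $\Xb^\top\prprech\Xb=\delta^{-1}\Ib_r$ gives $\Yb^\top\nabla_\ub\Fb(\ub_\ast)\Xb=\Sb$, hence the determinant equals $\det(\Ib_r+\Sb^2)>0$ there.

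I expect the main obstacle to be this final assertion. The Jacobian reduction is the key algebraic step, but translating the word ``invertible'' into the nonvanishing of $\det(\partial\Theta/\partial\ub_r)$ needed for almost-sure positivity is delicate, since a merely $C^1$ bijection can in principle have a singular Jacobian. The way I would close the gap is to combine continuity of the determinant, the reference-point value $\det(\Ib_r+\Sb^2)>0$, absolute continuity of the prior with respect to Lebesgue measure, and a connectedness argument preventing a sign change on a positive-measure set, so that $w>0$ holds $\pi$-almost surely.
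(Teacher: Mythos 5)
Your derivation takes a genuinely different route from the paper's. The paper proves this theorem by reduction: it applies the whitening transforms $\vb=(\delta\prprech)^{1/2}(\ub-\prmean)$, $\zb=(\lambda^{-1}\obscov)^{-1/2}\data$, shows that the generalized SVD \eqref{eq:gsvd} is just the ordinary reduced SVD of the whitened Jacobian, and then invokes Proposition 3 and Theorem 8 of \cite{wang2019scalable} for the density, the weight, and the positivity; Appendix \ref{sec:app_a1} only establishes this change of coordinates. You instead derive everything directly in the original variables: bijectivity of $\Tb$ from the split system \eqref{eq:split}, the cancellation of the $\ub_\perp$ contributions in the exponent using $\Xb^\top(\delta\prprech)\Xb=\Ib_r$ and $\Xb^\top(\delta\prprech)(\Ib-\Pib)=\zerob$, and the block-triangular Jacobian computation $\det(\nabla_\ub\Tb)=\det(\Sb^2+\Ib_r)^{-1/2}\det\big(\Ib_r+\Sb\Yb^\top\nabla_\ub\Fb(\ub)\Xb\big)$ in the adapted basis $[\Xb\mid\Wb]$. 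These steps are correct, including the reference-point identity $\Yb^\top\nabla_\ub\Fb(\ub_\ast)\Xb=\Sb$ giving $\det(\Ib_r+\Sb^2)>0$ at $\ub_\ast$, and they make explicit exactly the computations the paper outsources to \cite{wang2019scalable}: the paper's route buys brevity and reuse of an established result, yours buys a self-contained and transparent account of where each factor in \eqref{eq:rto_weight} comes from.

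The genuine gap is in your final positivity step, and the patch you propose does not close it. Continuity of $\ub\mapsto\det\big(\Ib_r+\Sb\Yb^\top\nabla_\ub\Fb(\ub)\Xb\big)$ plus connectedness of the domain does not prevent this determinant from being strictly negative on a set of positive prior measure: a continuous function on a connected set can change sign, passing through zero only on a Lebesgue-null set, so connectedness alone forbids nothing. The topological fact that would help a literal ``$C^1$ bijection'' hypothesis is invariance of domain / degree theory: an injective $C^1$ map of a connected open set has Jacobian determinant of constant sign wherever it is nonzero. But even that is insufficient for the theorem as stated, because the Jacobian of a $C^1$ bijection can vanish on a set of \emph{positive} Lebesgue measure (take an antiderivative of a continuous nonnegative function vanishing exactly on a fat Cantor set: a strictly increasing $C^1$ bijection of $\R$ with derivative zero on a positive-measure set). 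On such a set $p_{\rm RTO}$ would vanish while $f>0$, so $w$ would fail to be finite on a set of positive prior mass, breaking the almost-sure positivity and the unbiasedness of Algorithm \ref{alg:rto_is} that it underwrites. The resolution --- and this is how the paper reads its own hypothesis in its proof --- is to interpret ``invertible'' as \emph{diffeomorphic}, i.e.\ $C^1$ with $C^1$ inverse, so that by the chain rule $\nabla_{\ub_r}\Theta(\ub_r;\ub_\perp)$ is nonsingular for every $(\ub_r,\ub_\perp)$. Under that reading your own continuity-plus-connectedness argument, anchored at the value $\det(\Ib_r+\Sb^2)>0$ at $\ub_\ast$, becomes valid and yields $\det>0$ everywhere, hence $w>0$ everywhere, not merely almost surely.
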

\begin{proof}
The continuous differentiability assumption of the forward model (Assumption \ref{assum:assum1}) implies that the mapping $\Theta$ is also continuously differentiable. Together with invertibility assumption above, the mapping $\Theta$ is diffeomorphic.
The rest of the above results are equivalent to Proposition 3 and Theorem 8 of \cite{wang2019scalable}. For completeness, we provide the derivation of the RTO coupling equation \eqref{eq:rtomap} and the RTO density in \eqref{eq:rto_density} in Appendix \ref{sec:app_a1} to formally establish this equivalence.
\end{proof}

The RTO mapping in \eqref{eq:rtomap} can be used either as an independence proposal in the Metropolis-Hastings (MH) algorithm or as a biasing density in the importance sampling for exploring the conditional posterior.
In both cases, the weighting function $w(\ub; \lambda, \delta, \gamma)$ in \eqref{eq:rto_weight} can be used to define either the acceptance probability in MH or the importance ratio in importance sampling.
Algorithm \ref{alg:rto_mh} summarizes the procedure of simulating a Markov chain with the conditional posterior as the invariant density for $N$ number of steps. In this algorithm, generating RTO samples is the most computationally demanding part; fortunately all RTO samples can be generated in parallel.
Algorithm \ref{alg:rto_is} summarizes the procedure of computing the marginal likelihood using importance sampling and RTO.
We will exploit these features in later sections for exploring the joint posterior.

\begin{algorithm}[htbp]
\caption{RTO-Metropolis-Hastings for sampling from $p(\ub|\data,\lambda,\delta,\gamma)$} \label{alg:rto_mh}{}
\begin{algorithmic}[1]
\State For fixed $\lambda, \delta, \gamma$, find the reference parameter $\ub_\ast$, e.g., the MAP point.
\State Compute the generalized SVD \eqref{eq:gsvd} of the linearized forward model $\Jb(\ub_\ast)$ to define the coupling equation in \eqref{eq:oblique}.
\For {$i = 0, \ldots, N$} in parallel
  \State Draw random variables $\etab_{i} \sim \N(\zerob, \prprech^{-1})$.
  \State Solve for a corresponding RTO sample $\ub_{i} = \Xb \ub_{r,i} + \ub_{\perp, i} + \prmean$ using \eqref{eq:split}.
  \State Compute the weights $w(\ub_{i}; \lambda, \delta, \gamma)$ using \eqref{eq:rto_weight}.
\EndFor
\State Initialise the Markov chain $\Ub_{0} = \ub_{0}$.
\For {$i = 1, \ldots, N$} in series
\State With probability $\alpha(\ub_i,\ub_{i-1})=\min\{1,w(\ub_{i-1}; \lambda, \delta, \gamma) / w(\ub_{i}; \lambda, \delta, \gamma)\}$, {\bf accept} $\ub_i$ by seting $\Ub_{i}$ = $\ub_{i}$, otherwise, {\bf reject} by setting $\Ub_{i}$ = $\Ub_{i-1}$.
\EndFor
\end{algorithmic}
\end{algorithm}

\begin{algorithm}[htbp]
\caption{RTO importance sampling for computing the marginal likelihood} \label{alg:rto_is}{}
\begin{algorithmic}[1]
\State For fixed $\lambda, \delta, \gamma$, find the reference parameter $\ub_\ast$, e.g., the MAP point.
\State Compute the generalized SVD \eqref{eq:gsvd} of the linearized forward model $\Jb(\ub_\ast)$ to define the coupling equation in \eqref{eq:oblique}.
\State Compute RTO samples and weights, $\{\ub_{i},w(\ub_{i}; \lambda, \delta, \gamma)\}_{i=1}^{N}$, as in Algorithm \ref{alg:rto_mh}.
\State Approximate the marginal likelihood by importance sampling:\vspace{-1em}
\begin{gather}\label{eq:marg_like_PM}
\!\!\like(\data|\lambda, \delta, \gamma) \! = \! \mathbb{E}_{p_{\rm RTO}} \big[ w(\ub; \lambda, \delta, \gamma) \big] \! \approx\! \like_N(\data|\lambda, \delta, \gamma) \! \coloneqq\!  \frac{1}{N} \sum_{i = 1}^{N}w(\ub_{i}; \lambda, \delta, \gamma).\!\!
\end{gather}
\vspace{-1em}
\end{algorithmic}
\end{algorithm}

\begin{remark}\label{rem:general_eig}
The generalized SVD in \eqref{eq:gsvd} is equivalent to the generalized eigenvalue problems:
\begin{gather}\begin{aligned}
\big( \Jb(\ub_\ast)^\top \,(\lambda \obscov^{-1}) \, \Jb(\ub_\ast) \big)  \Xb & = \big(\delta \prprech\big) \, \Xb \, \Sb^2 , \\
\big( \Jb(\ub_\ast) \,(\delta^{-1} \prprech^{-1}) \,  \Jb(\ub_\ast)^\top \big)  \Yb & = \big(\lambda^{-1} \obscov\big) \, \Yb \, \Sb^2 \label{eq:geig}.
\end{aligned}\end{gather}
For problems where it is not feasible to explicitly construct the linearized forward model, matrix-free solvers such as Lanczos or randomized SVD (see \cite{Lin:GoVanLo_2012,Lin:HMT_2011} and references therein) can be used to solve the generalized eigenvalue problems \eqref{eq:geig} to obtain $\Xb$, $\Yb$, and $\Sb$.
\end{remark}

\subsection{Using RTO in importance sampling}
In this work, one important application of RTO is to compute the marginal likelihood as outlined in Algorithm \ref{alg:rto_is}.
The result of Theorem \ref{prop:rto} ensures that the importance sampling estimator $\like_N(\data|\lambda, \delta, \gamma) $ in \eqref{eq:marg_like_PM} satisfies the Strong Law of Large Numbers (see Chapter 9 \cite{mcbook}). Thus Algorithm \ref{alg:rto_is} provides an almost surely converging and unbiased estimate of the marginal likelihood.
The following proposition establishes that $\like_N(\data|\lambda, \delta, \gamma) $ also satisfies the Central Limit Theorem.

\begin{proposition}\label{prop:rto_is}
Under the same assumption of Proposition \ref{prop:rto}, the second moment of the ratio between the unnormalized posterior and the RTO density, $w(\ub; \lambda, \delta, \gamma)$, is finite. That is,
\begin{gather*}
\mathbb{E}_{p_{\rm RTO}}\big[ w(\ub; \lambda, \delta, \gamma)^2\big] < \infty.
\end{gather*}
\end{proposition}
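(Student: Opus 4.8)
The plan is to reduce the second moment to the product of the (finite) marginal likelihood and a supremum of the weight, and then to bound the weight. Since $w\,p_{\rm RTO} = f$ by the definition of $w$ in Theorem~\ref{prop:rto}, I would first write
\[
\mathbb{E}_{p_{\rm RTO}}\big[w^2\big] = \int_{\R^n} w(\ub)^2\,p_{\rm RTO}(\ub|\lambda,\delta,\gamma)\,d\ub = \int_{\R^n} w(\ub)\,f(\ub|\data,\lambda,\delta,\gamma)\,d\ub \le \Big(\sup_{\ub} w(\ub)\Big)\,\like(\data|\lambda,\delta,\gamma),
\]
using $\int f\,d\ub = \like(\data|\lambda,\delta,\gamma)$ from \eqref{eq:marginal_like}. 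The marginal likelihood is finite because in the Gaussian case $\like(\data|\ub,\lambda)$ in \eqref{eq:like_gauss} is bounded above by $(2\pi)^{-m/2}\lambda^{m/2}\det(\obscov)^{-1/2}$ while $p_0(\ub|\delta,\gamma)$ integrates to one. Hence it suffices to prove that $w$ is bounded.

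The engine of the argument is to show that the exponential factor of $w$ in \eqref{eq:rto_weight} is at most one. Writing $\rb := \Yb^\top(\forward(\ub)-\data)$ and recalling $\Theta(\ub_r;\ub_\perp) = (\Sb^2+\Ib)^{-\frac12}(\ub_r + \Sb\rb)$, I would complete the square to obtain
\[
-\tfrac12\|\ub_r\|^2 + \tfrac12\big\|\Theta(\ub_r;\ub_\perp)\big\|^2 = \tfrac12\|\rb\|^2 - \tfrac12\big\|(\Sb^2+\Ib)^{-\frac12}(\Sb\,\ub_r - \rb)\big\|^2 \le \tfrac12\|\rb\|^2,
\]
where the inequality uses that $(\Sb^2+\Ib)^{-1}$ is positive definite. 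The remaining ingredient is a projection inequality: since $\Yb$ is $(\lambda^{-1}\obscov)$-orthonormal, the columns of $(\lambda^{-1}\obscov)^{1/2}\Yb$ are orthonormal in the Euclidean sense, so $\|\rb\|^2 = \|\Yb^\top(\forward(\ub)-\data)\|^2 \le \lambda\,\|\forward(\ub)-\data\|^2_{\obscov^{-1}}$. Substituting these two bounds into the exponent of \eqref{eq:rto_weight}, the $\tfrac12\|\rb\|^2$ contribution is absorbed by the $-\tfrac\lambda2\|\forward(\ub)-\data\|^2_{\obscov^{-1}}$ term, leaving the exponent nonpositive. Consequently the exponential factor is at most one and
\[
w(\ub;\lambda,\delta,\gamma) \le \frac{\lambda^{\frac m2}\det(\obscov)^{-\frac12}\det(\Sb^2+\Ib_r)^{\frac12}}{(2\pi)^{\frac m2}\,\big|\det\big(\Ib_r + \Sb\,\Yb^\top \nabla_{\ub}\forward(\ub)\,\Xb\big)\big|}.
\]

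The step I expect to be the main obstacle is controlling the reciprocal Jacobian-determinant factor on the right. The invertibility hypothesis inherited from Theorem~\ref{prop:rto} makes $\ub_r \mapsto \Theta(\ub_r;\ub_\perp)$ a $C^1$ diffeomorphism, so $\det(\Ib_r + \Sb\,\Yb^\top \nabla_{\ub}\forward(\ub)\,\Xb)$ never vanishes, keeps a constant sign, and depends continuously on $\ub$ (it equals $\det(\Ib_r+\Sb^2)$ at $\ub=\ub_\ast$ by the generalized SVD \eqref{eq:gsvd}). This already yields finiteness and positivity at every point and a bound on any compact set; the delicate part is to promote this to a uniform lower bound on $|\det(\cdot)|$ as $\|\ub\|\to\infty$. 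I would close this gap using the continuous differentiability together with the growth and boundedness conditions of Assumption~\ref{assum:assum1} to control $\nabla_{\ub}\forward$ in the tails, so that $w$ is a bounded continuous function, mirroring the corresponding second-moment argument for RTO in \cite{wang2019scalable}. With $\sup_{\ub} w < \infty$ in hand, the first display yields $\mathbb{E}_{p_{\rm RTO}}[w^2] < \infty$.
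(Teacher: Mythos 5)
Your proposal is correct and takes a genuinely different, and in places cleaner, route than the paper's own proof. The paper (Appendix A.2) rewrites $\mathbb{E}_{p_{\rm RTO}}[w^2]$ as $\mathbb{E}_{p_0}[w\,\like(\data|\ub,\lambda)]$, pulls the Jacobian-determinant ratio out as a constant $C_1$, and then shows that the combined exponent $Q(\ub,\ub_r)=-\lambda\|\Fb(\ub)-\data\|^2_{\obscov^{-1}}-\tfrac12\|\ub_r\|^2+\tfrac12\|\Theta(\ub_r;\ub_\perp)\|^2$ is nonpositive via a two-variable completion of squares involving the matrices $\Db_1,\Db_2,\Db_3$ and an extension of $\Yb$ to a complete $(\lambda^{-1}\obscov)$-orthogonal basis. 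You instead bound $\mathbb{E}_{p_{\rm RTO}}[w^2]=\int w\,f\,d\ub\le(\sup_\ub w)\,\like(\data|\lambda,\delta,\gamma)$ and prove the stronger statement that $w$ itself is bounded: your identity $-\tfrac12\|\ub_r\|^2+\tfrac12\|\Theta\|^2=\tfrac12\|\rb\|^2-\tfrac12\|(\Sb^2+\Ib)^{-\frac12}(\Sb\ub_r-\rb)\|^2$ (valid since $\Sb$ and $(\Sb^2+\Ib)^{-1}$ are diagonal and commute), combined with the projection inequality $\|\Yb^\top(\Fb(\ub)-\data)\|^2\le\lambda\|\Fb(\ub)-\data\|^2_{\obscov^{-1}}$, shows that the exponent of $w$ (with coefficient $\lambda/2$) is already nonpositive. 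This is strictly stronger than the paper's $Q\le0$ (coefficient $\lambda$), avoids the heavier positive-definiteness computation for $\Db_2$, and yields boundedness of the importance weight as a by-product, which the paper's bound on the product $w\,\like$ alone does not give. The one step you flag as delicate---a uniform lower bound on $|\det(\Ib_r+\Sb\Yb^\top\nabla_\ub\Fb(\ub)\Xb)|$ over all of $\R^n$---is not a deficiency relative to the paper: the paper's proof needs exactly the same bound and simply asserts $\sup_\ub \det(\Sb^2+\Ib_r)^{\frac12}\big/\det\big(\Ib_r+\Sb\Yb^\top\nabla_\ub\Fb(\ub)\Xb\big)=C_1<\infty$ as a consequence of the invertibility assumption, with no further argument (note that a global diffeomorphism only guarantees the determinant never vanishes, not that it is bounded away from zero at infinity). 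So on that point your treatment is at the same level of rigor and more transparent; once that constant is granted, as the paper grants it, your argument closes completely.
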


\begin{proof}
See Appendix \ref{sec:app_a2}.
\end{proof}

\subsection{Trust-region modification to RTO}
\label{sec:trust_rto}
For many practical problems, the global diffeomorphism assumption of the mapping $\ub_r \mapsto \Theta(\ub_r; \ub_\perp)$ in Theorem \ref{prop:rto} is often hard to satisfy.
Here we propose a trust-region modification to the forward model to provide heuristics that may guarantee this assumption in practice.
We first split the mapping $\Theta$ into the first order Taylor series expansion around $\ub_\ast$ and the remainder:
\begin{gather}\label{eq:map_split}
\Theta(\ub_r; \ub_\perp) = \Theta_{\rm L}(\ub_r; \ub_\perp) + \Theta_{\rm R}(\ub_r; \ub_\perp),
\end{gather}
where the first order Taylor series expansion is given by
\begin{gather}\begin{aligned}
\Theta_{\rm L}(\ub_r; \ub_\perp) & = \ub_r + \Sb \Yb^\top \Big(  \forward(\ub_\ast) + \Jb(\ub_\ast)(\Xb \ub_r + \ub_\perp + \prmean - \ub_\ast)  - \data \Big)
\label{eq:map_linear},
\end{aligned}\end{gather}
and the remainder term is given by
\begin{gather}\label{eq:map_nonlinear}
\Theta_{\rm R}(\ub_r; \ub_\perp) = \Sb \Yb^\top \Big( \forward(\Xb \ub_r + \ub_\perp + \prmean) - \Jb(\ub_\ast)(\Xb \ub_r + \ub_\perp + \prmean - \ub_\ast) - \forward(\ub_\ast) \Big).
\end{gather}
The Jacobian matrices of $\Theta_{\rm L}(\ub_r; \ub_\perp)$ and the remainder can be respectively expressed as
\begin{gather*}\begin{aligned}
\nabla_{\ub_r} \Theta_{\rm L}(\ub_r; \ub_\perp) & = \Ib_r + \Sb \Yb^\top \Jb(\ub_\ast) \Xb = \Ib_r + \Sb^2, \\ %
\nabla_{\ub_r} \Theta_{\rm R}(\ub_r; \ub_\perp) & = \Sb \Yb^\top \big( \Jb(\Xb \ub_r + \ub_\perp + \prmean) - \Jb(\ub_\ast)\big) \Xb,
\end{aligned}\end{gather*}
where the first identity follows from the generalized SVD in \eqref{eq:gsvd}.
Our starting point is that the spectral radius of $\nabla_{\ub_r} \Theta_{\rm R}(\ub_r; \ub_\perp)$ can be locally bounded w.r.t. $\nabla_{\ub_r} \Theta_{\rm L}(\ub_r; \ub_\perp)$, as stated below.
\begin{assume}\label{assum:assum2}
For a given reference point $\mb_r = \Xb^\top (\delta \prprech) (\ub_\ast - \prmean)$, we assume that there exists a constant $\varepsilon \in \R_{>0}$ and a set $\,\mathbb{S}(\varepsilon) = \{ \ub_r \in \R^r : \| \ub_r - \mb_r \| < \varepsilon\}$ such that the largest singular value of the matrix $\nabla_{\ub_r}\!\! \Theta_{\rm R}(\ub_r; \ub_\perp) (\Ib_r + \Sb^2)^{-1}$ is bounded below $1$ for $\forall \ub_r \in \mathbb{S}(\varepsilon)$ and $\forall \ub_\perp \in {\rm kernel}(\Xb) $, that is
\begin{gather*}
\sup_{\ub_r \in \mathbb{S}(\varepsilon), \ub_\perp \in {\rm kernel}(\Xb)} \sigma_{\rm max}\Big( (\Ib_r + \Sb^2)^{-1}\nabla_{\ub_r}\!\!\Theta_{\rm R}(\ub_r; \ub_\perp) \Big) < 1.
\end{gather*}
\end{assume}
Following Assumption \ref{assum:assum2}, the mapping $\ub_r \mapsto \Theta(\ub_r; \ub_\perp)$ is locally diffeomorphic for all  $\ub_r \in \mathbb{S}(\varepsilon)$.
We want to extend this local diffeomorphism to $\R^r$ by applying a nonlinear transformation to the remainder term in \eqref{eq:map_nonlinear}.
Towards this goal, we introduce a trust region function $\psi: \R_{\geq 0} \mapsto \R$:
\begin{gather}
\psi( r ; \tilde\varepsilon, \tau) =
\begin{cases}
r & \text{if} \quad r < \tilde\varepsilon(1-\tau) \vspace{-0.3em} \\
\tilde\varepsilon - \frac{\tau\tilde\varepsilon}{4} + \frac{r - \tilde\varepsilon}{2} - \frac{(r - \tilde\varepsilon)^2}{4\,\tau\tilde\varepsilon} & \text{if} \quad r \in [\tilde\varepsilon(1-\tau), \tilde\varepsilon(1+\tau) ) \vspace{-0.2em} \\
\tilde\varepsilon & \text{if} \quad r \geq \tilde\varepsilon(1+\tau)
\end{cases}
\end{gather}
where $\tilde\varepsilon > 0$ and $\tau \in (0, 1)$. The function $\psi( r ; \tilde\varepsilon, \tau)$ satisfies three conditions: ({\romannumeral 1}) it is bounded, i.e., $0 \leq \psi(r) \leq \tilde\varepsilon$ for $\forall r \in \R_{\geq 0}$, ({\romannumeral 2}) it is first order continuous, i.e., $\psi \in \mathbb{C}_1$; and ({\romannumeral 3}) its derivative is non-negative, bounded, and vanishing at the tails, that is, $\psi^\prime(r) \in [0, 1]$ for all $r$, and $\psi^\prime (r) = 0$ for $r \geq \tilde\varepsilon(1+\tau)$.
Then, we construct a smooth nonlinear transformation $\Psib : \R^r \mapsto \R^r$:
\begin{gather}
\Psib(\ub_r; \tilde\varepsilon, \tau) = \mb_r + \frac{\psi(\|\ub_r - \mb_r\|; \tilde\varepsilon, \tau)}{\|\ub_r - \mb_r\|} \big( \ub_r -\mb_r \big),
\end{gather}
which transform $\ub_r \in \R^r$ to the set $\mathbb{S}\big(\tilde\varepsilon\big)$. For all $\ub_r \in \mathbb{S}\big(\tilde\varepsilon(1-\tau)\big)$, we simply have $\Psib(\ub_r; \tilde\varepsilon, \tau) = \ub_r-\mb_r$. For any $\ub_r$ in the complement of $\mathbb{S}\big(\tilde\varepsilon(1-\tau)\big)$, the transformation smoothly warps $\ub_r$ into the set $\mathbb{S}(\tilde\varepsilon)$ along the normal direction defined by $\|\ub_r - \mb_r\|^{-1}( \ub_r -\mb_r)$.

\begin{proposition}\label{prop:trust_region}
Suppose that we have an original mapping satisfies Assumption \ref{assum:assum2}. Given $\tilde\varepsilon\leq \varepsilon$ and $0< \tau \ll 1$, we construct an alternative mapping
\begin{gather}
\widetilde\Theta(\ub_r; \ub_\perp) = \Theta_{\rm L}(\ub_r; \ub_\perp) + \Theta_{\rm R}\Big( \Psib(\ub_r; \tilde\varepsilon, \tau); \ub_\perp)\Big),
\end{gather}
Then, the modified mapping $\ub_r \mapsto \widetilde\Theta(\ub_r; \ub_\perp)$ is diffeomorphic for all $\ub_r \in \R^r$ and $\ub_\perp \in {\rm kernel}(\Xb)$.
\end{proposition}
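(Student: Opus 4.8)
The plan is to establish that $\ub_r \mapsto \widetilde\Theta(\ub_r; \ub_\perp)$ is a \emph{global} diffeomorphism of $\R^r$ by invoking the Hadamard--Caccioppoli global inverse function theorem: a $C^1$ map $\R^r \to \R^r$ is a diffeomorphism onto $\R^r$ provided (a) its Jacobian is invertible at every point, so that it is a local diffeomorphism, and (b) it is proper, which here I will verify through coercivity, $\|\widetilde\Theta(\ub_r; \ub_\perp)\| \to \infty$ as $\|\ub_r\| \to \infty$. The continuous differentiability of $\Psib$ (guaranteed by property (ii) of $\psi$) together with that of $\forward$ (Assumption \ref{assum:assum1}) makes $\widetilde\Theta$ a $C^1$ map, so it remains to check (a) and (b).

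For (a), I would differentiate through the composition to obtain
\begin{gather*}
\nabla_{\ub_r}\widetilde\Theta(\ub_r;\ub_\perp) = (\Ib_r + \Sb^2)\Big[\Ib_r + (\Ib_r + \Sb^2)^{-1} \nabla_{\ub_r}\Theta_{\rm R}\big(\Psib(\ub_r); \ub_\perp\big)\, \nabla_{\ub_r}\Psib(\ub_r)\Big],
\end{gather*}
using $\nabla_{\ub_r}\Theta_{\rm L} = \Ib_r + \Sb^2$. Since $\Ib_r+\Sb^2$ is positive definite, invertibility reduces to showing the bracketed matrix is nonsingular, for which it suffices that the largest singular value of the product inside is strictly below $1$. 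I would bound this submultiplicatively by the product of $\sigma_{\rm max}\big((\Ib_r+\Sb^2)^{-1}\nabla_{\ub_r}\Theta_{\rm R}(\Psib(\ub_r);\ub_\perp)\big)$ and $\sigma_{\rm max}(\nabla_{\ub_r}\Psib(\ub_r))$. The first factor is strictly less than $1$: by property (i), $\|\Psib(\ub_r) - \mb_r\| = \psi(\|\ub_r-\mb_r\|) \le \tilde\varepsilon \le \varepsilon$, so $\Psib(\ub_r)$ lies in the closed ball $\overline{\mathbb{S}(\varepsilon)}$, on which the supremum in Assumption \ref{assum:assum2} still remains below $1$ by continuity of the Jacobian. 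The crux is the second factor: writing $\Psib$ as the radial map $\vb \mapsto (\psi(\|\vb\|)/\|\vb\|)\,\vb$ with $\vb = \ub_r-\mb_r$ and $\rho = \|\vb\|$, its Jacobian is the symmetric matrix $\frac{\psi(\rho)}{\rho}\Ib_r + \big(\frac{d}{d\rho}\frac{\psi(\rho)}{\rho}\big)\frac{\vb\vb^\top}{\rho}$, whose eigenvalues are the tangential value $\psi(\rho)/\rho$ and the radial value $\frac{d}{d\rho}(\rho\cdot\frac{\psi(\rho)}{\rho}) = \psi'(\rho)$. Property (iii) gives $\psi'(\rho)\in[0,1]$, and integrating $\psi'\le 1$ from $\psi(0)=0$ yields $\psi(\rho)\le\rho$, hence $\psi(\rho)/\rho\in[0,1]$; both eigenvalues therefore lie in $[0,1]$ and $\sigma_{\rm max}(\nabla_{\ub_r}\Psib)\le1$. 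The product is thus strictly below $1$, so $\Ib_r + (\cdots)$ is invertible by a Neumann-series argument, proving (a).

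For (b), I would exploit that $\Psib$ confines its output to the compact ball $\overline{\mathbb{S}(\tilde\varepsilon)}$, again by property (i). Since $\Theta_{\rm R}(\cdot;\ub_\perp)$ is continuous, $\Theta_{\rm R}(\Psib(\ub_r);\ub_\perp)$ is \emph{uniformly bounded} over all $\ub_r\in\R^r$. On the other hand $\Theta_{\rm L}(\ub_r;\ub_\perp) = (\Ib_r+\Sb^2)\ub_r + \cb$ for a constant $\cb$ depending on $\ub_\perp$, and because the smallest eigenvalue of $\Ib_r+\Sb^2$ is at least $1$, the linear part is coercive. Combining, $\|\widetilde\Theta(\ub_r;\ub_\perp)\| \ge \|(\Ib_r+\Sb^2)\ub_r\| - \|\cb\| - \sup_{\ub_r}\|\Theta_{\rm R}(\Psib(\ub_r);\ub_\perp)\| \to\infty$, which establishes properness. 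Hadamard's theorem then yields that $\ub_r\mapsto\widetilde\Theta(\ub_r;\ub_\perp)$ is a diffeomorphism of $\R^r$, for every $\ub_\perp\in{\rm kernel}(\Xb)$.

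I expect the main obstacle to be the passage from local to global invertibility: the singular-value estimate only delivers a local diffeomorphism, and a naive argument could wrongly infer bijectivity from an everywhere-invertible Jacobian. The trust-region construction is precisely what rescues the global statement, since by freezing the nonlinear remainder outside $\mathbb{S}(\tilde\varepsilon)$ it keeps $\Theta_{\rm R}\circ\Psib$ bounded, letting the coercive linear part dominate so that the properness hypothesis of Hadamard's theorem holds. Verifying $\sigma_{\rm max}(\nabla_{\ub_r}\Psib)\le1$ through the radial decomposition is the other delicate step, as it is what couples the geometry of $\psi$ to the local bound of Assumption \ref{assum:assum2}.
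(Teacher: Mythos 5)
Your proposal is correct, and it is in fact \emph{stronger} than the paper's own argument. The local part (a) coincides with the paper's proof in substance: both reduce invertibility of $\nabla_{\ub_r}\widetilde\Theta$ to the bound $\sigma_{\rm max}\big((\Ib_r+\Sb^2)^{-1}\nabla\Theta_{\rm R}(\Psib(\ub_r);\ub_\perp)\,\nabla\Psib(\ub_r)\big)<1$ via submultiplicativity, Assumption~\ref{assum:assum2}, and the fact that $\nabla\Psib$ is symmetric positive semidefinite with spectrum in $[0,1]$. Your route to that last fact (radial/tangential eigenvalues $\psi'(\rho)$ and $\psi(\rho)/\rho$, with $\psi(\rho)\le\rho$ obtained by integrating $\psi'\le 1$ from $\psi(0)=0$) is cleaner than the paper's three-case analysis over the regions $\mathbb{S}(\tilde\varepsilon(1-\tau))$, the annulus, and the exterior, but proves the same statement; you also handle correctly the small point, glossed in the paper, that $\Psib$ maps into the \emph{closed} ball, using continuity of the Jacobian to keep the supremum below $1$. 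The genuine difference is part (b): the paper concludes ``diffeomorphic'' directly from everywhere-invertible Jacobian plus continuous differentiability, which by itself only yields a \emph{local} diffeomorphism (consider $(x,y)\mapsto(e^x\cos y, e^x\sin y)$). Your invocation of the Hadamard--Caccioppoli theorem, with properness verified through coercivity --- $\Theta_{\rm L}(\ub_r;\ub_\perp)=(\Ib_r+\Sb^2)\ub_r+\cb$ is coercive while $\Theta_{\rm R}\circ\Psib$ is uniformly bounded because $\Psib$ confines its output to a compact ball --- supplies exactly the global ingredient the paper's proof leaves implicit, and it correctly identifies the trust-region construction as the mechanism that makes the global claim true rather than merely the local one. (An equivalent, slightly more elementary alternative would be a Banach fixed-point argument: since $\sup\sigma_{\rm max}\big((\Ib_r+\Sb^2)^{-1}\nabla\Theta_{\rm R}\big)<1$ on $\mathbb{S}(\varepsilon)$ and $\Psib$ is $1$-Lipschitz, the map $\ub_r\mapsto(\Ib_r+\Sb^2)^{-1}\big(\zetab-\cb-\Theta_{\rm R}(\Psib(\ub_r);\ub_\perp)\big)$ is a contraction, giving injectivity and surjectivity at once.) In short: same local analysis, but your global step is a worthwhile addition that the paper's proof actually needs.
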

\begin{proof}
See Appendix \ref{sec:app_a3}.
\end{proof}

\subsection{RTO for Poisson likelihood}
The RTO formulation presented in Proposition \ref{prop:rto} is limited to problems with Gaussian prior and Gaussian observation noise. By transforming non-Gaussian prior densities into Gaussian densities, e.g., \cite{chen2018robust,wang2017bayesian}, this Gaussian prior limitation may be relaxed. 
We employ the importance sampling principle here to present a RTO formulation that can be applied to the Poisson likelihood. %

For fixed hyperparameters $\lambda, \delta, \gamma$, we use the following Gaussian likelihood to approximate the Poisson likelihood, and hence to define the RTO importance density.
We first express the logarithm of the Poisson likelihood as a function of the logarithm of the observable model outputs:
\begin{gather}\begin{aligned}\label{Poisson_log_like}
\log \like(\data|\param,\lambda) = -  \sum_{i = 1}^{m} \log \data_i! - \lambda\,\sum_{i = 1}^{m} \exp(\xib_i) + \sum_{i = 1}^{m} \data_i \big( \log\lambda + \xib_i\big) \;\; {\rm subject\;to} \;\; \xib=\log\forward(\param).
\end{aligned}\end{gather}
Given a reference parameter $\ub_\ast$, we expand $\log \like(\data|\ub,\lambda)$ in a second-order Taylor series about $\xib_\ast=\log\Fb(\ub_\ast)$ and move the higher order terms into the error to obtain
\begin{gather}\begin{aligned}
\log \like(\data_\ast|\ub,\lambda) =& \log \like(\data_\ast|\ub_\ast,\lambda) + (\xib-\xib_\ast)^\top \nabla_{\xi} \log \like(\data_\ast|\ub_\ast,\lambda) \\
& +\frac12(\xib-\xib_\ast)^\top \nabla_{\xib}^2 \log \like(\data_\ast|\ub_\ast,\lambda) (\xib-\xib_\ast) +\mathcal{O}(\Vert\xib-\xib_\ast\Vert^3) \\
=& \log \like(\data_\ast|\ub_\ast,\lambda) - \frac{\lambda}{2} \big\Vert\xib-\data_\ast \big\Vert^2_{\obscov_\ast^{-1}} +\mathcal{O}\big(\big\Vert\xib-\xib_\ast\big\Vert^3\big), \label{eq:like_taylor}
\end{aligned}\end{gather}
where $\data_\ast = \log(\data/\lambda)$ and $\obscov_\ast^{-1}={\rm diag}(\forward(\ub_\ast))$. Dropping the error term in the Taylor series and applying the identity $\xib= \log\Fb(\param)$, we obtain the Gaussian surrogate likelihood
\begin{gather}
\label{eq:s_like}
\log  \like(\data_\ast|\param,\lambda)\approx {\rm const} - \frac{\lambda}{2} \big\Vert \log \Fb(\param)-\data_\ast \big\Vert^2_{\obscov_\ast^{-1}}
\end{gather}
From (\ref{eq:s_like}), we obtain the biasing conditional posterior
\begin{gather}\begin{aligned}
\pi_\ast(\param|\data_\ast,\lambda, \delta, \gamma) & \propto f_\ast(\param|\data_\ast,\lambda, \delta, \gamma) \\
& =
(2\pi)^{-\frac{m}{2}}\,\lambda^{\frac{m}{2}}\,\det\big(\obscov_\ast\big)^{-\frac12}\exp\Big(- \frac{\lambda}{2} \big\Vert \log \Fb(\param)-\data_\ast \big\Vert^2_{\obscov_\ast^{-1}}\Big)\,p_0(\param|\delta, \gamma). \label{eq:s_cond_post}
\end{aligned}\end{gather}

\begin{corollary}
Defining the RTO coupling equation $\Theta(\ub_r; \ub_\perp)$ for sampling the biasing conditional posterior $\pi_\ast(\param|\data_\ast,\lambda, \delta, \gamma)$, the ratio between the unnormalized posterior (with the Poisson likelihood) and the associated RTO density is given by
\begin{gather}\begin{aligned}\label{eq:rto_weight_poisson}
 w(\ub; \lambda, \delta, \gamma) = &  \frac{\det\big(\Sb^2 \!+\! \Ib\big)^{\frac12}\, \lambda^{ {\sum_{i = 1}^{m}\data_i} }}{\det\big( \Ib_r \!+\! \Sb \Yb^\top \nabla_{\ub} \Fb\big( \ub \big) \Xb \big)\prod_{i = 1}^{m} \data_i !}\\
 & \exp\Big( \sum_{i = 1}^{m} \big( \data_i\log\forward_i(\param)- \lambda\forward_i(\param) \big)- \frac{1}{2} \big\| \ub_r \big\|^2 + \frac{1}{2} \big\| \Theta(\ub_r; \ub_\perp) \big\|^2 \Big).
\end{aligned}\end{gather}
The ratio $w(\ub; \lambda, \delta, \gamma)$ is positive almost surely w.r.t. the prior $\pi(\ub | \delta, \gamma)$ and has finite second moment, i.e., $\mathbb{E}_{p_{\rm RTO}}\big[ w(\ub; \lambda, \delta, \gamma)^2 \big] < \infty$.
\end{corollary}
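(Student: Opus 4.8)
The plan is to produce the weight \eqref{eq:rto_weight_poisson} by passing through the Gaussian surrogate and then to reduce both probabilistic claims to results already established. The key observation is that the RTO coupling is constructed for the \emph{biasing} conditional posterior $\pi_\ast$ in \eqref{eq:s_cond_post}, which is a genuine Gaussian-likelihood inverse problem with surrogate data $\data_\ast=\log(\data/\lambda)$, surrogate noise covariance $\lambda^{-1}\obscov_\ast$, and forward map $\log\forward$. Theorem \ref{prop:rto} therefore applies verbatim to $\pi_\ast$: the RTO density $p_{\rm RTO}(\ub|\lambda,\delta,\gamma)$ is the pullback \eqref{eq:rto_density}, and the surrogate ratio $w_\ast:=f_\ast/p_{\rm RTO}$ equals \eqref{eq:rto_weight} with $(\obscov,\data,\forward)$ replaced by $(\obscov_\ast,\data_\ast,\log\forward)$. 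I would then write the target weight as $w=f/p_{\rm RTO}=(f/f_\ast)\,w_\ast$, where $f$ is the Poisson unnormalized posterior \eqref{eq:un_cond_post} with likelihood \eqref{eq:poisson_like}.

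Since the true posterior $f$ and the surrogate $f_\ast$ carry the same prior factor, the ratio $f/f_\ast$ collapses to the Poisson likelihood \eqref{eq:poisson_like} divided by the Gaussian surrogate likelihood \eqref{eq:s_like}. Multiplying by the explicit $w_\ast$ from \eqref{eq:rto_weight} and collecting terms, the Gaussian normalizing constants $(2\pi)^{m/2}$, $\lambda^{m/2}$, $\det(\obscov_\ast)^{\pm1/2}$ and the surrogate misfit factors $\exp\big(\pm\tfrac{\lambda}{2}\|\log\forward(\ub)-\data_\ast\|^2_{\obscov_\ast^{-1}}\big)$ cancel exactly, leaving the Poisson log-likelihood $\sum_i(\data_i\log\forward_i(\ub)-\lambda\forward_i(\ub))$, the constant $\lambda^{\sum_i\data_i}/\prod_i\data_i!$, the quadratic defect $-\tfrac12\|\ub_r\|^2+\tfrac12\|\Theta(\ub_r;\ub_\perp)\|^2$, and the RTO Jacobian determinant; this is exactly \eqref{eq:rto_weight_poisson}. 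Positivity is then immediate: $w=\like(\data|\ub,\lambda)\,(p_0/p_{\rm RTO})$, the likelihood factor is positive wherever $\forward(\ub)>0$, and $p_0/p_{\rm RTO}$ is positive almost surely by the final assertion of Theorem \ref{prop:rto}.

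For the finite-second-moment claim I would reuse $w=(f/f_\ast)\,w_\ast$. Applying Proposition \ref{prop:rto_is} to the Gaussian surrogate $\pi_\ast$ — whose forward map $\log\forward$ inherits the required regularity once $\forward$ is bounded above and below — gives $\mathbb{E}_{p_{\rm RTO}}[w_\ast^2]<\infty$, so it suffices to bound $f/f_\ast=\like(\data|\ub,\lambda)/\like_{\rm surrogate}$ uniformly: then $w\le C\,w_\ast$ pointwise and $\mathbb{E}_{p_{\rm RTO}}[w^2]\le C^2\,\mathbb{E}_{p_{\rm RTO}}[w_\ast^2]<\infty$. This is where Condition 3 of Assumption \ref{assum:assum1} is essential: with $\forward$ nonnegative and bounded, each Poisson factor $\mu^{\data_i}e^{-\mu}/\data_i!$ with $\mu=\lambda\forward_i(\ub)$ ranges over a compact interval and is bounded above, so the Poisson likelihood is bounded above, while the Gaussian surrogate \eqref{eq:s_like} is bounded above and below once $\log\forward$ is controlled. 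The main obstacle, and the step I would treat most carefully, is precisely the behaviour where some $\forward_i(\ub)\to0$, so that $\log\forward_i\to-\infty$ and the surrogate denominator $\like_{\rm surrogate}$ degenerates; ruling this out — via a uniform lower bound on $\forward$ supplied by the modelling, or the trust-region localization of Proposition \ref{prop:trust_region} that keeps the nonlinear part of $\Theta$ on a compact set — is what renders $f/f_\ast$ bounded and closes the estimate.
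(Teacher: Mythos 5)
Your proposal follows essentially the same route as the paper's proof: the paper likewise factors $w = (f/f_\ast)\,w_\ast$ with $w_\ast = f_\ast/p_{\rm RTO}$ given by Theorem \ref{prop:rto} applied to the Gaussian surrogate, identifies $f/f_\ast$ with the factor $\exp\big(\tfrac{\lambda}{2}\|\log\forward(\param)-\data_\ast\|^2_{\obscov_\ast^{-1}} + \sum_{i}(\data_i\log\forward_i(\param)-\lambda\forward_i(\param))\big)$, asserts that Condition 3 of Assumption \ref{assum:assum1} pinches this factor between positive constants $c_1$ and $c_2$, and then concludes positivity and $\mathbb{E}_{p_{\rm RTO}}[w^2]\le c_2^2\,\mathbb{E}_{p_{\rm RTO}}[w_\ast^2]<\infty$ directly from Theorem \ref{prop:rto} and Proposition \ref{prop:rto_is}.

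The one place you go beyond the paper is your scrutiny of that pinching, and your concern is well founded. As $\forward_i(\param)\to 0^+$ the surrogate quadratic term grows like $(\log\forward_i(\param))^2$ and dominates the linear Poisson term $\data_i\log\forward_i(\param)$, so the exponent tends to $+\infty$: the upper bound $c_2$ genuinely fails unless $\forward$ is uniformly bounded away from zero. Condition 3 as literally stated ($\forward(\ub)\in\R^m_{\ge 0}$, $|\forward(\ub)|<\infty$) does not supply this — indeed, in the paper's own PET model $\forward(\ub)=\exp(-\Bb\,\exp(\ub))$ is strictly positive yet has infimum zero over $\R^n$ — so the paper's proof implicitly strengthens the assumption, exactly as you suspect. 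Your proposed repairs (a uniform positive lower bound on $\forward$ on the relevant set, or the trust-region localization of Proposition \ref{prop:trust_region}) are the right kind of fix; in this respect your write-up is more careful about the hypotheses under which the second-moment claim actually holds than the paper's own one-line justification.
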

\begin{proof}
The ratio between the unnormalized conditional posterior and the RTO density is
\begin{gather*}
w(\param;\lambda, \delta, \gamma) = \frac{f(\param|\data,\lambda, \delta, \gamma)}{\pi_\ast(\param|\data_\ast,\lambda, \delta, \gamma)} \frac{\pi_\ast(\param|\data_\ast,\lambda, \delta, \gamma)}{p_{\rm RTO}(\param|\lambda, \delta, \gamma)} = \frac{f(\param|\data,\lambda, \delta, \gamma)}{f_\ast(\param|\data_\ast,\lambda, \delta, \gamma)} \frac{f_\ast(\param|\data_\ast,\lambda, \delta, \gamma)}{p_{\rm RTO}(\param|\lambda, \delta, \gamma)}.
\end{gather*}
Writing $w_\ast(\param|\lambda, \delta, \gamma) = f_\ast(\param|\data_\ast,\lambda, \delta, \gamma)\big/p_{\rm RTO}(\param|\lambda, \delta, \gamma)$ as in \eqref{eq:rto_weight}, the ratio $w$ can be written as
\begin{gather*}
w(\param;\lambda, \delta, \gamma) \propto \exp\Big( \frac{\lambda}{2} \Big\Vert\log\forward(\param)-\data_\ast \Big\Vert^2_{\obscov_\ast^{-1}} + \sum_{i = 1}^{m}  \big( \data_i\log\forward_i(\param)- \lambda\forward_i(\param) \big) \Big) w_\ast(\param;\lambda, \delta, \gamma) .
\end{gather*}
Given Condition 3 of Assumption \ref{assum:assum1}, there exist constants $c_1, c_2 > 0$ such that
\begin{gather*}
c_1 < \exp\Big( \frac{\lambda}{2} \Big\Vert\log\forward(\param)-\data_\ast \Big\Vert^2_{\obscov_\ast^{-1}} + \sum_{i = 1}^{m}  \big( \data_i\log\forward_i(\param)- \lambda\forward_i(\param) \big) \Big) < c_2.
\end{gather*}
Then, the results directly follows from Theorem \ref{prop:rto} and Proposition \ref{prop:rto_is}.
\end{proof}

\begin{remark}
We can use Taylor series expansions w.r.t. different variables to construct the biasing conditional posterior. For example, one can expand w.r.t. $\predata = \forward(\ub)$ instead of $\xib = \log\forward(\ub)$ used here. We choose the Taylor series expansion in \eqref{eq:like_taylor} to reduce the nonlinearity of the forward model used in PET imaging. See Section \ref{sec:PET} for details.
\end{remark} 

\section{The full hierarchical model and RTO-within-Gibbs}
\label{sec:Gibbs_algorithms}

In this section, we extend the hierarchical Gibbs sampler of \cite{bardsley2012mcmc} to sample from the joint posterior (\ref{eq:joint_post}) with nonlinear forward models. To accomplish this, we employ a Metropolis-within-Gibbs strategy with RTO as the proposal distribution and present a new computationally fast way to update the hyperparameter $\gamma$ that controls the correlation length of the prior. 

\subsection{Hyper-prior}
We first define the hyper-priors $p_0(\lambda)$, $p_0(\delta)$, and $p_0(\gamma)$ to fully specify the joint posterior distribution.
\begin{definition}\label{def:hyper}
Following the setup of \cite{bardsley2012mcmc}, we use Gamma distributions as hyper-priors for $\lambda$ and $\delta$, that is, $p_0(\lambda) = \Gamma(\alpha_{\lambda}, \beta_{\lambda})$ and $p_0(\delta) = \Gamma(\alpha_{\delta}, \beta_{\delta})$, which have the density functions 
\begin{gather*}
p_0(\lambda)\propto \lambda^{\alpha_{\lambda}-1}\exp(-\beta_{\lambda}\lambda),\quad{\rm and}, \quad
p_0(\delta)\propto \delta^{\alpha_{\delta}-1}\exp(-\beta_{\delta}\delta),
\end{gather*}
respectively. For $p_0(\gamma)$, we assume a Beta hyper-prior distribution scaled to the domain $[\gamma_{\rm L}, \gamma_{\rm R}]$:
\begin{gather}
p_0(\gamma) \propto \mathds{1}_{[\gamma_{\rm L}, \gamma_{\rm R}]}(\gamma)\,(\gamma - \gamma_{\rm L})^{\alpha_{\gamma}} (\gamma_{\rm R} -\gamma)^{\beta_{\gamma}},
\end{gather}
where $\mathds{1}_{[\gamma_{\rm L}, \gamma_{\rm R}]}(\gamma)$ is the indicator function.
\end{definition}

\begin{definition}
For the Gaussian likelihood function, we have the joint posterior density
\begin{gather}
\begin{aligned}
\label{eq:full_gauss_post}
p(\param,\lambda, \delta, \gamma|\data) & \propto  \mathds{1}_{[\gamma_{\rm L}, \gamma_{\rm R}]}(\gamma) \, (\gamma - \gamma_{\rm L})^{\alpha_{\gamma}} (\gamma_{\rm R} -\gamma)^{\beta_{\gamma}} \; \lambda^{\alpha_\lambda-1+\frac{m}2} \; \delta^{\alpha_\delta-1+\frac{n}2}\; \det \big( \prprech \big)^{\frac12}  \\
& \quad\;\; \exp\Big(-\frac\lambda2\Big\Vert\forward(\param)-\data\Big\Vert^2_{\obscov^{-1}}
-\frac{\delta}{2} \big\Vert\ub-\prmean\big\Vert^2_{\prprech} -\beta_{\lambda}\lambda -\beta_{\delta}\delta\Big).
\end{aligned}
\end{gather}
For the Poisson likelihood function, we have the joint posterior density
\begin{gather}
\begin{aligned}
\label{eq:full_poisson_post}
p(\param,\lambda, \delta, \gamma|\data) & \propto  \mathds{1}_{[\gamma_{\rm L}, \gamma_{\rm R}]}(\gamma) \, (\gamma - \gamma_{\rm L})^{\alpha_{\gamma}} (\gamma_{\rm R} -\gamma)^{\beta_{\gamma}} \; \lambda^{\alpha_\lambda-1+{\sum_{i = 1}^{m}\data_i}} \; \delta^{\alpha_\delta-1+\frac{n}2} \det \big( \prprech \big)^{\frac12} \\
& \quad\;\;  \exp\Big( - \lambda\,\sum_{i = 1}^{m} \forward_i(\param) + \sum_{i = 1}^{m} \data_i \log \forward_i(\param)
-\frac{\delta}{2} \big\Vert\ub-\prmean\big\Vert^2_{\prprech} -\beta_{\lambda}\lambda -\beta_{\delta}\delta\Big).
\end{aligned}
\end{gather}
\end{definition}

\subsection{RTO-within-Gibbs}
A straightforward, at least in theory, MCMC method for sampling from the joint posterior distribution $p(\param,\lambda, \delta, \gamma|\data)$ is to use the Gibbs sampler that cyclically samples from the conditional densities $p(\ub|\yb,\lambda,\delta,\gamma)$, $p(\lambda, \delta |\yb, \ub, \gamma)$, and $p(\gamma|\yb, \ub, \alpha)$:

\paragraph{1. Updating $\ub$ given $(\lambda, \delta, \gamma)$} Since in the nonlinear case it is not possible to sample directly from $p(\ub|\yb,\lambda,\delta,\gamma)$, we use RTO as a MH proposal (see Algorithm \ref{alg:rto_mh}) to update $\ub$ for fixed $(\lambda, \delta, \gamma)$.

\paragraph{2. Updating $(\lambda, \delta)$ given $(\ub, \gamma)$} For fixed $\ub$ and $\gamma$, the conditional density $p(\lambda, \delta |\yb, \ub, \gamma)$ can be written as the product of two Gamma distributions, which take the form
\begin{gather}\label{eq:lambda_delta_conditional}
p(\lambda, \delta |\yb, \ub, \gamma) = p(\lambda|\yb, \ub, \gamma)\,p(\delta |\yb, \ub, \gamma).
\end{gather}
This way, for updating $\lambda$, we have
\begin{gather*}\begin{aligned}
{\rm (Gaussian\;likelihood):}\quad p(\lambda|\yb, \ub, \gamma) & = \Gamma\Big(\alpha_\lambda+\frac{m}2,\; \beta_\lambda + \frac{1}{2}\Vert\Fb(\ub)-\yb\Vert_{\obscov^{-1}}^2\Big), \\
{\rm (Poisson\;likelihood):}\quad p(\lambda|\yb, \ub, \gamma) & = \Gamma\Big(\alpha_\lambda+{\sum_{i = 1}^{m}\data_i},\; \beta_\lambda + \sum_{i = 1}^{m} \forward_i(\param)\Big),
\end{aligned}\end{gather*}
and for updating $\delta$, we have
\begin{gather*}
p(\delta |\yb, \ub, \gamma) = \Gamma\Big(\alpha_\delta+\frac{n}{2},\;\beta_\delta + \frac{1}{2}\big\Vert\ub-\prmean\big\Vert^2_{\prprech}\Big).
\end{gather*}
As a result, we can directly draw samples from the conditional density $p(\lambda, \delta |\yb, \ub, \gamma)$.

\paragraph{3. Updating $\gamma$ given $(\ub, \lambda, \delta)$} For both the Gaussian likelihood and the Poisson likelihood, the conditional distribution $p(\gamma|\yb, \ub, \alpha)$ takes the form
\begin{gather}
\label{eq:gamma_conditional}
p(\gamma|\data,\ub,\lambda,\delta)\propto p_0(\gamma)\,\det \big( \prprech \big)^{\frac12} \exp\Big( -\frac{\delta}{2} \big\Vert\ub-\prmean\big\Vert^2_{\prprech}\Big)
\end{gather}
Since we can not directly sample from $p(\gamma|\yb, \ub, \alpha)$, we present an {\em inverse cumulative distribution function} (inverse CDF) method to explore the conditional distribution.
A key step here is to approximate the function $p(\gamma|\yb, \ub, \alpha)$, in which multiple evaluations of $p(\gamma|\yb, \ub, \alpha)$ is needed.

We exploit the particular structure of the prior precision matrix introduced in Definition \ref{def:prior} to enable the fast evaluation of $p(\gamma|\yb, \ub, \alpha)$. Defining the matrix $\Ab = \bar\Mb^{-1}\Kb$ and denoting its eigenvalues by $\chi_i(\Ab), \ldots, \chi_n(\Ab)$, the determinant in \eqref{eq:gamma_conditional} can be expressed as
\begin{gather*}
\begin{aligned}
(d = 1): \quad \det \big( \prprech \big) & = \det\big(\bar\Mb\big)\, \det\big(\gamma \,\Ib_n + \bar\Mb^{-1} \Kb \big) = \det\big(\bar\Mb\big)\, \prod_{i =1}^n \big(\chi_i(\Ab) + \gamma\big) , \\
(d = 2,3): \quad \det \big( \prprech \big) & = \det\big(\bar\Mb\big)\,\det\big( \gamma^2 \,\Ib_n + 2 \gamma\,\Ab + \Ab^2 \big) = \det\big(\bar\Mb\big)\, \prod_{i =1}^n \big(\chi_i(\Ab) + \gamma\big)^2 .
\end{aligned}
\end{gather*}
Discarding constant terms for fixed $\lambda$, $\delta$, and $\ub$, the conditional distribution can be simplified to
\begin{gather}
\label{eq:gamma_conditional_d1}
p(\gamma|\data,\ub,\lambda,\delta) \propto p_0(\gamma)\,\exp\Big( \frac12\sum_{i =1}^n \log\big(\chi_i(\Ab) + \gamma\big) \,-\frac{\delta\gamma}{2} \big\Vert\ub-\prmean\big\Vert^2_{\bar\Mb} \Big) ,
\end{gather}
for the case $d=1$, and
\begin{gather}
p(\gamma|\data,\ub,\lambda,\delta) \propto p_0(\gamma)\,\exp\Big( \sum_{i = 1}^{n} \log \big(\chi_i(\Ab) + \gamma\big) -\frac{\delta\gamma^2}{2} \big\Vert\ub-\prmean\big\Vert^2_{\bar\Mb} -\gamma\delta \big\Vert\ub-\prmean\big\Vert^2_{\Kb} \Big) ,
\label{eq:gamma_conditional_d2}
\end{gather}
for the cases $d=2,3$. Thus, for any $\ub$ and $(\lambda, \delta)$, the conditional density can be computed at low computational cost---which only needs $\mathcal{O}(n)$ basic arithmetic operations---given the eigenvalues $\chi_i(\Ab)$, for $i = 1, \ldots, n$, are pre-computed before the MCMC simulation.
Then, we can construct the following inverse CDF method can be used to sample from the $p(\gamma|\data,\ub,\lambda,\delta)$. 

\begin{definition}\label{def:inverse_cdf}
Since the hyper-prior random variable $\gamma$ often varies by several order of magnitude, we use the change of variables $\gamma=e^\rho$ to guarantee an accurate approximation of the CDF. This leads to the transformed probability density
\begin{gather}
g(\rho|\data,\ub,\lambda,\delta) = e^\rho \, p(e^\rho|\data,\ub,\lambda,\delta).
\end{gather}
We discretize the interval $\rho \in [\log \gamma_{\rm L}, \log \gamma_{\rm R}]$ using a uniform grid with $n_\rho = 10^3$ grid points.
Then the density $g(\rho|\data,\ub,\lambda,\delta)$ is approximated by a piecewise linear interpolation on this grid, denoted by $\tilde g(\rho|\data,\ub,\lambda,\delta)$.
This way, we obtain a piecewise quadratic approximation to the CDF
\begin{gather}
\tilde G(\rho|\data,\ub,\lambda,\delta) = \int_{\log \gamma_{\rm L}}^{\rho} \tilde g(\rho^\prime|\data,\ub,\lambda,\delta) d\rho^\prime.
\end{gather}
In the inverse CDF method, we draw a random variable $\xi \sim {\rm unifom}(0,1)$, and then compute $\gamma=\exp\big( \tilde G^{-1}(\xi|\data,\ub,\lambda,\delta)\big)$ to obtain a sample from the approximate conditional density
\begin{gather}\label{eq:approx_cond_pdf}
\tilde p(\gamma|\data,\ub,\lambda,\delta) = \frac{1}{\gamma}\,\tilde g(\log \gamma|\data,\ub,\lambda,\delta).
\end{gather}
Then, we can use \eqref{eq:approx_cond_pdf} as a proposal within a MH step to correct for the approximation error.
\end{definition}

\begin{algorithm}[htbp]
\caption{RTO-within-Gibbs for sampling from $p(\ub,\lambda,\delta,\gamma|\data)$} \label{alg:rto_mwh}{}
\begin{algorithmic}[1]
\State Initialize the Markov chain with $\lambda_0,\delta_0,\gamma_0$ and $\ub_0$.
\For {$i=1,\ldots, N$}
\State Find $\ub_\ast$ for $(\lambda_{i-1},\delta_{i-1},\gamma_{i-1})$ and define the RTO density $p_{\rm RTO} (\ub | \lambda_{i-1},\delta_{i-1},\gamma_{i-1})$
\State Set $\ub_i$ = $\ub_{i-1}$ and compute the weight $w(\ub_{i}; \lambda_{i-1},\delta_{i-1},\gamma_{i-1})$
\For {$i=1,\ldots, N_{\rm sub}$}
\State Generate a sample $\ub_\sharp \!\!\sim\! p_{\rm RTO}(\ub| \lambda_{i-1},\delta_{i-1},\gamma_{i-1})$ and compute $w(\ub_\sharp; \lambda_{i-1},\delta_{i-1},\gamma_{i-1})$ 
\State With probability\vspace{-1em}
\begin{gather}
\alpha(\ub_{i},\ub_\sharp)=\min\Big\{1, \frac{w(\ub_\sharp; \lambda_{i-1},\delta_{i-1},\gamma_{i-1})}{w(\ub_{i}; \lambda_{i-1},\delta_{i-1},\gamma_{i-1})} \Big\},\vspace{-1em}
\end{gather}
\quad\quad\;\;\, {\bf accept} $\ub_\sharp$ by setting $\ub_i$ = $\ub_\sharp$.
\EndFor
\State Draw random variables $(\lambda_i,\delta_i)\sim p(\lambda,\delta|\yb,\ub_{i},\gamma_{i-1})$ as defined in Equation \eqref{eq:lambda_delta_conditional}.
\State Draw $\gamma_\sharp \sim \tilde p(\gamma|\yb,\ub_{i},\lambda_i,\delta_i)$ using the inverse CDF method (Definition \ref{def:inverse_cdf}).	
\State With probability\vspace{-1em}
\begin{gather}
\alpha(\gamma_{i-1},\gamma_\sharp)=\min\Big\{1, \frac{p(\gamma_\sharp|\data,\ub_{i},\lambda_i,\delta_i)\,\tilde p(\gamma_{i-1}|\data,\ub_{i},\lambda_i,\delta_i)}{p(\gamma_{i-1}|\data,\ub_{i},\lambda_i,\delta_i)\,\tilde p(\gamma_\sharp|\data,\ub_{i},\lambda_i,\delta_i)} \Big\}%
\end{gather}
\quad\; {\bf accept} $\gamma_\sharp$ by $\gamma_i$ = $\gamma_\sharp$, otherwise {\bf reject} it by by $\gamma_i$ = $\gamma_{i-1}$.
\EndFor
\end{algorithmic}
\end{algorithm}

We call the above procedure of cyclically sampling from the conditional densities the {\it  RTO-within-Gibbs} sampler and summarize it in Algorithm \ref{alg:rto_mwh}. 
In Lines 3--8 of Algorithm \ref{alg:rto_mwh}, we provide the option that taking $N_{\rm sub}$ RTO-MH iterations per RTO-within-Gibbs step to improve the chances of updating the $\ub$-chain. This requires the solution of $N_{\rm sub} + 1$ optimization problems: the first yields the reference parameter $\ub_\ast$ (e.g., the MAP estimator) for $(\lambda_{i-1},\delta_{i-1},\gamma_{i-1})$, and the rest yield $N_{\rm sub}$ RTO-MH iterations. 
In Line 10, given an accurate approximate conditional density $\tilde p(\gamma|\data,\ub,\lambda,\delta)$, the acceptance probability of the $\gamma$-chain can be close to $1$. In such a situation, the behaviour of the MH step here is close to a Gibbs update from the exact conditional $p(\gamma|\data,\ub,\lambda,\delta)$.

\begin{remark}
The inverse CDF method introduced here provides additional modelling flexibilities in the Bayesian inversion, since it can also be applied to sample other hyperparameters when the conditional distribution can not be directly sampled from. For example, we can extend the hyper-prior distributions for $\lambda$ and $\delta$ beyond the current Gamma distribution setting.
\end{remark}

\subsection{Dimension scalability}
Given that Algorithm \ref{alg:rto_mwh} is an extension of the hierarchical Gibbs algorithm of \cite{bardsley2012mcmc} to nonlinear inverse problems, we expect that Algorithm \ref{alg:rto_mwh} exhibits the same dimension scalability issues as the hierarchical Gibbs applied to linear inverse problems. 
As outlined in \cite{agapiou2014analysis}, the correlation in the $\delta$-chain increases as $n\rightarrow\infty$. The exact nature of the dependence between the $\delta$-chain and $n$ is the subject of \cite[Theorem 3.4]{agapiou2014analysis}, where under reasonable assumptions the expected jump size of the $\delta$-chain scales like $2/n$. Specifically, for any $\delta>0$,
\begin{gather*}
\frac{n}{2}\mathbb{E}\big[\delta_{k+1}-\delta_{k}|\delta_{k}=\delta\big]=(\alpha_\delta+1)\delta
-f_n(\delta;\data)\delta^2+\mathcal{O}(n^{-1/2}),
\end{gather*}
where $\mathbb{E}$ denotes expectation and $f_n(\delta;\bb)$ is bounded uniformly in $n$. Moreover, the variance of the step also scales like $2/n$; for any $\delta>0$,
\begin{gather*}
\frac{n}{2}{\rm Var}\big[\delta_{k+1}-\delta_{k}|\delta_{k}=\delta\big]=2\delta^2+\mathcal{O}(n^{-1/2}).
\end{gather*}
A consequence of these results is that the expected squared jumping distance of the Markov chain for $\delta$ is $\mathcal{O}(1/n)$. Moreover, it is noted that the lag-1 autocorrelation of the $\delta$-chain behaves like $1-c/n$ for some constant $c$, but ${\rm Var}(\delta_{k})=\mathcal{O}(1)$. Hence, the Monte Carlo error associated with $N$ draws in stationarity is $\mathcal{O}(\sqrt{n/N})$. {Thus,} the $\delta$-chain becomes increasing correlated as $n\rightarrow\infty$. We will verify this also occurs in nonlinear cases in our numerical experiments in Sections \ref{sec:elliptic} and \ref{sec:PET}.

\section{RTO-within-pseudo-marginal}
\label{sec:Pseudo_algorithms}

To overcome the dimension scaling limit of RTO-within-Gibbs, we will integrate RTO into the pseudo-marginal MCMC \cite{andrieu2009pseudo} to design a new algorithm that jointly update hyperparameters and parameters together. %
The resulting method is named RTO-PM. %

\subsection{RTO-PM MCMC}
Since we will jointly update all hyperparameters together, we group the hyperparameters as $\thetab = (\lambda, \delta, \gamma)$ and denote the hyper-prior density by $p_0(\thetab)$.
We can either adopt the hyper-prior specifications given in Definition \ref{def:hyper} or use more general definitions, as the method presented here does not rely on the Gibbs update in Section \ref{sec:Gibbs_algorithms}.
For a given $\thetab$, suppose we have a RTO density $p_{\rm RTO}(\param|\thetab)$, defined by (\ref{eq:rto_density}), then the marginal likelihood can be expressed as
\begin{gather*}
\like(\data|\thetab) = \int \frac{f(\param|\data,\thetab)}{p_{\rm RTO}(\param\vert\thetab)} p_{\rm RTO}(\param\vert\thetab)\, d\param = \int w(\param;\thetab) p_{\rm RTO}(\param\vert\thetab)\, d\param,
\end{gather*}
where $w(\param;\thetab)$ is defined by either \eqref{eq:rto_weight} or \eqref{eq:rto_weight_poisson}. Thus, we can use the RTO density and importance sampling to estimate the marginal likelihood. In fact, using the pseudo-marginal principle \cite{andrieu2009pseudo} and the importance sampling formula \eqref{eq:marg_like_PM}, we can derive asymptomatically convergent MCMC methods that have the exact marginal posterior,
\begin{gather*}
p(\thetab|\data)=\frac{1}{p(\data)} \like(\data|\thetab) p_0(\thetab),
\end{gather*}
as the invariant density, and simultaneously sample from the joint posterior $p(\param, \thetab|\data)$.

\begin{definition}[pseudo-marginal density]\label{def:PM}
We define a joint importance sampling density
\begin{gather}\label{eq:joint_proposal}
g(\mathcal{U} | \thetab) = \prod_{i = 1}^{K} p_{\rm RTO}(\param^{i}\vert\thetab), \quad \textrm{where} \quad \mathcal{U} = \{\ub^{1}, \ldots, \ub^{K}\},
\end{gather}
Then, drawing a set of random variables $\mathcal{U}$ from $g(\mathcal{U} | \thetab)$, we can compute
\begin{gather*}\label{eq:PM_like}
\like_{K}(\data|\thetab)  \coloneqq \frac{1}{K}\sum_{i = 1}^{K} \frac{f(\param^{i}|\data,\thetab)}{p_{\rm RTO}(\param^{i}\vert\thetab)} = \frac{1}{K}\sum_{i = 1}^{K} w(\param^{i};\thetab),
\end{gather*}
which is the estimator of the marginal likelihood $\like(\data|\thetab)$. This defines the pseudo-marginal density
\begin{gather}
\label{eq:PM}
p_{K}(\thetab|\data)\propto \like_{K}(\data|\thetab)p_0(\thetab).
\end{gather}
\end{definition}

Following the derivation in \cite{andrieu2009pseudo}, the product of the pseudo-marginal density \eqref{eq:PM} and the importance density \eqref{eq:joint_proposal} defines a joint density in the form of
\begin{gather}\label{eq:PM_joint}
p(\thetab, \mathcal{U}) = \frac{1}{p(\data)} \,g(\mathcal{U} | \thetab) \; p_{K}(\thetab|\data)= \frac{p_0(\thetab)}{p(\data)}\, g(\mathcal{U} | \thetab) \; \frac{1}{K}\sum_{i = 1}^{K} w(\param^i;\thetab).
\end{gather}
Marginalizing the joint density $p(\thetab, \mathcal{U})$ over $\mathcal{U}$, we obtain the marginal posterior:
\begin{gather}\begin{aligned}
\int p(\thetab, \mathcal{U}) d\mathcal{U}
& = \frac{p_0(\thetab)}{p(\data)}\,\frac{1}{K}\sum_{i = 1}^{K} \int \frac{f(\param^{i}|\data,\thetab)}{p_{\rm RTO}(\param^{i}\vert\thetab)} \bigg(\int \prod_{j \neq i} p_{\rm RTO}(\param^{j}\vert\thetab) d\ub^j \bigg) p_{\rm RTO}(\param^{i}\vert\thetab) d\ub^{i}\\
& = \frac{1}{p(\data)}\like(\data|\thetab)\,p_0(\thetab).
\end{aligned}
\end{gather}

\begin{proposition}\label{prop:PMMCMC}
Consider that we have a MH method drawing proposal candidates from a distribution $\thetab_\sharp \sim q(\cdot | \thetab )$ and accepting the proposal candidate with the probability
\begin{gather}\label{eq:PM_acc}
\alpha_K(\thetab, \thetab_\sharp) = \min\bigg\{ 1, \frac{p_{K}(\thetab_\sharp|\data) \,q(\thetab | \thetab_\sharp )}{p_{K}(\thetab|\data)\,q(\thetab_\sharp | \thetab )} \bigg\}.
\end{gather}
It constructs an ergodic Markov chain with the marginal posterior $p(\thetab|\data)$ as the invariant density.
\end{proposition}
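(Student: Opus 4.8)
The plan is to realize the $\thetab$-chain with acceptance ratio $\alpha_K$ as the marginal of an ordinary Metropolis--Hastings chain on the extended space $(\thetab, \mathcal{U})$ whose target is the joint density $p(\thetab, \mathcal{U})$ in \eqref{eq:PM_joint}. The key observation I would establish first is that, if we couple the proposal $\thetab_\sharp \sim q(\cdot \vert \thetab)$ with a fresh draw $\mathcal{U}_\sharp \sim g(\cdot \vert \thetab_\sharp)$ of the auxiliary variables, then the extended-space Metropolis--Hastings acceptance ratio collapses exactly to the ratio inside $\alpha_K$. Writing the extended ratio as
\[
\frac{p(\thetab_\sharp, \mathcal{U}_\sharp)\, q(\thetab \vert \thetab_\sharp)\, g(\mathcal{U} \vert \thetab)}{p(\thetab, \mathcal{U})\, q(\thetab_\sharp \vert \thetab)\, g(\mathcal{U}_\sharp \vert \thetab_\sharp)}
\]
and substituting $p(\thetab, \mathcal{U}) = p(\data)^{-1} g(\mathcal{U} \vert \thetab)\, p_{K}(\thetab \vert \data)$ from \eqref{eq:PM_joint}, the factors $g(\mathcal{U}_\sharp \vert \thetab_\sharp)$ and $g(\mathcal{U} \vert \thetab)$ cancel, leaving precisely $p_{K}(\thetab_\sharp \vert \data)\, q(\thetab \vert \thetab_\sharp) \big/ \big[ p_{K}(\thetab \vert \data)\, q(\thetab_\sharp \vert \thetab) \big]$. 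Thus the $(\thetab, \mathcal{U})$-sampler is a bona fide Metropolis--Hastings algorithm targeting $p(\thetab, \mathcal{U})$.

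Given this identification, invariance is immediate: the extended chain satisfies detailed balance with respect to $p(\thetab, \mathcal{U})$ by the standard Metropolis--Hastings argument, so $p(\thetab, \mathcal{U})$ is stationary. I would then invoke the marginalization computation already carried out just above the proposition, which shows $\int p(\thetab, \mathcal{U})\, d\mathcal{U} = p(\data)^{-1} \like(\data \vert \thetab)\, p_0(\thetab) = p(\thetab \vert \data)$. Because the auxiliary variables are regenerated from $g(\cdot \vert \thetab_\sharp)$ at every step, the $\thetab$-component alone is itself a Markov chain, and its stationary distribution is the $\thetab$-marginal of $p(\thetab, \mathcal{U})$, namely the exact marginal posterior $p(\thetab \vert \data)$.

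For ergodicity I would appeal to the positivity of the importance weights established in Theorem \ref{prop:rto} and the analogous result for the Poisson likelihood: since $w(\param; \thetab) > 0$ almost surely with respect to the RTO density, the estimator $\like_{K}(\data \vert \thetab)$ is strictly positive almost surely, so $p_{K}(\thetab \vert \data) > 0$ wherever $p_0(\thetab) > 0$, and the acceptance ratio is well defined and not identically zero. Under the mild standing assumption that the proposal $q(\cdot \vert \thetab)$ renders the induced $\thetab$-chain $\phi$-irreducible and aperiodic on the support of $p(\thetab \vert \data)$ -- for example a random-walk or independence proposal with full support over the hyperparameter domain -- the standard Metropolis--Hastings ergodic theorem then yields convergence to $p(\thetab \vert \data)$.

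I expect the main obstacle to be the ergodicity claim rather than the invariance. Invariance is essentially bookkeeping once the extended-space reformulation is in place, but ergodicity requires verifying irreducibility and aperiodicity, which genuinely depend on the choice of $q$ and on the almost-sure positivity of the weights. The compactly supported hyper-priors of Definition \ref{def:hyper} -- in particular the Beta prior on $\gamma$ restricted to $[\gamma_{\rm L}, \gamma_{\rm R}]$ -- together with the positivity from Theorem \ref{prop:rto} are exactly what make these conditions checkable, so I would structure the argument to lean on those two facts.
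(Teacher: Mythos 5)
Your proposal is correct and follows essentially the same route as the paper: identifying the acceptance ratio with an extended-space Metropolis--Hastings ratio targeting $p(\thetab,\mathcal{U})$ via cancellation of the $g$ factors, and then invoking the marginalization $\int p(\thetab,\mathcal{U})\,d\mathcal{U}=p(\thetab|\data)$. If anything, your treatment is more careful than the paper's, which asserts ergodicity without spelling out the irreducibility/aperiodicity conditions and the almost-sure positivity of $\like_K(\data|\thetab)$ that you correctly flag as the genuinely assumption-dependent part of the claim.
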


\begin{proof}
Since we have the joint distribution $p(\thetab, \mathcal{U})=p(\data)^{-1}\like_{K}(\data|\thetab)\,p_0(\thetab)\, g(\mathcal{U}| \thetab)$ and $p(\data)$ is a constant, the Metropolis-Hastings ratio in the acceptance probability takes the form
\begin{gather*}
\frac{p_{K}(\thetab_\sharp|\data) \,q(\thetab | \thetab_\sharp )}{p_{K}(\thetab|\data)\,q(\thetab_\sharp | \thetab )}
= \frac{\like_K(\data|\thetab_\sharp) \,p_0(\thetab_\sharp)\,q(\thetab | \thetab_\sharp )}{\like_K(\data|\thetab) \,p_0(\thetab)\,q(\thetab_\sharp | \thetab )}  = \frac{p(\thetab_\sharp, \mathcal{U}_{\sharp})\,g(\mathcal{U}| \thetab) \,q(\thetab | \thetab_\sharp )}{p(\thetab, \mathcal{U})\,g(\mathcal{U}_{\sharp}| \thetab_\sharp)\,q(\thetab_\sharp | \thetab )}
\end{gather*}
This effectively defines a MH method that samples the joint distribution $p(\thetab, \mathcal{U})$ using the proposal $g(\mathcal{U}_{\sharp}| \thetab_\sharp)\,q(\thetab_\sharp | \thetab )$.
Since the joint distribution $p(\thetab, \mathcal{U})$ has $p(\thetab|\data)$ as its marginal, the result follows.
\end{proof}

The resulting pseudo-marginal method for sampling from $p(\thetab|\data)$ is given in Algorithm \ref{alg:rto_pm}. 
Note that we can optionally save the RTO sample set $\mathcal{U}_i$ and weight set $\mathcal{W}_i$ (Lines 10 and 12) to use them in importance sampling for estimating expectations over the joint posterior and draw parameter samples from the joint posterior (Line 13). In the latter case, one can randomly draw a sample $\ub_i$ from the set $\mathcal{U}_i$ according to the categorical distribution defined by the weights $\mathcal{W}_i$.

\begin{algorithm}[htbp]
\caption{RTO-pseudo-marginal for sampling from $p(\thetab|\data)$} \label{alg:rto_pm}{}
\begin{algorithmic}[1]
\State Initialize $\thetab_0$, find a corresponding reference parameter $\ub_\ast$, and define $p_{\rm RTO} (\ub | \thetab_0)$
\State Compute a RTO sample set $\mathcal{U}_0\equiv\{\ub_0^{j}\}_{j=1}^K$ and the weight set $\mathcal{W}_0\equiv \{w(\ub_0^{j}; \thetab_0)\}_{j = 1}^{K}$
\State Evaluate pseudo-marginal density $\pi_K(\thetab_0| \data)$ using (\ref{eq:PM})
\For {$i=1,\ldots, N$}
    \State Draw a proposal candidate $\thetab_\sharp \sim q(\cdot|\thetab_{i-1})$
    \State Find a reference parameter $\ub_\ast$ for $\thetab_\sharp$ and define the RTO density $p_{\rm RTO} (\ub | \thetab_\sharp)$
	\State Compute a RTO sample set $\mathcal{U}_\sharp =\{\ub^{j}_\sharp\}_{j=1}^K$ and the weight set $\mathcal{W}_\sharp = \{w(\ub_\sharp^{j}; \thetab_\sharp)\}_{j = 1}^{K}$
	\State Evaluate the pseudo-marginal density $p_K(\thetab_\sharp|\yb)$ using (\ref{eq:PM})
	\State With probability $\alpha_K(\thetab_{i-1},\thetab_\sharp)$ defined in \eqref{eq:PM_acc}, {\bf accept} by setting $\thetab_i = \thetab_\sharp$
	\State \quad \quad {\bf Optional:} set $\mathcal{U}_i= \mathcal{U}_\sharp$ and $\mathcal{W}_i = \mathcal{W}_\sharp$,
	\State Otherwise {\bf reject} by setting $\thetab_i = \thetab_{i-1}$
	\State \quad \quad {\bf Optional:} set $\mathcal{U}_i= \mathcal{U}_{i-1}$ and $\mathcal{W}_i = \mathcal{W}_{i-1}$
	\State {\bf Optional:} Draw $\ub_i \in \mathcal{U}_i$ according to the categorical distribution defined by $\mathcal{W}_i$
\EndFor
\end{algorithmic}
\end{algorithm}

\subsection{Computational remarks}\label{remark:pm_comp}
The generation of the RTO samples per RTO-PM step (in Lines 6--8) requires the solution of $K + 1$ optimization problems: the first yields the reference parameter $\ub_\ast$ (e.g., the MAP estimator) for $\thetab_{i-1}$, and the rest are used for computing the pseudo-marginal density.
In our implementation of RTO-PM, we use the adaptive Metropolis proposal distribution \cite{MCMC:HST_2001} in Line 5. Moreover, in RTO-PM, we do not have to compute the determinant of the prior precision matrix (for updating $\gamma$) as we did in the Gibbs case. Instead, we only need to compute the determinants of $r \times r$ matrices in either Equation \eqref{eq:rto_weight} or Equation \eqref{eq:rto_weight_poisson}.
The pseudo-marginal density $p_{K}(\thetab|\data)$ is a random variable, in which a larger sample size $K$ results in an estimate of the marginal density $p(\thetab|\data)$ with a smaller variance compared to that obtained using a smaller $K$.
Thus, using a larger $K$ may result in better statistical efficiency for exploring hyperparameters---which can be measured by the {\em integrated autocorrelation time} (IACT), see Remark \ref{remark:iact} below---compared with that of a smaller $K$. 
However, the computing cost of $p_{K}(\thetab|\data)$ increases linearly with $K$, while the improvement of the statistical efficiency will not follow the same rate. 
We refer the readers to \cite{andrieu2016establishing,doucet2015efficient} for a detailed discussion on this topic and only provide an interpretation as follows. 
Using the ``oracle'' acceptance probability
\begin{gather}\label{eq:M_acc}
\bar\alpha(\thetab, \thetab_\sharp) = \min\bigg\{ 1, \frac{p(\thetab_\sharp|\data) \,q(\thetab | \thetab_\sharp )}{p(\thetab|\data)\,q(\thetab_\sharp | \thetab )} \bigg\},
\end{gather}
we can define a standard MCMC method targeting the marginal density $p(\thetab|\data)$.
Since the RTO-PM acceptance probability $\alpha_K(\thetab, \thetab_\sharp) \rightarrow \bar\alpha(\thetab, \thetab_\sharp)$ as $K \rightarrow \infty$, the statistical efficiency of RTO-PM will reach that of the standard MCMC and cannot improved further. 
This way, for sufficiently large $K$, we expect that the error in the pseudo-marginal density will has negligible impact on the statistical efficiency for exploring $\thetab$. 
Under restrictive assumptions, the result of \cite{doucet2015efficient} offers a rule-of-thumb for choosing the sample size: the sample size $K$ can be chosen such that the standard deviation of the log-pseudo-marginal density, $\text{var}[\log p_{K}(\thetab|\data)]^{\frac12}$, is approximately $0.92$. 

\begin{remark}\label{remark:iact} The integrated autocorrelation time is computed from the autocorrelation function (ACF) of a Markov chain. For a Markov chain $\{\delta_k\}_{k=1}^N$, the ACF is estimated as
\begin{gather}
\label{ACF}
\hat\rho(j)= C(j)/C(0),
\quad{\rm where}\quad
C(j)=\frac{1}{N-j}\sum_{k=1}^{N-j} (\delta_k-\bar\delta)(\delta_{k+|j|}-\bar\delta),
\end{gather}
where $\bar\delta=\frac1N\sum_{k=1}^N\delta_k$. The faster $\hat\rho(j)$ decays to zero, the less correlated is the $\delta$-chain.  The IACT of $\{\delta_k\}_{k=1}^N$, denoted by $\tau_{\rm int}(\delta)$, is estimated as the summation of the truncated ACF (see \cite{bardsley2018book,liu2008monte} for details). The faster (slower) $\hat\rho(j)$ decays to zero, the smaller (larger) will be the IACT. %
\end{remark}

\begin{remark}%
With $K = 1$, the acceptance probability of RTO-PM can be expressed as
\begin{gather*}
\begin{aligned}
\alpha(\thetab,\thetab_\sharp)
&=\min\bigg\{1,\frac{w(\param_\sharp;\thetab_\sharp)\,p_0(\thetab_\sharp)\,q(\thetab|\thetab_\sharp)}{w(\param_{i-1};\thetab)\,p_0(\thetab)\,q(\thetab_\sharp|\thetab)}\bigg\} \\
&=\min\bigg\{1,\frac{f(\param_\sharp|\yb,\thetab_\sharp)\,p_0(\thetab_\sharp)\,p_{\rm RTO}(\ub_{i-1}|\thetab)\,q(\thetab|\thetab_\sharp)}{f(\ub_{i-1}|\yb,\thetab)\,p_0(\thetab)\,p_{\rm RTO}(\ub_\sharp|\thetab_\sharp)\,q(\thetab_\sharp|\thetab)}\bigg\} \\
&= \min\bigg\{1,\frac{p(\param_\sharp,\thetab_\sharp|\yb)\,p_{\rm RTO}(\ub_{i-1}|\thetab)\,q(\thetab|\thetab_\sharp)}{p(\ub_{i-1},\thetab|\yb)\,p_{\rm RTO}(\ub_\sharp|\thetab_\sharp)\,q(\thetab_\sharp,\thetab)}\bigg\}.
\end{aligned}
\end{gather*}
This way, the RTO-PM is equivalent to a MH method that uses a proposal $p_{\rm RTO}(\ub_\sharp|\thetab_\sharp)\,q(\thetab_\sharp,\thetab)$ to sample the joint posterior $p(\param,\thetab|\yb)$.
Thus, for $K =1$, the RTO-PM can also be viewed as the nonlinear extension of the one-block-update (see \cite {fox2016fast,rue2005gaussian,saibaba2019} for example) used in linear inverse problems.
In the linear case, the conditional posterior $p(\param|\yb,\thetab)$ is Gaussian and can be directly sampled, whereas here we sample from $p(\param|\yb,\thetab)$ using an MH step with RTO proposal.
\end{remark}

\subsection{Dimension scalability}
\label{remark:pm_dim}
Algorithm \ref{eq:PM} is the nonlinear analogue of sampling from the marginal density for linear inverse problems, as found in \cite[Section 5.3]{bardsley2018book} and \cite{fox2016fast,saibaba2019}, where it is shown that sampling from the marginal density removes the dimension scalability issues for the Gibbs sampler described at the end of Section \ref{sec:Gibbs_algorithms}. We expect the same result in the nonlinear case when using RTO-PM, since we do not simulate a Markov chain in the high-dimensional model parameter ($\ub$) space. The estimate of the marginal likelihood depends on the weight, which has an infinite dimensional limit, as shown in \cite {wang2019scalable}. So we expect that the RTO-PM is robust to model parameter dimension, i.e., its sampling performance should not deteriorate with increasing model parameter dimension. We will verify this in our numerical experiments in Sections \ref{sec:elliptic} and \ref{sec:PET}.

\section{Example 1: elliptic inverse problem}
\label{sec:elliptic}

The first example is on the one dimensional PDE-constrained inverse problem with Gaussian measurement noise. 

\subsection{Setup and inversion results}
We aim to estimate the log-diffusion coefficient $u(s)$ from measurements of the potential function $x(s)$ of the Poisson equation
\begin{gather}
\label{HeatIP}
-\frac{d}{ds}\Big( \exp\big(u(s)\big) \frac{d x}{ds}\Big)=f(s),\quad s\in \Omega := [0,1], 
\end{gather}
with boundary conditions $x(0)=x(1)=0$. After numerical discretization, this yields 
\begin{gather}
\label{HeatIPdisc}
\Bb(\ub)\xb=\fb,
\end{gather}
where $\ub \in \R^n$, $\xb\in \R^n$, and $\fb\in \R^n$ are discretizations of $u$, $x$, and $f$; and $\Bb \in \R^{n \times n}$ is the stiffness matrix with imposed zero Dirichlet boundary conditions. 

We adopt the setup presented in \cite{bardsley2018book}. The measurements of the potential function are taken at $63$ equally spaced discrete locations in $[0,1]$ and the observation operator $\Hb \in \R^{63 \times n}$ is used to map the discretized potential function $\xb$ to observables. 
We generate two data sets corresponding to two scaled Dirac delta forcing functions:
\begin{gather*}
f_1(s)=1000\cdot\delta(s-1/3) \quad{\rm and}\quad f_2(s)=1000\cdot\delta(s-2/3),
\end{gather*}
and denote their discretized versions by $\fb_1$ and $\fb_2$, respectively.
Assuming the measurements are corrupted by zero-mean i.i.d. Gaussian noise, the measurement process can be written as
\begin{gather}\label{eq:testcase1}
\data \sim \N(\predata, \lambda^{-1} \obscov),  \quad {\rm subject\;to} \quad \predata=\forward(\param),
\end{gather}
where
\begin{gather}\nonumber
\Fb(\ub):=\left[\begin{array}{c}\Hb\,\Bb(\ub)^{-1}\fb_1 \\ \Hb\,\Bb(\ub)^{-1}\fb_2\end{array}\right], \quad \obscov = \Ib_m, \quad {\rm and}\quad \data, \predata \in \R^{m},  
\end{gather}
with $m = 126$. We generate synthetic data using \eqref{eq:testcase1} with the ``true'' log-diffusion coefficient
\begin{gather*}
u_{\rm true}(s)= \min\left\{1,1 - 0.5\,\sin(2\pi (s-0.25))\right\},
\end{gather*}
discretized forward model with  $n=8192$, and $\lambda^{-1}$ corresponding to a signal-to-noise ratio of 100, i.e., 1\%  noise. 
The corresponding data vectors are plotted in the top left plot of Figure \ref{fig:ellptic} together with the noise-free data $\Hb\,\Bb(\ub)^{-1}\fb_1$ and $\Hb\,\Bb(\ub)^{-1}\fb_2$. 

For solving the inverse problem, we employ the prior defined by the Laplace-like stochastic partial differential equation in Section \ref{sec:prior} with the precision matrix $\prprech = (\gamma \,\bar\Mb + \Kb )$. We employ the hyper-prior distribution specified in Definition \ref{def:hyper}. For $p_0(\lambda)$ and $p_0(\delta)$, we set $\alpha_{\lambda}=\alpha_{\delta}=1$ and $\beta_{\lambda}=\beta_{\delta}=10^{-4}$, which have been shown to be effective on a variety of test cases. For $p_0(\gamma)$, we choose, $\alpha_\gamma = 0$, $\beta_\gamma = 4$, $\gamma_{\rm L} = 10^{-5}$ and $\gamma_{\rm R} = 10$.
We setup three scenarios with $n = 256,1024$, and $4096$ to test the dimension scalability of our proposed methods. 
\begin{figure}[h!]
\centering
\includegraphics[trim = 1.6em 2em 0em 1em , width = 0.4\textwidth]{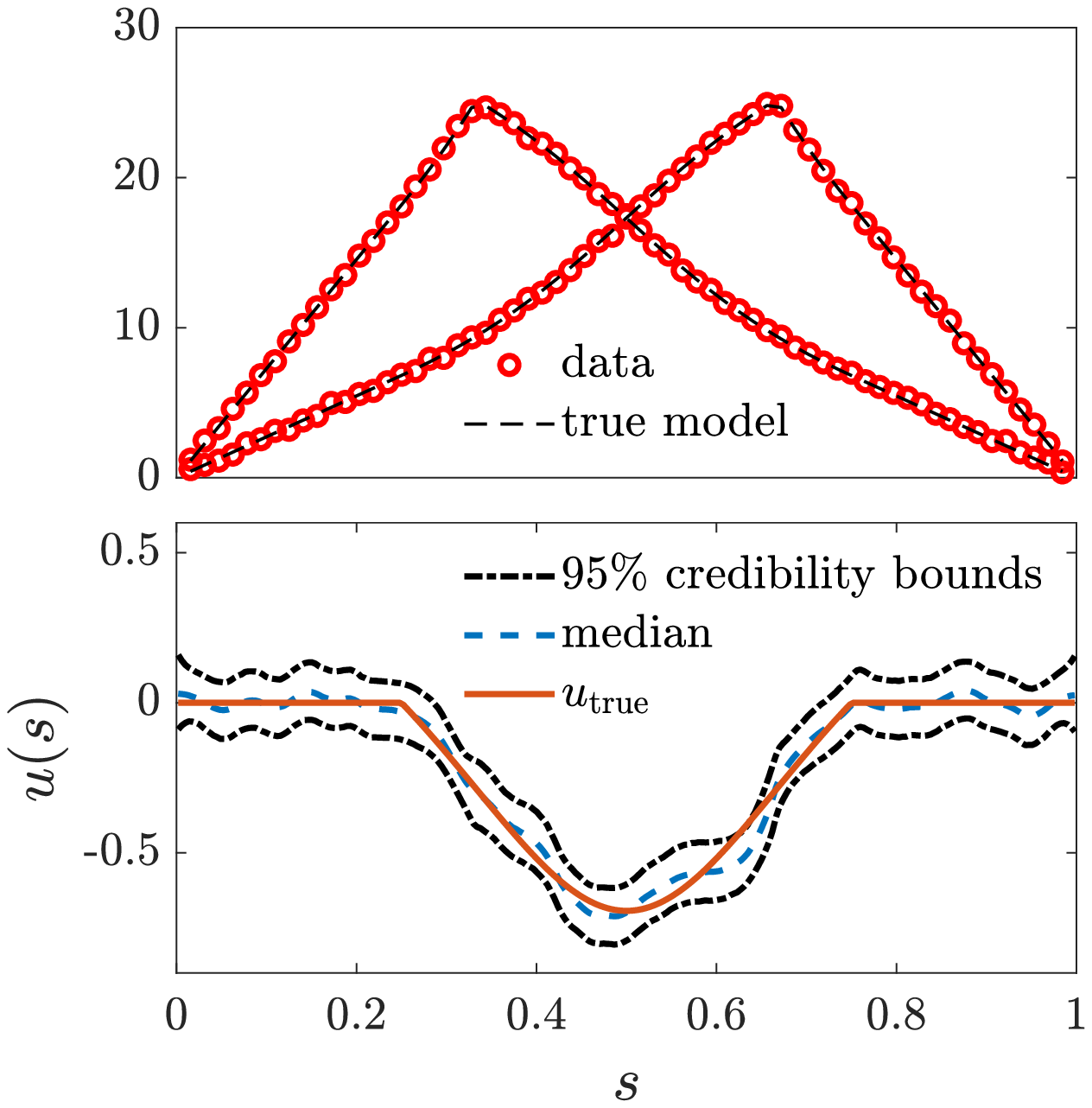}
\includegraphics[trim = 2em 2.3em 0em 1em , width = 0.4\textwidth]{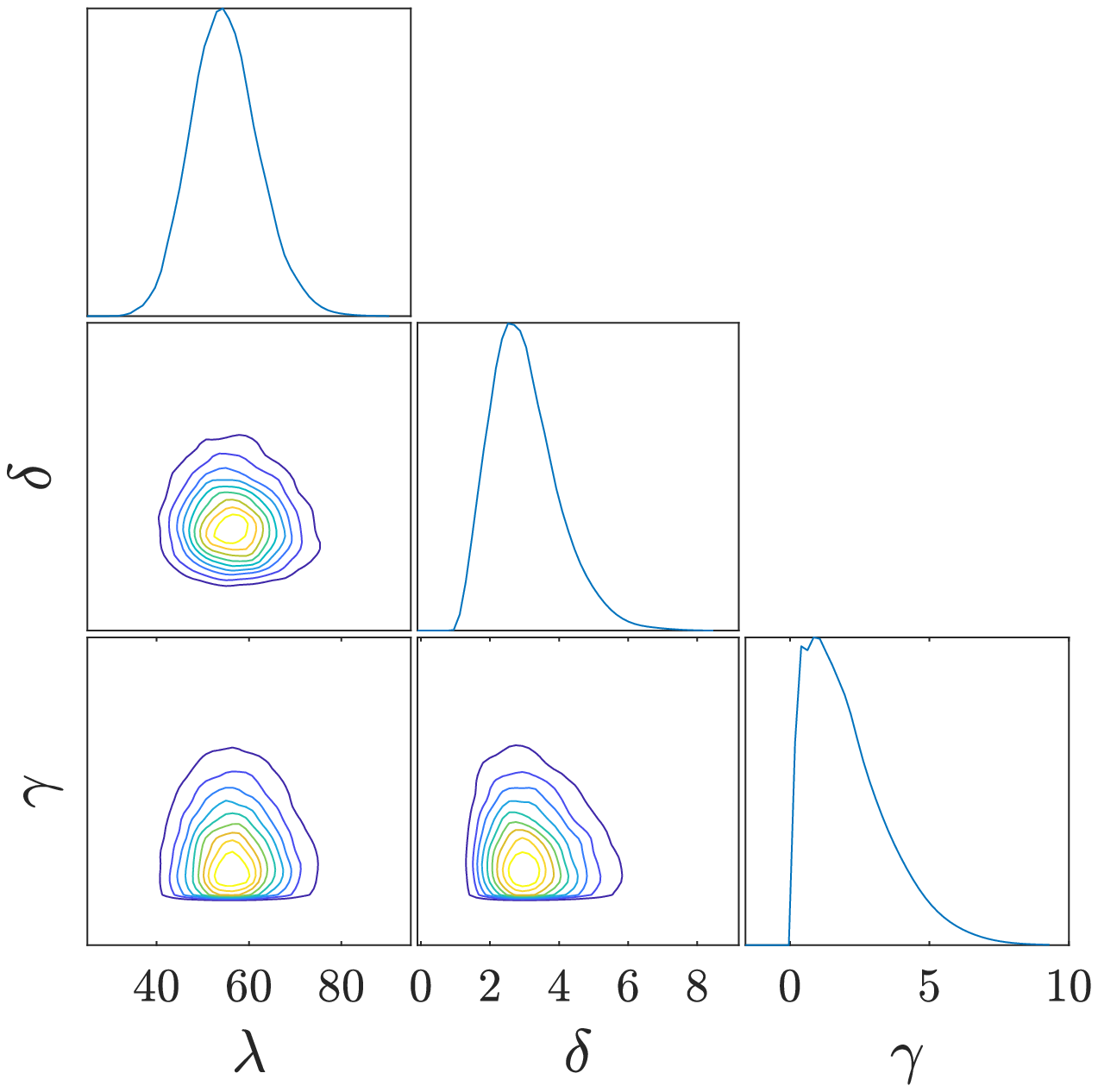}
\caption{Top left: the measured data for the elliptic PDE problem. The remaining plots were computed using RTO-within-Gibbs with $n = 256$. Bottom left: the median and 95\% credibility bounds of $\ub$-samples. Right: the estimated marginal distributions and the pairwise marginal distributions for the hyperparameters.}\label{fig:ellptic}
\end{figure}
Before discussing the numerical experiments, we present the inversion results obtained using RTO-within-Gibbs with the scenario $n = 256$ in Figure \ref{fig:ellptic}: the bottom left plot shows the sample median and 95\% credibility bounds of log-diffusion coefficient; and the right plot shows the marginal densities and the pairwise marginal densities for these hyper-parameters. 
The median, 95\% credibility bounds, and plots of marginal densities obtained for other scenarios and sample methods are similar to those of $n = 256$, and hence are not reported for brevity.

\subsection{Sampling performance}
We first discuss the sampling performance of RTO-within-Gibbs (Algorithm \ref{alg:rto_mwh}) with $N_{\rm sub} = 1$.
\begin{figure}[htbp]
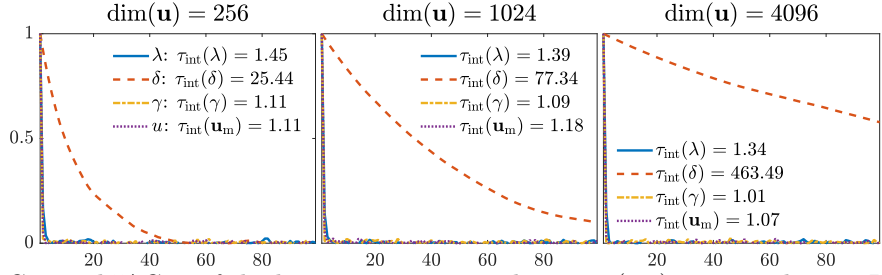

\centering
\pic{.8}{pics/gibbs_iact}\vspace{-1em}
\caption{ACFs and IACTs of the hyperparameters and $\ub_{\rm m} = u(0.5)$ computed using RTO-within-Gibbs with $n=256, 1024$, and $4096$.}\label{fig:ellptic_gibbs}
\end{figure}
We compute chains of length 18,000 (after burn-in has been removed) for the three scenarios and report the ACFs and IACTs of the hyperparameter chains and the $\ub_{\rm m}$-chain (which is the Markov chain of the model parameter at $s = 0.5$) in Figure \ref{fig:ellptic_gibbs}. 
We observed that the sampling performance of the RTO-within-Gibbs deteriorates as the parameter dimension increases---with increasing $n$, the IACT of the $\delta$-chain increases at roughly the same rate---as we expected in Section \ref{sec:Gibbs_algorithms}. 
We also observed that the IACTs of the $\lambda$-chain, $\gamma$-chain, and the $\ub_{\rm m}$-chain do not change with the parameter dimensions. 
For all three scenarios, the acceptance rate of the RTO sampling (Lines 4--8 in Algorithm \ref{alg:rto_mwh}) has an is about $96\%$.

In our second experiment, we apply the RTO-PM (Algorithm \ref{alg:rto_pm}) with the sample sizes $K=1$ and $K=5$ in the computation of the pseudo-marginal density \eqref{eq:PM}.   
The ACFs and IACTs of the resulting Markov chains are reported in the top row ($K=1$) and the bottom row ($K=5$) of Figure \ref{fig:ellptic_pm}. 
\begin{figure}[htbp]
\centering
\includegraphics[trim = 0em 1.7em 0em 1em , width = 0.8\textwidth]{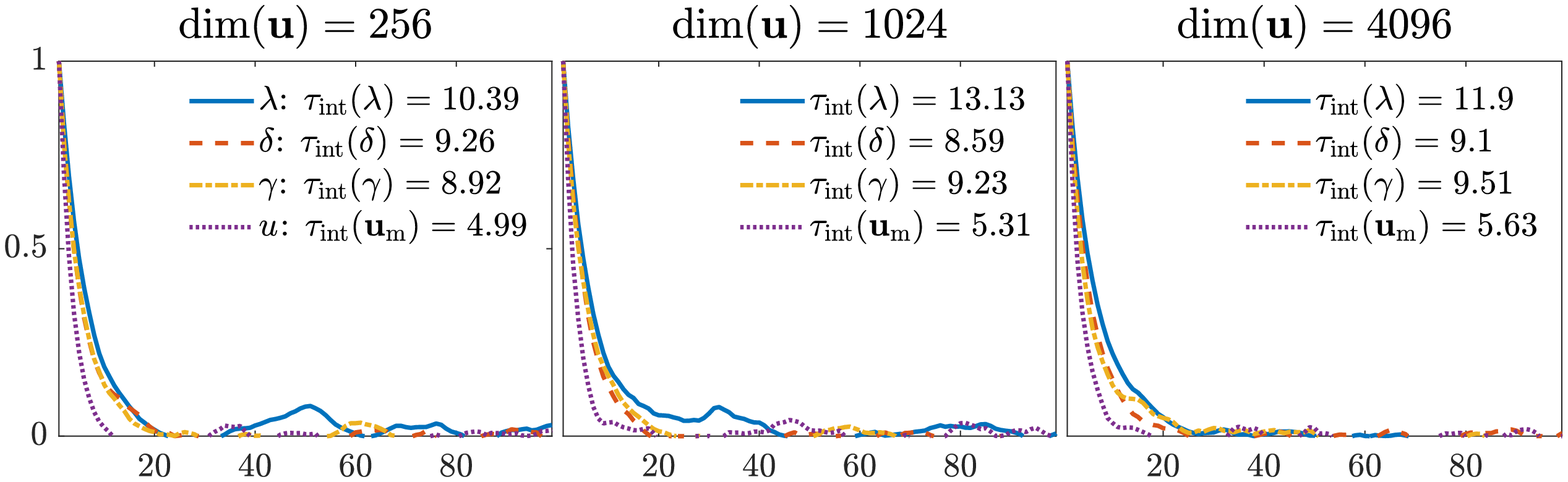}\\
\includegraphics[trim = 0em 0em 0em 3em,clip, width = 0.8\textwidth]{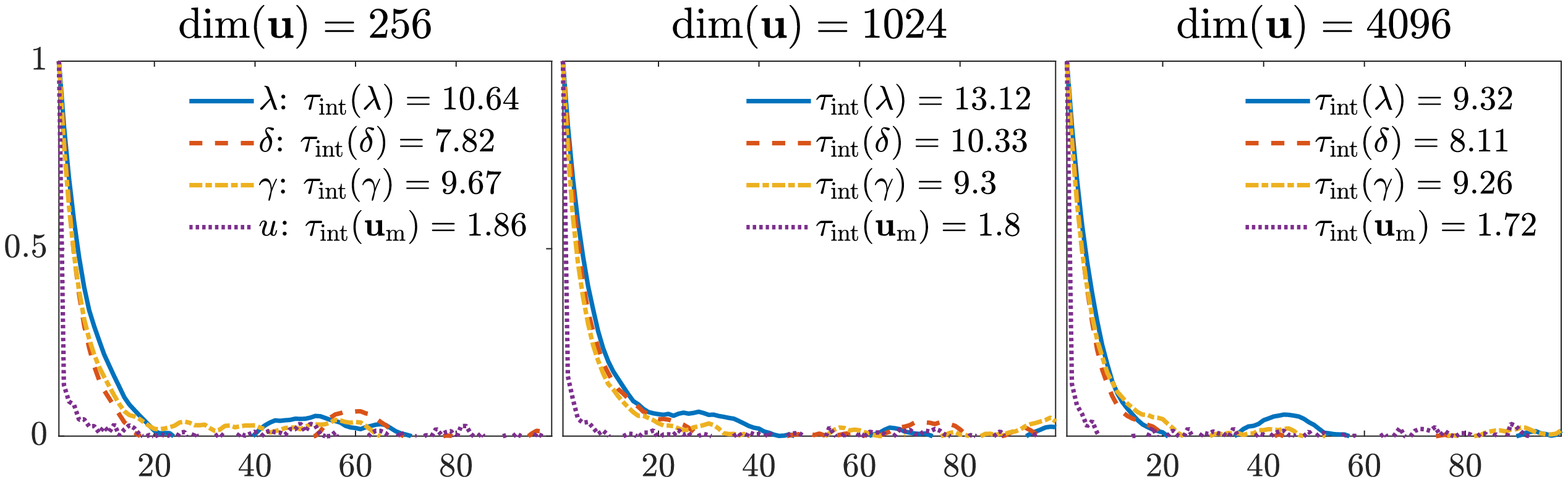}\vspace{-0.5em}
\caption{ACFs and IACTs of the hyperparameters and $\ub_{\rm m} = u(0.5)$ computed using RTO-PM with $K=1$ (top row) and $K=5$ (bottom row) for $n=256, 1024$, and $4096$.}\label{fig:ellptic_pm}
\end{figure}
Note that the case $K=1$ is the nonlinear extension of the one-block-update \cite{rue2005gaussian}. 
In both cases of this experiment, we observed that the Markov chains generated by RTO-PM do not exhibit the dimension scalability issues, where IACTs of the hyper-parameters stabilizes as $n$ increases. This result agrees with our expectation noted in Section \ref{remark:pm_dim}. 
Furthermore, the computational cost per RTO-MH step (with $K=1$) is the same as the computational cost per RTO-within-Gibbs step (with $N_{\rm sub} = 1$). Thus, it is clear that RTO-PM is both statistically and computationally more efficient than RTO-within-Gibbs in this example.

\begin{figure}[h!]
\centering
\pic{.4}{pics/var_log_pm}\vspace{-1em}
\caption{The box plot of $\text{var}\big[\log p_{K}(\thetab|\data)\big]^{\frac12}$ for various $K$ (horizontal axis) and posterior hyperparameters (vertical axis).}\label{fig:ellptic_var_pm}
\end{figure}

In the second experiment, we also observed that the ACFs of the hyperparameter chains computed by the one-block-update (RTO-PM with $K=1$) is similar to those computed by RTO-PM with $K=5$. 
To analyze the impact of the sample size $K$, we estimated the standard deviation of the log-pseudo-marginal density, $\text{var}[\log p_{K}(\thetab|\data)]^{\frac12}$, for various sample sizes and various posterior samples of $\thetab$. 
The box plot of estimated values are shown in Figure \ref{fig:ellptic_var_pm}, in which $\text{var}[\log p_{K}(\thetab|\data)]^{\frac12}$ is below $0.1$ for all $K$ and $\thetab$. 
This suggests that, in this example, the RTO density is a close approximation to the conditional posterior density, and using $K=1$ is sufficient for exploring the hyperparameters (see Section \ref{remark:pm_comp}). 

In contrast to the similar ACFs of the hyperparameter chains, the IACTs of the $\ub_{\rm m}$-chains obtained by $K=5$ is approximately three times lower than that of $K=1$. 
This suggests that although using a larger $K$ may not further improve the sampling efficiency of the hyperparameters, it can be used for better exploring the model parameter space. 
However, to the authors' knowledge, the impact of sample size $K$ on the efficiency of exploring model parameters has not been investigated. This could be a potential future research topic.

\section{Example 2: PET imaging}
\label{sec:PET}
The second example is a two dimensional PET imaging problem with Poisson observed data. 
In PET imaging, gamma rays travel from sources to detectors through an object of interest, and then the detectors measure the intensities of gamma rays from multiple sources via the counting of incident photons.
Given the domain of interest $\Omega$, we aim to reconstruct the density of the object
from the counting data recorded at the detectors. 

\subsection{Setup and inversion results}
We assume the unknown density of the object is positive and follows a log-normal prior distribution. This way, we have the density function represented as 
$\exp(u(s))$ for $s \in \Omega$, where $u(s)$ is unbounded and follows a Gaussian prior. 
Then, the change of intensity of an gamma ray along the path, $\ell_i(s), s \in \Omega$, can be modelled using Beer's law:
\begin{gather}\label{eq:beers}
I_{d, i}  = I_{s,i} \exp \Big( - \int_{\ell_{i}(s)} \exp \big( u(s) \big)  d s \Big),
\end{gather}
where $I_{d,i} \in \R_{\geq 0}$ and $I_{s,i} \in \R_{\geq 0}$ are the intensities at the detector and at the source, respectively.
We assume that all the gamma ray sources have the same intensity, $I_{s,i} = \lambda$ for $i = 1, \ldots, m$, in which $\lambda$ is also unknown and will be estimated in the inverse problem as a hyperparameter. 

In this case, the domain $\Omega$ is discretized into a regular grid with $n$ cells and the logarithm of the density is assumed to be piecewise constant. 
This yields the discretized parameter $\ub \in \R^n$. %
The line integrals in \eqref{eq:beers} are approximated by
\begin{gather*}
\int_{\ell_{i}(s)} \exp ( u(s) )  d s \approx \sum_{j = 1}^{n} \Bb_{ij} \, \exp(\ub_j),
\end{gather*}
where $\Bb_{ij} \in \R_{\geq 0}$ is the length of the intersection between line $\ell_{i}$ and cell $j$, and $\exp(\ub_j)$ is the discretized density in cell $j$.
Suppose we have a total of $m$ number of gamma ray paths and the corresponding matrix $\Bb \in \R^{m \times n}$, 
the forward model $\forward: \R^{n} \rightarrow \R^{m}$ can be defined as
\begin{gather}
\forward(\ub) := \exp( - \Bb \, \exp(\ub) ).
\label{eq:forward_xray}
\end{gather}
Since the matrix $\Bb$ has non-negative entries, the forward model outputs are non-negative and bounded, and thus Condition 3 of the Assumption \ref{assum:assum1} is satisfied. 
The detectors record integer-valued counting data of incident photons, $\data \in \mathbb{N}^{m}$. The counting data are modelled by Poisson distributions, in which the expected counts are defined by the scaled forward model $\lambda \forward(\ub)$. 
This way, the Poisson likelihood introduced in Definition \ref{eq:poisson_like} is used to describe the measurement process.

\begin{figure}[!htbp]
\centering
\includegraphics[trim = 0em 0.5em 0em 2em, clip, width = 0.4\textwidth]{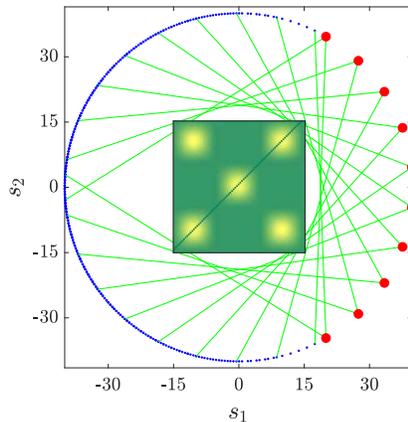}\vspace{-1em}
\caption{The PET setup. The ``true'' log-density function used for generating synthetic data is shown in the domain of interest $\Omega = [-15, 15]^2$. Red dots and blue dots indicate radiation sources and detectors, respectively. 
The black line along the diagonal indicates locations we plot the credible intervals and median estimates for model parameters. 
}\label{fig:pet_setup}
\end{figure}

We consider a PET setup shown in Figure \ref{fig:pet_setup}: the problem domain is $\Omega = [-15, 15]^2$, 10 radiation sources are positioned with equal spaces on one side of a circle, spanning a $120^{\circ}$ angle, and each source sends a fan of 40 gamma rays that are measured by detectors. 
The model setup is computed using the code of \cite{heikkinen2008statistical}. We generate synthetic data using the ``true'' log-density function 
\begin{gather*}
u_{\rm true}(s)= \max\{0, 0.5\, \pi \,\sin\big(0.1\,\pi\,(s_1 - 15))\,\sin\big(0.1\,\pi\,(s_2 - 15)) \big\}, 
\end{gather*}
which is shown in Figure \ref{fig:pet_setup}. 
In the inverse problem, we employ the hyper-prior distribution specified in Definition \ref{def:hyper}. For $p_0(\lambda)$ and $p_0(\delta)$, we set $\alpha_{\lambda}=\alpha_{\delta}=1$ and $\beta_{\lambda}=\beta_{\delta}=10^{-4}$. For $p_0(\gamma)$, we choose, $\alpha_\gamma = 0$, $\beta_\gamma = 4$, $\gamma_{\rm L} = 10^{-3}$ and $\gamma_{\rm R} = 10^2$.

\begin{figure}[!htbp]
\centering
\includegraphics[trim = 2em 1em 0em 0em , width = 0.4\textwidth]{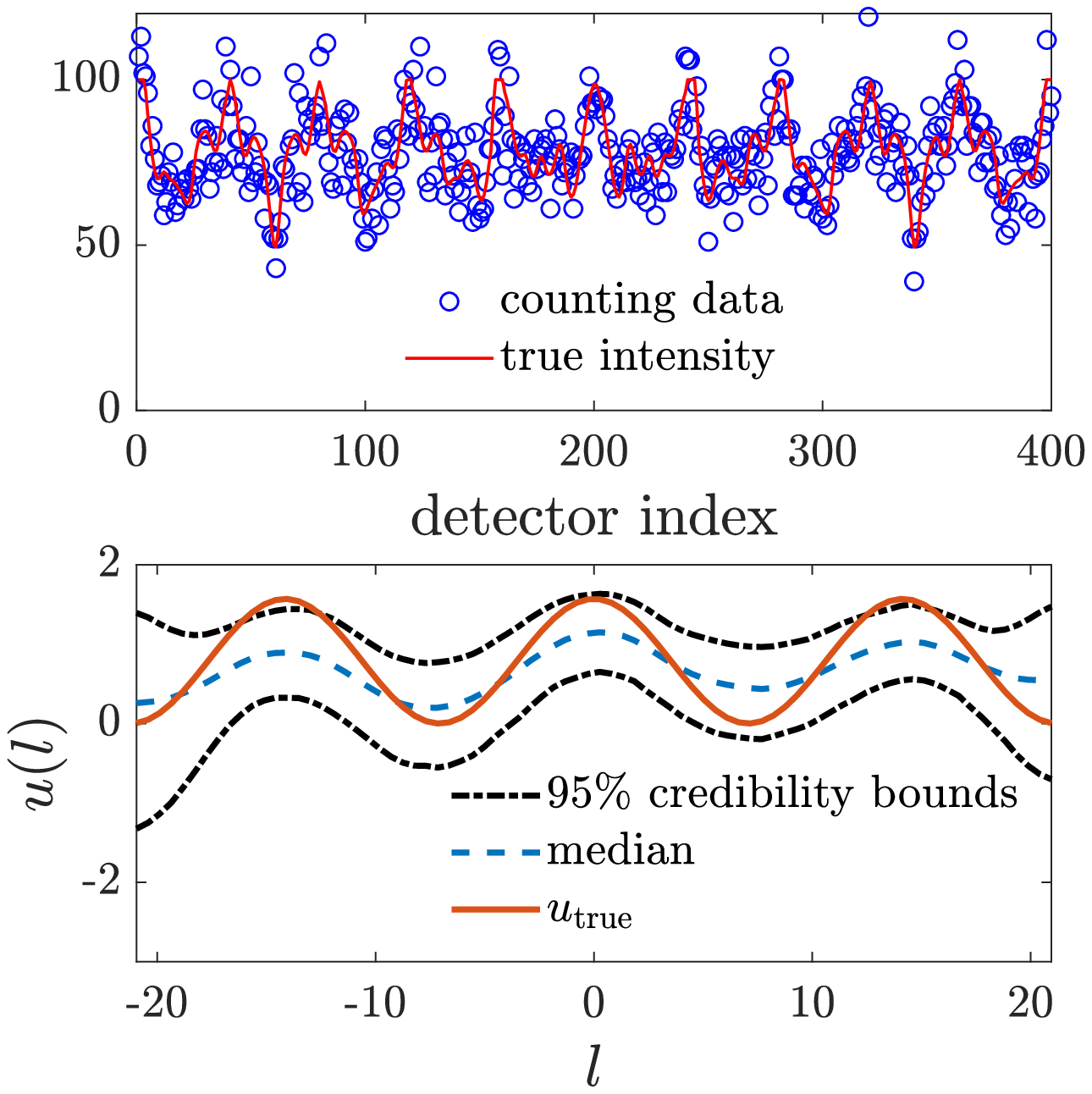}
\includegraphics[trim = 2em 2.3em 0em 1em , width = 0.4\textwidth]{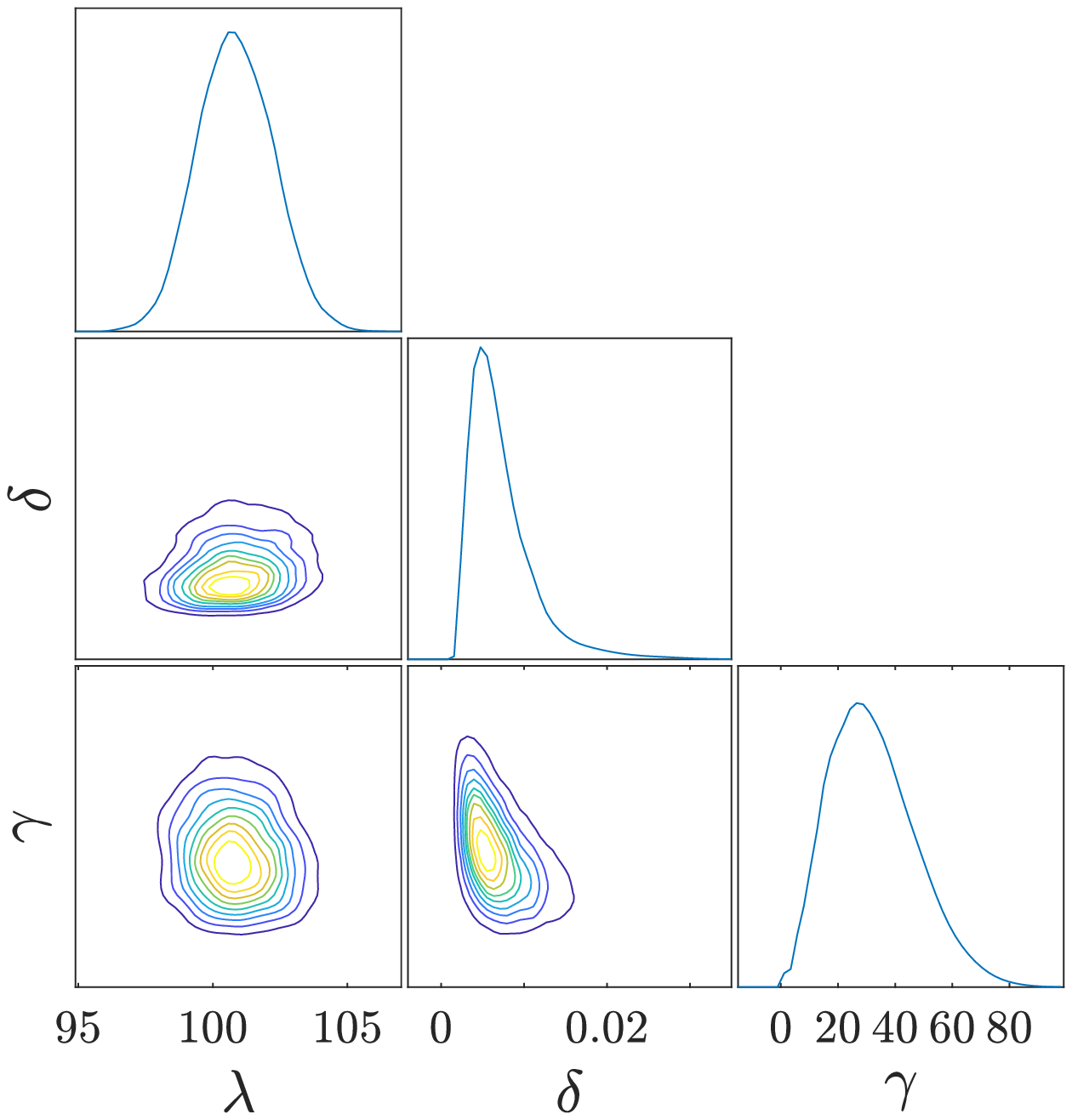}
\caption{
Top left: the measured counting data and the ``true'' intensity profile. The remaining plots were computed using RTO-PM with $n = 6400$. Bottom left: the median and 95\% credibility bounds of the log-density function estimated along the diagonal. Right: the estimated marginal distributions and the pairwise marginal distributions for the hyperparameters.}\label{fig:pet_result}
\centering
\includegraphics[trim = 2em 2em 0em 0em , width = 0.8\textwidth]{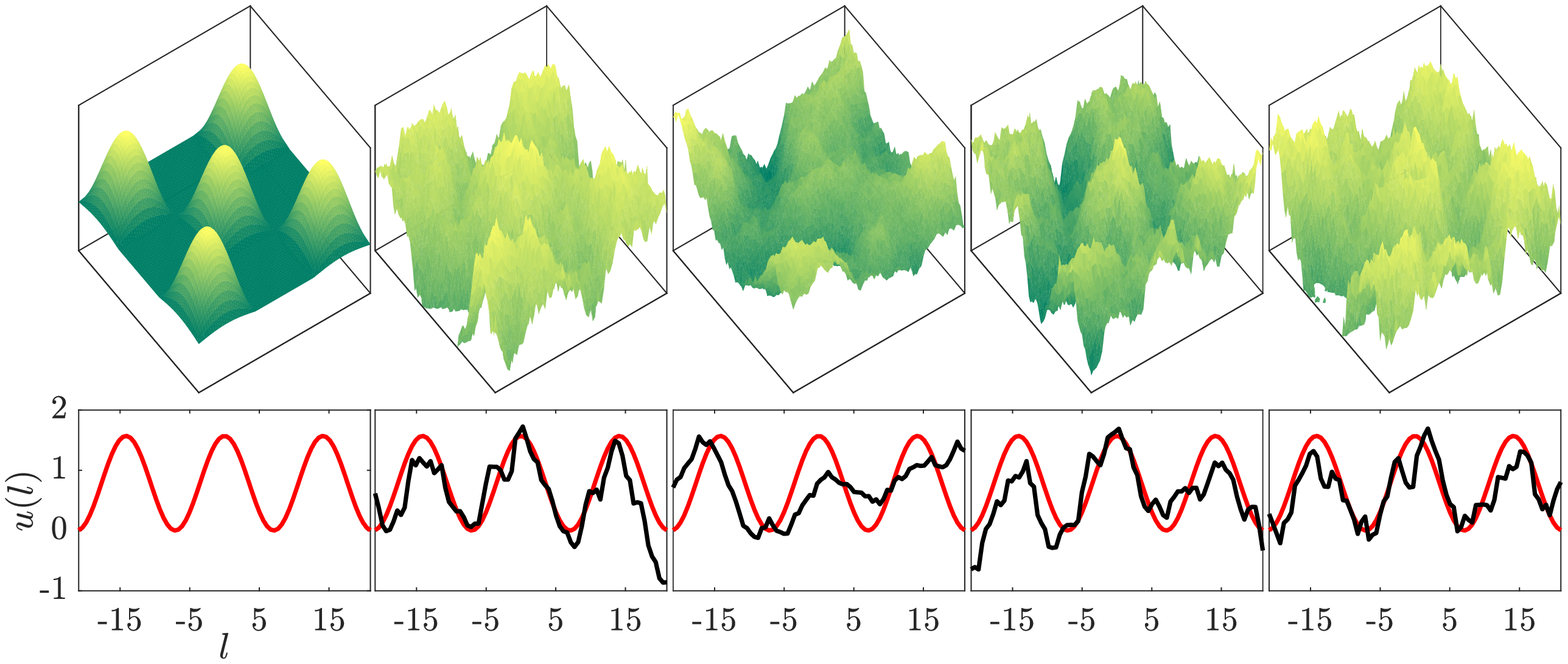}
\caption{
Top row: the true log-density function (left) and four realizations of posterior log-density functions (columns 2--5). Bottom row: corresponding log-density functions along the diagonal.}\label{fig:pet_samples}
\end{figure}

We setup three scenarios with $n = 400, 1600$, and $6400$ to test the dimension scalability of our proposed methods.
In this example, we observed that the RTO density is not as accurate as in the first example in approximating the conditional posterior density. 
This allows us to more extensively test the impact of sample sizes $N_{\rm sub}$ and $K$ on the statistical efficiency of RTO-within-Gibbs (Algorithm \ref{alg:rto_mwh}) and RTO-PM (Algorithm \ref{alg:rto_pm}), respectively. 
Before discussing the numerical experiments, we present the inversion result obtained using RTO-PM with $K = 10$ and $n=6400$ in Figures \ref{fig:pet_result} and \ref{fig:pet_samples}.
Figure \ref{fig:pet_result} plots synthetic data set and the ``true'' intensity function, the
sample median and 95\% credibility bounds of log-density function along the diagonal line, and the marginal densities and the pairwise marginal densities for these hyper-parameters.
Figure \ref{fig:pet_samples} shows the true log-density function and four realizations of posterior log-density functions. 

\subsection{Sampling performance}
We first discuss the sampling performance of RTO-within-Gibbs with $N_{\rm sub} = 1$ and $N_{\rm sub} = 5$. We compute chains of length 90,000 for the three scenarios and report the corresponding ACFs and IACTs in Figure \ref{fig:pet_gibbs}.
\begin{figure}[!htbp]
\centering
\includegraphics[trim = 0em 1.7em 0em 0.5em , width = 0.8\textwidth]{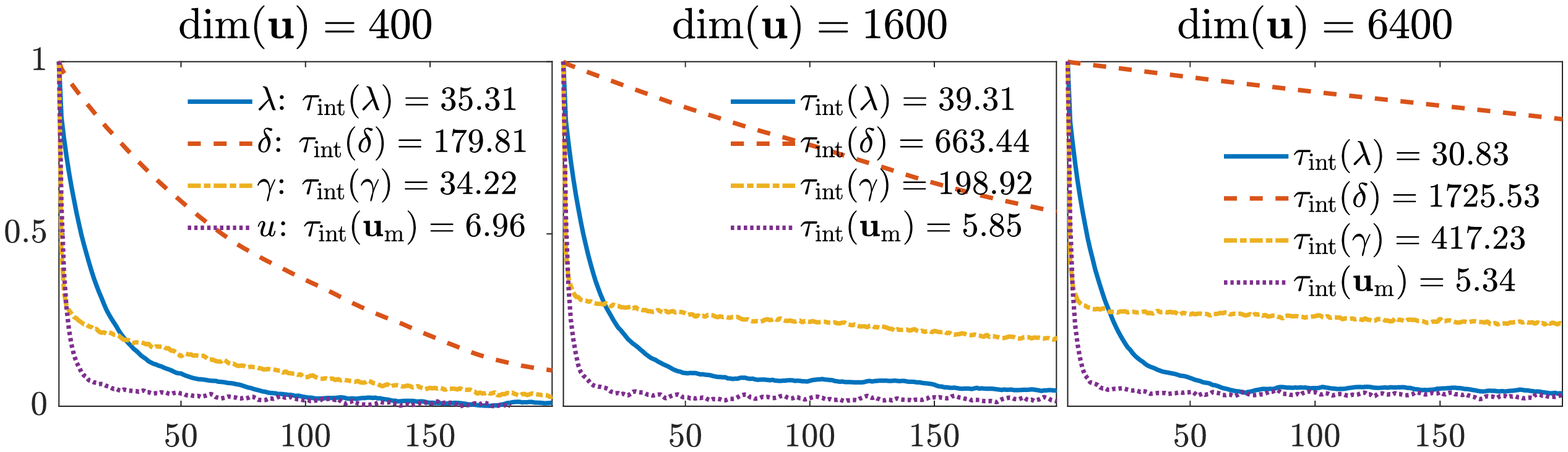}\\
\includegraphics[trim = 0em 0em 0em 3em,clip, width = 0.8\textwidth]{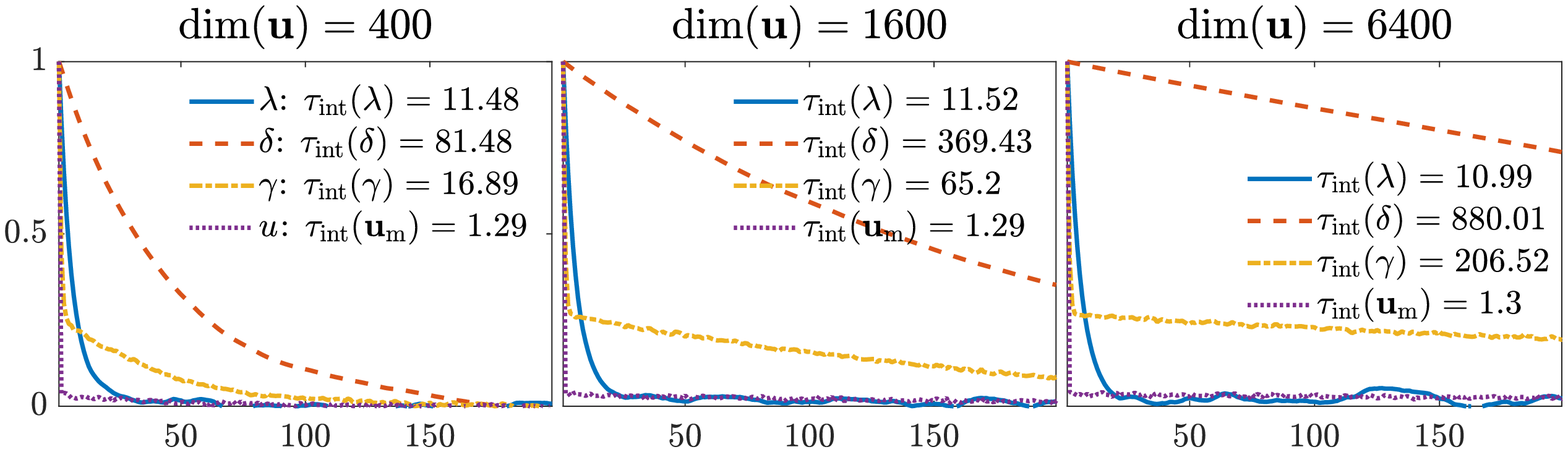}\vspace{-0.5em}
\caption{ACFs and IACTs of the hyperparameters and $\ub_{\rm m} = u(0, 0)$ computed using RTO-within-Gibbs with $N_{\rm sub} = 1$ (top row) and $N_{\rm sub} = 5$ (bottom row) for $n=400, 1600$ and $6400$.}\label{fig:pet_gibbs}
\end{figure}
In this example, RTO-within-Gibbs with $N_{\rm sub} = 1$ produces poorly mixing chains compared to the first example. 
Apart from the parameter dimensionality, a potential cause is that the chance of updating the $\ub$-chain, which is the same as the acceptance rate of RTO in Algorithm \ref{alg:rto_mwh}, is only about $49\%$ in this case. 
This can also increase the autocorrelation of other chains in the Gibbs update. 
By using $N_{\rm} = 5$, the chance of updating the $\ub$-chain is improved to about $100\%$. As a result, we observed that the IACTs of various chains produced with $N_{\rm sub} = 5$ are substantially lower than those of $N_{\rm sub} = 1$. 

However, RTO-within-Gibbs still suffers from the parameter dimensionality. Similar to the example in Section \ref{sec:elliptic}, the IACTs of the $\delta$-chains are approximately linear to the parameter dimension $n$, and the IACTs of the $\lambda$-chains and $\ub_{\rm m}$-chains do not change with the parameter dimensions. 
In contrast to the first example where the IACTs of the $\gamma$-chains do not change with $n$ (cf. Figure \ref{fig:ellptic_gibbs}), the mixing of the $\gamma$-chains in this example also depends on $n$.
Here we observed that the IACTs of the $\gamma$-chains increase with the parameter dimension $n$ and the ACFs of the $\gamma$-chains have a sharp drop initially and then exhibit a slow decay afterwards. 
The cause of this effect remains unclear.

In our second experiment, to analyze the impact of the sample size $K$ on RTO-PM, we estimated the standard deviation of the log-pseudo-marginal density, $\text{var}[\log p_{K}(\thetab|\data)]^{\frac12}$, for various sample sizes and various posterior samples of $\thetab$. 
\begin{figure}[!htbp]
\centering
\pic{.4}{pics/var_log_pm_pet}\vspace{-1em}
\caption{The box plot of $\text{var}[\log p_{K}(\thetab|\data)]^{\frac12}$ for various $K$ and posterior hyperparameters.}\label{fig:pet_var_pm}
\end{figure}
The box plot of estimated values are shown in Figure \ref{fig:pet_var_pm}.
We observed that both the spread and median of $\text{var}[\log p_{K}(\thetab|\data)]^{\frac12}$ decreases with increasing $K$.
With $K = 1$, the median of $\text{var}[\log p_{K}(\thetab|\data)]^{\frac12}$ is about $1.2$ and reduces to about $0.6$ with $K=5$. 
Using the rule-of-thumb discussed in Section \ref{remark:pm_comp}, we expect the IACTs of the hyperparameter chains produced with $K=1$ can be reduced by using a larger $K$ value in this example. 

We simulate RTO-PM with $K = 1, 5$, and $10$. The ACFs and IACTs are reported in Figure \ref{fig:pet_pm}. 
As expected, in all cases, the Markov chains generated by RTO-PM do not exhibit the dimension scalability issue, where it is shown that as $n$ increases the IACTs of the hyper-parameters stabilizes.
\begin{figure}[htbp]
\centering
\includegraphics[trim = 0em 1.7em 0em 1em , width = 0.8\textwidth]{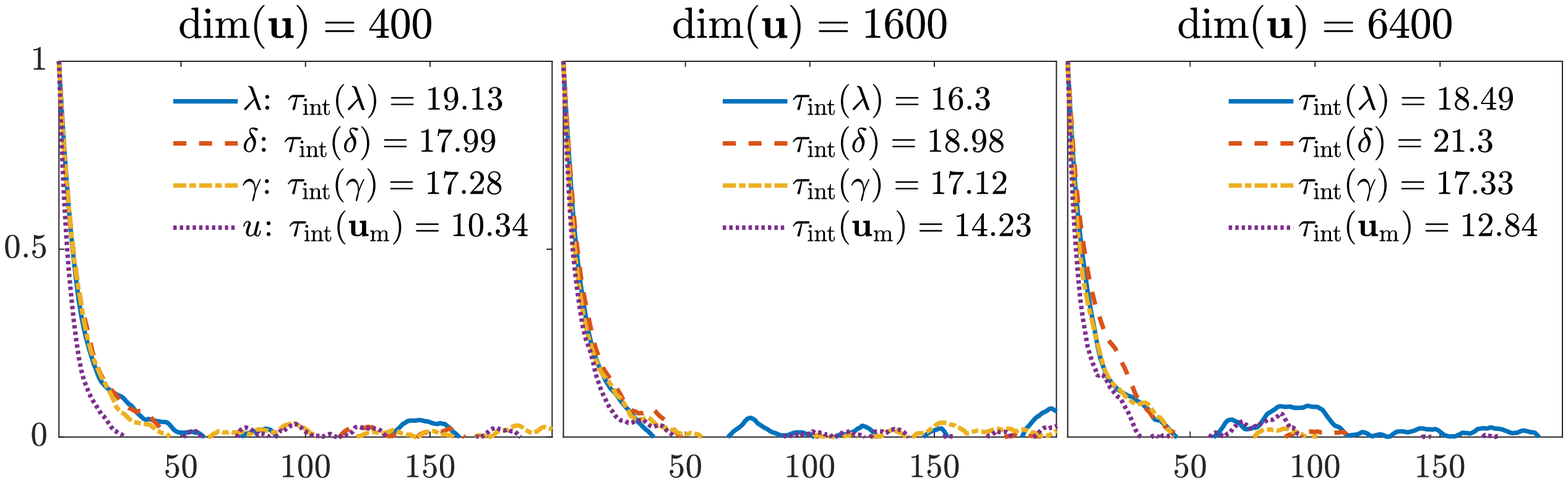}\\
\includegraphics[trim = 0em 1.7em 0em 3em,clip, width = 0.8\textwidth]{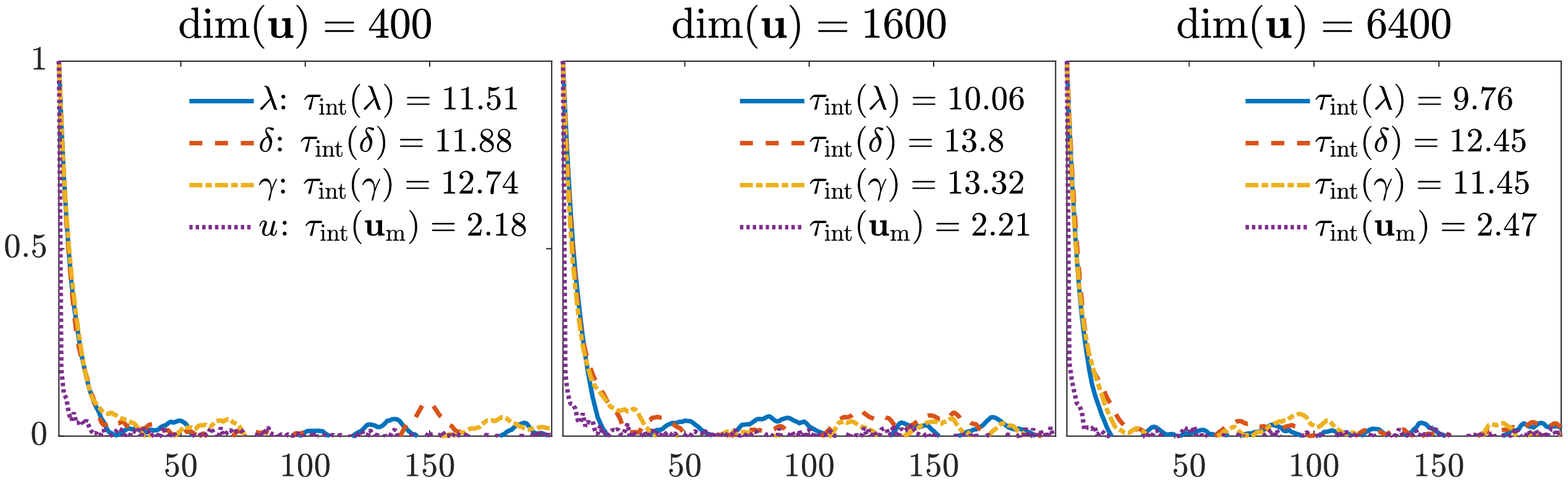}\\
\includegraphics[trim = 0em 0em 0em 3em,clip, width = 0.8\textwidth]{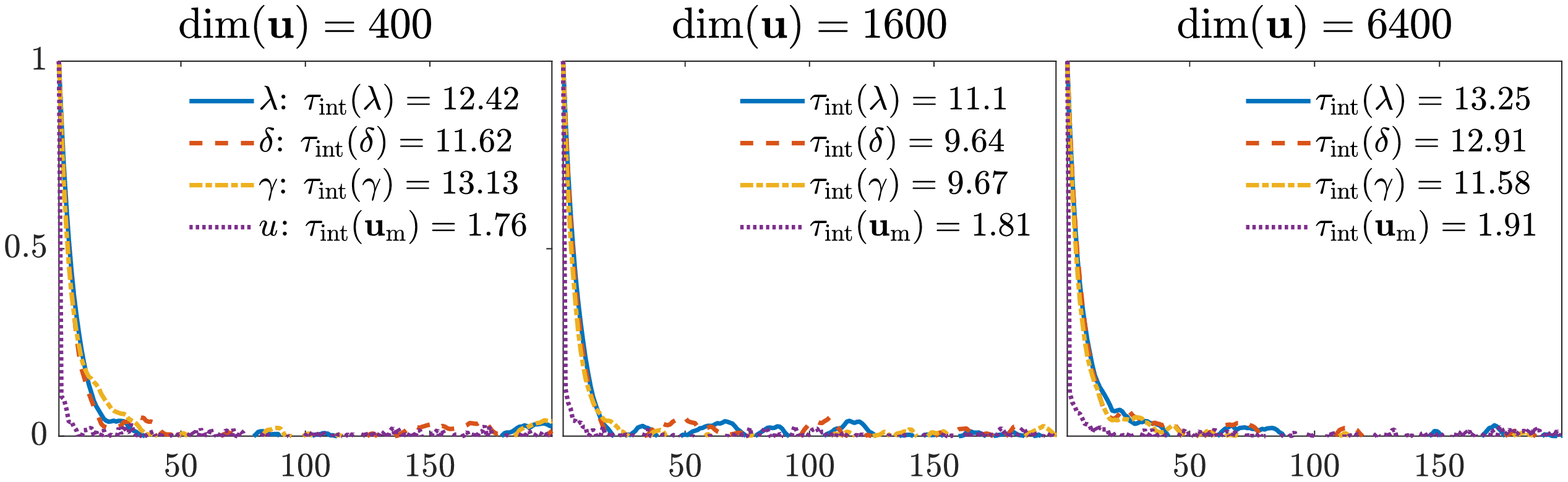}\vspace{-0.5em}
\caption{ACFs and IACTs of the hyperparameters and $\ub_{\rm m} = u(0, 0)$ computed using RTO-PM with $K=1$ (top row), $K=5$ (middle row), and $K=10$ (bottom row) for $n=400, 1600$ and $6400$.}\label{fig:pet_pm}
\end{figure}
It also confirms that, with the sample size $K=5$, both the hyperparameter chains and parameter chains have significantly smaller IACTs compared to those of $K=1$. 
Although the median of $\text{var}[\log p_{K}(\thetab|\data)]^{\frac12}$ can be further reduced to about $0.4$ with $K=10$, the IACTs of the resulting Markov chains reported in the last row of Figure \ref{fig:pet_pm} are comparable to those of $K=5$. 
This result agrees with the interpretation of the statistical efficiency of RTO-PM discussed in Section \ref{remark:pm_comp}---further accuracy improvement in the pseudo-marginal density computation will not necessarily  improve the statistical efficiency of RTO-PM.

\section{Discussion}
\label{sec:Conclusions}
In this work, we developed scalable randomize-then-optimize-based MCMC methods for solving hierarchical Bayesian inverse problems with high-dimensional model parameters, nonlinear forward models, and a broader class of hyperparameters.
In particular, we designed the RTO-within-Gibbs method, where RTO is used as a proposal within the Gibbs update, and the RTO-PM method, in which RTO is used for estimating the marginal posterior density over hyperparameters. 
In RTO-within-Gibbs, we presented an efficient Gibbs updating formula based on the inverse CDF method and offline computation.
We also extended our methods to nonlinear inverse problems with Poisson-distributed measurements.

We demonstrated the performance of our methods on numerical examples in PDE-constrained inverse problems and PET with different types of hyperparameters. 
In the PDE example, our methods are used to estimate the noise level in the likelihood, the variability of the prior, and the correlation length of the prior. In the PET example, our methods are used to estimate the unknown intensity of the radiation sources, the variability of the prior, and the correlation length of the prior.
The numerical results confirm our interpretation of the parameter-dimension scalability of these methods. RTO-within-Gibbs can be efficiently applied in some low-dimensional parameter cases, but its sampling efficiency deteriorates with the parameter dimension. In comparison, RTO-PM is robust to the parameter dimension and out-performs RTO-within-Gibbs in all test cases. 

There are many ways to extend the work described here. For example, we can apply RTO-PM to problems with high-dimensional hyperparameters or other types of priors, e.g., the non-stationary Gaussian process in \cite{roininen2019hyperpriors} and the hybrid prior in \cite{yao2016tv,zhou2020bayesian}.
We note that the efficiency of RTO-PM is sensitive to the variance of the log-marginal-density estimator, in which a number of computationally costly model solves is involved. This offers the opportunity of accelerating RTO-PM by employing surrogate models and the delayed acceptance method \cite{christen2005markov,liu1998sequential}. %
Furthermore, the existing work on the pseudo-marginal MCMC focuses on the efficiency of exploring the hyperparameters, whereas the primary quantity of interest in inverse problems is often the model parameter. Thus, it will be important to extend the analysis of \cite{andrieu2016establishing,doucet2015efficient} to characterize the efficiency of RTO-PM in exploring model parameters and the computational complexity of the associated expectation estimators.%

\section*{Acknowledgments}
We thank J. Heikkinen for providing us with the code used in setting up Example 2 and J. Rieger for fruitful discussions on the trust-region modification of RTO.
J.~Bardsley acknowledges support from the Gordon Preston Fellowship offered by Monash University.
T.~Cui acknowledges support from the Australian Research Council, under grant number CE140100049.
\appendix

\section{Derivations and proofs of RTO in Section \ref{sec:RTO}}
\label{sec:app_a}

\subsection{RTO in the Gaussian likelihood case}
\label{sec:app_a1}
We derive the RTO formula in Section \ref{sec:gauss_rto} that can be applied to problems with hyperparameters by establishing its equivalence with the result of \cite{wang2019scalable}.
To be aligned with the result of \cite{wang2019scalable}, we first apply the whitening transforms
\begin{gather}\label{eq:white}
\vb = \big(\delta\prprech\big)^{\frac12}\big( \ub - \prmean \big), \quad {\rm and} \quad \zb = \big(\lambda^{-1} \obscov\big)^{-\frac12}\data, 
\end{gather}
and define the transformed forward model
\begin{gather*}
\Gb(\vb) \coloneqq \big(\lambda^{-1} \obscov\big)^{-\frac12} \forward\big(\big(\delta\prprech\big)^{-\frac12}\vb + \prmean\big) - \zb.
\end{gather*}
This defines the conditional posterior
\begin{gather*}
p(\vb | \zb, \lambda, \delta, \gamma) \propto (2\pi)^{-\frac{m+n}{2}}\,\lambda^{\frac{m}{2}}\,\det\big(\obscov\big)^{-\frac12} \exp\Big( -\frac12\big\Vert\Gb(\vb) \big\Vert^2 - \frac{1}{2}\big\Vert\vb \big\Vert^2 \Big).
\end{gather*}

Given the reference parameter $\vb_\ast = (\delta\prprech\big)^{\frac12}( \ub_\ast - \prmean )$, we consider the reduced SVD of the linearized forward model $\nabla \Gb(\vb_\ast)$:
\begin{gather}\label{eq:wsvd}
\nabla \Gb(\vb_\ast) = \Phib_{\rm L} \Sb \Phib_{\rm R}^\top ,
\end{gather}
where $\Phib_{\rm L} \in \R^{m \times r}$, $\Sb \in \R^{r \times r}$, and $\Phib_{\rm R} \in \R^{n \times r}$.
Proposition 3 in \cite{wang2019scalable} defines the RTO mapping between an i.i.d. Gaussian random variable $\xib\sim \N(0, \Ib_n)$ and the target random variable $\vb$:
\begin{gather}\label{eq:wrto_map}
\Phib_{\rm R}\,\Big[(\Sb^2 + \Ib_r)^{-\frac12} \big( \Phib_{\rm R}^\top  \vb + \Sb \Phib_{\rm L}^\top \Gb(\vb) \big)\Big] + (\Ib_n - \Phib_{\rm R}\Phib_{\rm R}^\top)\,\vb =  \xib,
\end{gather}
and shows that the resulting probability density of $\vb$ takes the form
\begin{gather}\begin{aligned}\label{eq:wrto_density}
\widetilde{p}_{\rm RTO}(\vb) & = (2\pi)^{-\frac{n}{2}}\frac{\det\big( \Ib_r + \Sb \Phib_{\rm L}^\top \nabla \Gb(\vb) \Phib_{\rm R} \big)}{\det(\Sb^2 + \Ib_r)^{\frac12} }  \\
& \quad \;\; \exp\Big( - \frac12 \big\| (\Sb^2 + \Ib_r)^{-\frac12} \big( \Phib_{\rm R}^\top \vb + \Sb \Phib_{\rm L}^\top \Gb(\vb) \big) \big\|^2 - \frac12\big\| (\Ib_n - \Phib_{\rm R} \Phib_{\rm R}^\top) \vb \big\| ^2 \Big).
\end{aligned}\end{gather}

Applying the whitening transforms in \eqref{eq:white}, the SVD of $\nabla \Gb(\vb_\ast)$ also takes the form
\begin{gather*}
\big(\lambda^{-1} \obscov\big)^{-\frac12} \Jb(\ub_\ast) \big(\delta\prprech\big)^{-\frac12} = \Phib_{\rm L} \Sb \Phib_{\rm R}^\top,
\end{gather*}
where $\Phib_{\rm L} $ and $\Phib_{\rm R} $ are matrices with orthonormal columns. Multiplying both sides of the above equation by $(\lambda^{-1} \obscov)^{-\frac12}$ on the left and $(\delta\prprech)^{-\frac12}$ on the right leads to
\begin{gather*}
\frac{\lambda}{\delta} \obscov^{-1} \Jb(\ub_\ast) \prprech^{-1} = \big(\lambda^{-1} \obscov\big)^{-\frac12} \Phib_{\rm L} \Sb \Phib_{\rm R}^\top\big(\delta\prprech\big)^{-\frac12}.
\end{gather*}
With $\Yb = (\lambda^{-1} \obscov)^{-\frac12} \Phib_{\rm L}$ and $\Xb = (\delta\prprech)^{-\frac12} \Phib_{\rm R}$, we recover the generalized SVD in \eqref{eq:gsvd}.
Then, substituting the identities $\Phib_{\rm L} = (\lambda^{-1} \obscov )^{\frac12} \Yb$, $\Phib_{\rm R} = (\delta\prprech)^{\frac12} \Xb$, and $\vb = (\delta\prprech)^{\frac12}( \ub - \prmean )$ into \eqref{eq:wrto_map} and \eqref{eq:wrto_density}, we obtain the RTO map in \eqref{eq:oblique} and the RTO density in \eqref{eq:rto_density}, respectively. 

\subsection{Proof of Proposition \ref{prop:rto_is}}
\label{sec:app_a2}
\newcommand{\mym}{\big(\forward(\param)\!-\data\big)}
\newcommand{\mymnew}{\big(\forward(\param)\!-\data\!-\gb(\ub_r)\big)}
We first express the second moment of $w(\ub; \lambda, \delta, \gamma)$ as
\begin{gather*}\begin{aligned}
\mathbb{E}_{p_{\rm RTO}}\big[ w(\ub; \lambda, \delta, \gamma)^2\big] %
& \!=\! \int \bigg( \frac{f(\ub| \data,\lambda, \delta, \gamma)}{p_{\rm RTO}(\ub| \lambda, \delta, \gamma)}\bigg)^2   p_{\rm RTO}(\param|\lambda, \delta, \gamma) d\param \\
& \! = \!\mathbb{E}_{p_0}\!\bigg[ \frac{\det\big(\Sb^2 + \Ib_r\big)^{\frac12}}{\det\big( \Ib_r \!\!+\! \Sb \Yb^\top \nabla_{\ub} \Fb\big( \ub \big) \Xb \big)} \like(\data|\param,\lambda)^2 \exp\left(- \!\frac{1}{2}\! \left\| \ub_r \right\|^2 \!+\! \frac{1}{2}\! \left\| \Theta(\ub_r; \ub_\perp) \right\|^2 \right) \!\!\bigg].\!\!\!\!\! %
\end{aligned}
\end{gather*}
Given that the mapping $\ub_r \mapsto \Theta(\ub_r; \ub_\perp)$ is invertible for all $\ub_r \in \R^r$ and $\ub_\perp \in {\rm kernel}(\Xb)$, there exists a constant $C_1 > 0$ such that
\begin{gather*}
\sup_{\ub} \frac{\det\big(\Sb^2 + \Ib_r\big)^{\frac12}}{\det\big( \Ib_r + \Sb \Yb^\top \nabla_{\ub} \Fb\big( \ub \big) \Xb \big)} = C_1 < \infty.
\end{gather*}
Following the definition of the Gaussian likelihood function in \eqref{eq:like_gauss}, we have
\begin{gather}\label{eq:bound_second}
\!\!\mathbb{E}_{p_{\rm RTO}}\!\big[ w(\ub; \lambda, \delta, \gamma)^2\big] \!\leq\! C_2\, \mathbb{E}_{p_0}\!\Big[ \exp\Big( \!\!-\! \lambda\left\Vert\forward(\param)\!-\data \right\Vert^2_{\obscov^{-1}} \!-\! \frac{1}{2} \left\| \ub_r \right\|^2 \!+\! \frac{1}{2} \left\| \Theta(\ub_r; \ub_\perp) \right\|^2 \!\!\Big) \Big],\!\! %
\end{gather}
where $C_2 = C_1 (2\pi)^{-\frac{m}{2}}\,\lambda^{\frac{m}{2}}\,\det\big(\obscov\big)^{-\frac12}$. We need to show the expectation in the right hand side of \eqref{eq:bound_second} is bounded. 
Towards this goal, we rewrite the term within the exponential function as
\begin{gather}\begin{aligned}
Q(\ub, \ub_r) \coloneqq  & - \lambda\left\Vert\forward(\param)\!-\data \right\Vert^2_{\obscov^{-1}} - \frac{1}{2} \left\| \ub_r \right\|^2 + \frac{1}{2} \left\| \Theta(\ub_r; \ub_\perp) \right\|^2 \\
=& - \frac{1}{2} \ub_r^\top \big(\Ib_r - (\Sb^2+\Ib_r)^{-1}\big) \ub_r + \mym^\top \Yb  \Sb (\Sb^2+\Ib_r)^{-\frac12} \ub_r  \\
&  - \frac{1}{2} \mym^\top \big( 2\,\lambda\obscov^{-1} - \Yb \Sb^2 (\Sb^2+\Ib_r)^{-1} \Yb^\top \big)  \mym \\
=& - \frac{1}{2}\ub_r^\top \Db_1 \ub_r - \frac{1}{2} \mym^\top \Db_2 \mym + \mym^\top \Yb  \Sb (\Sb^2+\Ib_r)^{-\frac12} \ub_r,
\end{aligned}
\end{gather}
where $\Db_1 = \Ib_r - (\Sb^2+\Ib_r)^{-1}  \in \R^{r \times r}$ and $\Db_2 = 2\,\lambda\obscov^{-1} - \Yb \Sb^2 (\Sb^2+\Ib_r)^{-1} \Yb^\top \in \R^{m \times m}$.
The matrix $\Db_1$ is positive semidefinite since 
\begin{equation}
\Db_1 = \Ib_r - (\Sb^2+\Ib_r)^{-1} = (\Sb^2+\Ib_r)(\Sb^2+\Ib_r)^{-1} - (\Sb^2+\Ib_r)^{-1}= \Sb^2 (\Sb^2+\Ib_r)^{-1}.
\end{equation} 
Extending the $\lambda^{-1}\obscov$-orthogonal basis $\Yb\in \R^{m \times r}$ into a complete $\lambda^{-1}\obscov$-orthogonal basis $\bar\Yb = [\Yb, \Yb_\perp]\in \R^{m \times m}$ and embedding the corresponding diagonal matrix $\Sb\in \R^{r \times r}$ into a diagonal matrix $\bar\Sb\in \R^{m \times m}$ such that $\bar\Sb_{ii} = \Sb_{ii}$ for $i = 1, \ldots, r$ and $\bar\Sb_{ii} = 0$ for $i = r+1, \ldots, m$, we have 
\begin{gather*}
\bar\Yb^\top (\lambda^{-1}\obscov) \bar\Yb = \Ib_m,\;\; \bar\Yb^{-\top} =  (\lambda^{-1}\obscov) \bar\Yb, \;\; {\rm and }\;\;\bar\Yb^{-1} = \bar\Yb^\top (\lambda^{-1}\obscov).
\end{gather*}
This way, the matrix $\Db_2$ can be expressed as 
\begin{gather}\begin{aligned}
\Db_2 %
& = \lambda\obscov^{-1} \big( 2\,\Ib_m - \lambda^{-1}\obscov \Yb \Sb^2 (\Sb^2+\Ib_r)^{-1} \Yb^\top\big) \\
& = \lambda\obscov^{-1} \big( 2\,\big(\bar\Yb^{-\top}\bar\Yb^{\top}\big)  - \big(\bar\Yb^{-\top}\bar\Yb^\top\big)\lambda^{-1}\obscov \bar\Yb \bar\Sb^2 (\bar\Sb^2+\Ib_m)^{-1} \bar\Yb^\top \big((\lambda^{-1}\obscov) \bar\Yb\bar\Yb^{\top}\big) \big)\\
& = \lambda\obscov^{-1} \bar\Yb^{-\top} \big( 2\,\Ib_m -  \bar\Sb^2 (\bar\Sb^2+\bar\Ib_m)^{-1} \big) \bar\Yb^{\top}\\
& = \bar\Yb \big( \Ib_m + (\bar\Sb^2+\Ib_m)^{-1}\big) \bar\Yb^\top,
\end{aligned}\end{gather}
which is positive definite. Thus, by defining a vector
\begin{gather}\begin{aligned}
\gb(\ub_r) &  = \Db_2^{-1} \Yb  \Sb (\Sb^2+\Ib_r)^{-\frac12} \ub_r \in \R^{m} = \lambda\obscov^{-1} \Yb \big( \Ib_r + (\Sb^2+\Ib_r)^{-1}\big)^{-1}\Sb (\Sb^2+\Ib_r)^{-\frac12} \ub_r,
\end{aligned}\end{gather}
we can express the function $Q(\ub, \ub_r)$ in the form of
\begin{gather*}
Q(\ub, \ub_r) = - \frac{1}{2}\ub_r^\top \Db_1 \ub_r\! - \frac{1}{2} \mymnew^\top \Db_2 \mymnew \!+ \frac12\gb(\ub_r)^\top\Db_2\gb(\ub_r).
\end{gather*}
The term $\gb(\ub_r)^\top\Db_2\gb(\ub_r)$ in the above equation takes the form
\begin{gather}
\begin{aligned}
\gb(\ub_r)^\top\Db_2\gb(\ub_r) & =  \gb(\ub_r)^\top \big(\Yb  \Sb (\Sb^2+\Ib_r)^{-\frac12} \ub_r\big) \\
& = \ub_r^\top \big( \Ib_r + (\Sb^2+\Ib_r)^{-1}\big)^{-1}  \Sb^2(\Sb^2+\,\Ib_r)^{-1}  \ub_r \\
& = \ub_r^\top \big( \Sb^2(\Sb^2+2\,\Ib_r)^{-1} \big) \ub_r.
\end{aligned}\end{gather}
Using the above identity, we have $Q(\ub, \ub_r) \leq 0$, because the function $Q(\ub, \ub_r)$ can be simplified as
\begin{gather}
Q(\ub, \ub_r) = - \frac{1}{2}\ub_r^\top \Db_3 \ub_r - \frac{1}{2} \mymnew^\top  \mymnew,
\end{gather}
where the matrix 
\begin{gather*}
\Db_3 = \Db_1 -  \Sb^2(\Sb^2+2\,\Ib_r)^{-1} = \Sb^2 \Big( (\Sb^2+\Ib_r)^{-1}\!\! - (\Sb^2+2\,\Ib_r)^{-1} \Big) = \Sb^2(\Sb^2+\Ib_r)^{-1}(\Sb^2+2\,\Ib_r)^{-1}
\end{gather*}
is positive semidefinite. %
Substituting $Q(\ub, \ub_r) \leq 0$ into the inequality \eqref{eq:bound_second}, we have
\begin{gather*}
\mathbb{E}_{p_{\rm RTO}}\big[ w(\ub; \lambda, \delta, \gamma)^2\big] \leq C_2 \Big[ \exp\Big( Q(\ub, \ub_r) \Big) \Big] \leq C_2.
\end{gather*}
Therefore, the result of Proposition \ref{prop:rto_is} follows. 

\subsection{Proof of Proposition \ref{prop:trust_region}}
\label{sec:app_a3}
Under Assumption \ref{assum:assum2}, we first show that the original mapping $\ub_r \mapsto \Theta(\ub_r; \ub_\perp)$ is locally diffeomorphic for $\ub_r \in \mathbb{S}(\varepsilon)$.
Recalling that $\nabla_{\ub_r}\!\!\Theta_{\rm R}(\ub_r; \ub_\perp) = \Sb \Yb^\top \big( \Jb(\Xb \ub_r + \ub_\perp + \prmean) - \Jb(\ub_\ast)\big) \Xb$, the determinant of the Jacobian of the mapping $ \Theta$ is
\begin{gather}\begin{aligned}
\det\big( \nabla_{\ub_r}\!\!\Theta(\ub_r; \ub_\perp) \big) & = \det\Big( \Ib_r + \Sb^2 + \nabla_{\ub_r}\!\!\Theta_{\rm R}(\ub_r; \ub_\perp)  \Big) \\
& = \det\big(\Ib_r + \Sb^2\big)\det\Big( \Ib_r + (\Ib_r + \Sb^2)^{-1} \nabla_{\ub_r}\!\!\Theta_{\rm R}(\ub_r; \ub_\perp)\Big) ,
\end{aligned}\end{gather}
as $\Ib_r + \Sb^2$ is positive definite. 
Following Assumption \ref{assum:assum2}, we have 
\begin{gather*}
\sup_{\ub_r \in \mathbb{S}(\varepsilon), \ub_\perp \in {\rm kernel}(\Xb)}  \varrho\Big((\Ib_r + \Sb^2)^{-1} \nabla_{\ub_r}\!\!\Theta_{\rm R}(\ub_r; \ub_\perp)  \Big) < 1, 
\end{gather*}
where $\varrho$ denotes the spectral radius of a matrix. 
This implies that $ \Ib_r + \nabla_{\ub_r}\!\!\Theta_{\rm R}(\ub_r; \ub_\perp) (\Ib_r + \Sb^2)^{-1} $ is invertible for all $\ub_r \in \mathbb{S}(\varepsilon), \ub_\perp \in {\rm kernel}(\Xb)$. 
Thus, $\det( \nabla_{\ub_r}\!\!\Theta(\ub_r; \ub_\perp)) \neq 0$ for all $\ub_r \in \mathbb{S}(\varepsilon)$.
Therefore, together with the continuous differentiability assumption of the forward model (see Assumption \ref{assum:assum1}), we have that $\ub_r \mapsto \Theta(\ub_r; \ub_\perp)$ is locally diffeomorphic for $\ub_r \in \mathbb{S}(\varepsilon)$.

To show the modified mapping $\ub_r \mapsto \widetilde\Theta(\ub_r; \ub_\perp)$ is diffeomorphic, we first show that the matrix
\begin{gather}\label{eq:jac_tr}
\nabla_{\ub_r}\!\!\widetilde\Theta(\ub_r; \ub_\perp)
= \big(\Ib_r + \Sb^2\big) \Big(\Ib_r + \big(\Ib_r + \Sb^2\big)^{-1}\nabla_{\zb_r}\!\! \Theta_{\rm R}(\zb_r; \ub_\perp)\,\nabla_{\ub_r}\Psib(\ub_r; \tilde\varepsilon, \tau) \Big)
\end{gather}
is invertible for all $\ub_r\in \R^r$, where $\zb_r = \Psib(\ub_r; \tilde\varepsilon, \tau)$.
Introducing a vector $\wb_r = \ub_r - \mb_r$ and denoting $r = \|\wb_r\|$, the Jacobian matrix $\nabla_{\ub_r}\Psib(\ub_r; \tilde\varepsilon, \tau) $ can be written as
\begin{gather*}
\nabla_{\ub_r}\Psib(\ub_r; \tilde\varepsilon, \tau) = \Big( \psi^\prime(r) - \frac{\psi(r)}{r} \Big) \Qb + \frac{\psi(r)}{r} \Ib_r, \quad {\rm where} \quad \Qb =  \frac{\wb_r}{r} \frac{\wb_r^\top}{r}. 
\end{gather*}
Note that $\Qb$ is a rank-$1$ orthogonal projector. For the case $\ub_r \in \mathbb{S}(\tilde\varepsilon(1-\tau))$, we simply have $\nabla_{\ub_r}\Psib(\ub_r; \tilde\varepsilon, \tau) = \Ib_r$ since $\psi^\prime(r) = 1$ and $\psi(r)/r = 1$. For the case $\ub_r \in \R^r \setminus \mathbb{S}(\tilde\varepsilon(1+\tau))$, we have 
\begin{gather*}
\nabla_{\ub_r}\Psib(\ub_r; \tilde\varepsilon, \tau) =  \frac{\tilde\varepsilon}{r} (\Ib_r - \Qb), 
\end{gather*}
since $\psi^\prime(r) = 0$, where $\tilde\varepsilon/r \leq 1$. For the case $\ub_r \in \mathbb{S}(\tilde\varepsilon(1+\tau)) \setminus \mathbb{S}(\tilde\varepsilon(1-\tau))$, we have
\begin{gather*}
\nabla_{\ub_r}\Psib(\ub_r; \tilde\varepsilon, \tau) =  \frac{\psi(r)}{r} (\Ib_r - \Qb) + \psi^\prime(r) \Qb, 
\end{gather*}
where $\psi(r)/r \in (0,1]$ and $\psi^\prime(r) \in [0,1]$. Thus, we conclude that $\nabla_{\ub_r}\Psib(\ub_r; \tilde\varepsilon, \tau)$ is symmetric positive semidefinite and all its the eigenvalues are located in the interval $[0, 1]$. 
This leads to
\begin{gather*}
\sigma_{\rm max}\Big(\big(\Ib_r + \Sb^2\big)^{-1}\nabla_{\zb_r}\!\! \Theta_{\rm R}(\zb_r; \ub_\perp)\,\nabla_{\ub_r}\Psib(\ub_r; \tilde\varepsilon, \tau)\Big) \leq \sigma_{\rm max}\Big(\big(\Ib_r + \Sb^2\big)^{-1}\nabla_{\zb_r}\!\! \Theta_{\rm R}(\zb_r; \ub_\perp)\Big).
\end{gather*}
Since $\zb_r = \Psib(\ub_r; \tilde\varepsilon, \tau) \in \mathbb{S}(\tilde\varepsilon) \subseteq \mathbb{S}(\varepsilon)$, we have
\begin{gather}\begin{aligned}
& \sup_{\ub_r \in \R^r, \ub_\perp \in {\rm kernel}(\Xb)} \varrho\Big(\big(\Ib_r + \Sb^2\big)^{-1}\nabla_{\zb_r}\!\! \Theta_{\rm R}(\zb_r; \ub_\perp)\,\nabla_{\ub_r}\Psib(\ub_r; \tilde\varepsilon, \tau)\Big) \\
 \leq & \sup_{\ub_r \in \R^r, \ub_\perp \in {\rm kernel}(\Xb)} \sigma_{\rm max}\Big(\big(\Ib_r + \Sb^2\big)^{-1}\nabla_{\zb_r}\!\! \Theta_{\rm R}(\zb_r; \ub_\perp)\,\nabla_{\ub_r}\Psib(\ub_r; \tilde\varepsilon, \tau)\Big) \\
 \leq & \sup_{\zb_r \in \mathbb{S}(\varepsilon), \ub_\perp \in {\rm kernel}(\Xb)} \sigma_{\rm max}\Big(\big(\Ib_r + \Sb^2\big)^{-1}\nabla_{\zb_r}\!\! \Theta_{\rm R}(\zb_r; \ub_\perp)\Big) < 1.
\end{aligned}\end{gather}
Thus, the matrix in \eqref{eq:jac_tr} is invertible for $\forall \ub_r \in \R^r$ and $\forall \ub_\perp \in {\rm kernel}(\Xb)$. 
By the continuous differentiability assumption of the forward model and Property ({\romannumeral 2}) of the function $\psi(r; \tilde\varepsilon, \tau)$, the mapping $\ub_r \mapsto \widetilde\Theta(\ub_r; \ub_\perp)$ is continuously differentiable for all $\ub_r \in \R^r$ and $\ub_\perp \in {\rm kernel}(\Xb)$.
Therefore, the modified mapping $\ub_r \mapsto \widetilde\Theta(\ub_r; \ub_\perp)$ is diffeomorphic.

\bibliographystyle{siamplain}
\bibliography{references}

\end{document}